\documentclass{cmslatex}
\usepackage[paperwidth=7in, paperheight=10in, margin=.875in]{geometry}
 \usepackage[backref,colorlinks,linkcolor=red,anchorcolor=green,citecolor=blue]{hyperref}
\usepackage{amsfonts,amssymb}
\usepackage{amsmath,bm}
\usepackage{graphicx}
\usepackage{cite}
\usepackage{enumerate}
\sloppy

\thinmuskip = 0.5\thinmuskip \medmuskip = 0.5\medmuskip
\thickmuskip = 0.5\thickmuskip \arraycolsep = 0.3\arraycolsep

\usepackage{mathrsfs}
   \allowdisplaybreaks

\usepackage[utf8x]{inputenc}
\usepackage{esint}
\usepackage[english,french,german,italian,russian]{babel}
\usepackage{appendix}
\def\Fint{\slashint\hspace{-0.5em}\slashint\hspace{-0.5em}\slashint}

\DeclareRobustCommand{\cyrins}[1]{%
  \begingroup\fontfamily{erewhon-TLF}%
  \foreignlanguage{russian}{#1}%
  \endgroup
}

\DeclareMathOperator{\Tr}{Tr}
\DeclareMathOperator{\Cl}{Cl}

\DeclareMathOperator{\Div}{div}

\DeclareMathOperator{\D}{d}
\DeclareMathOperator{\I}{Im}
\DeclareMathOperator{\R}{Re}


\def\eor{\hfill$ \square$}

\usepackage{enumitem}[2011/09/28]
\setenumerate{align=left, leftmargin=0pt,labelsep=.5em, labelindent=0\parindent,listparindent=\parindent,itemindent=*}

\def\Fint{-\!\!\!-\hspace{-1.65em}\iiint}

\date{\today}
\usepackage{colonequals}

\usepackage{tikz}
\tikzset{>=stealth}
\usepackage{pgfplots}
\usepackage{color}
\usepackage[normalem]{ulem}

\begin{document}\selectlanguage{english}

\title{Quantitative spectral analysis of electromagnetic scattering.\ I: $ L^2$ and Hilbert--Schmidt norm bounds}
\author{Yajun Zhou\thanks{Program in Applied and Computational Mathematics, Princeton University, Princeton, NJ 08544, USA; Academy of Advanced Interdisciplinary Studies (AAIS), Peking University, Beijing 100871, P. R. China, (yajunz@math.princeton.edu, yajun.zhou.1982@pku.edu.cn). } }

\maketitle
\begin{abstract}
We perform quantitative spectral analysis  on the Born equation, an integral equation
for electromagnetic scattering that descends from the Maxwell equations. We establish norm bounds  for the Green operator associated with the Born equation, thereby providing numerical tools for error estimates of the Born approximation to light scattering problems.  \end{abstract}\begin{keywords}electromagnetic scattering, Green operator, norm bounds, Hilbert--Schmidt operator, Born approximation\end{keywords}

\begin{AMS}     35Q61, 45E99,   47B10,  47B38,           78A45\end{AMS}

\allowdisplaybreaks
\section{Introduction}There are  wide applications of electromagnetic scattering in astronomy~\cite{LummeBowell,Giese,LeinertGruen}, meteorology~\cite{HulstMeteo,vandeHulst,Aydin}, optical microscopy~\cite{Bruno2003,RICM}, disease diagnosis~\cite{ParkRBC} and thermal therapy~\cite{El-Sayed}. Most of these applications draw on  the quantitative understanding of the scattering patterns arising from the interaction between  (visible or invisible) light   and dielectric media.

In this series of work, we perform quantitative spectral analysis on  electromagnetic scattering and investigate approximate analytic solutions to the light-matter interaction. Extending the qualitative spectral analysis of electromagnetic scattering in our previous work \cite{QualEM} to quantitative studies, we revisit the perturbative solution (Born approximation) to the electromagnetic  scattering problem with an  estimation of  its error bound (in Paper I), and propose a non-perturbative alternative that improves the convergence rate of perturbative methods (in Paper II \cite{QuantEM2Evolv}).
\subsection{Born equation for electromagnetic scattering and its qualitative spectral analysis\label{subsec:qual_summary}}We first  recapitulate some results from \cite{QualEM}, regarding the robust solution to electromagnetic scattering problems for arbitrarily shaped homogeneous and isotropic dielectric media.

The electromagnetic scattering problem is modeled by the \textit{Born equation} (see \cite[Kap.~VII]{BornOptik} or \cite[\S13.6.1]{BornWolf} for the namesake), an integro-differential equation in the form
of\begin{align}(1+\chi)\bm E(\bm r)=\bm E_{\mathrm{inc}} (\bm r)+\chi\nabla\times \nabla\times\iiint_V\frac{\bm E(\bm r') e^{-ik|\bm r-\bm r' |}}{4\pi|\bm r-\bm r' |}\D^3 \bm r',\quad\bm r\in V,\label{eq:Born_eqn}\end{align}where $\bm E(\bm r),\bm r\in V$ is the total electric field established inside the dielectric volume $V$   in response to the incident field $ \bm E_\mathrm{inc}(\bm r)$ [being the spatial part of  a time-harmonic electric field  $\bm E_{\mathrm{inc}} (\bm r,t)=\bm E_{\mathrm{inc}} (\bm r)e^{i\omega t} $]. The dielectric susceptibility is $\chi$, and the wavelength of the incident beam is $2\pi/k=2\pi c/\omega$, with $c$ being the speed of light. The Born equation is an exact consequence from the Maxwell equations.  (See  \cite[\S13.6.1]{BornWolf}, for a formal derivation and  physical interpretations of the Born equation. See \cite[\S2.6]{Potthast2001} or \cite[\S2]{Kirsch2007}, for mathematical proofs that the Born equation descends from the Maxwell equations and scattering boundary conditions.)

As in \cite{QualEM},  we will continue using the terminology ``arbitrarily shaped dielectric'' with three additional geometric qualifications (unless explicitly relaxed otherwise): the dielectric volume $V\subset \mathbb R^3$ is bounded and open, the dielectric boundary $\partial V$ is smooth, and the exterior volume $ \mathbb R^3\smallsetminus(V\cup\partial V)$ is connected.
Following the convention in \cite{QualEM}, we will also use the abbreviation of the Born equation \eqref{eq:Born_eqn} as $ \hat{\mathscr B}\bm E=(\hat I-\chi\hat{\mathscr G})\bm E=\bm E_\mathrm{inc} $. Here, $\hat{\mathscr B}$ is called the \textit{Born operator}, and $\hat{\mathscr G}$ the \textit{Green operator}.

The spectral analysis in \cite[\S3]{QualEM} leads us to the following solvability result when a transverse beam of incident light  shines on the dielectric medium:\begin{quote}\textit{If  the dielectric susceptibility $\chi$ satisfies $\I\chi\leq0 $ and $\chi\neq-2$,\footnote{The critical susceptibility $ \chi=-2$ represents  shape-independent optical resonance where the  energy enhancement ratio $ \iiint_V|\bm E(\bm r)|^2\D^3\bm r/ \iiint_V|\bm E_{\mathrm{inc}}(\bm r)|^2\D^3\bm r$ is unbounded as one exhausts all non-vanishing incident fields $\iiint_V|\bm E_{\mathrm{inc}}(\bm r)|^2\D^3\bm r\neq0 $. This contrasts with other geometry-dependent enhancement effects in the electrostatic approximation \cite{Geom2D,3DconvEPS} to electromagnetic scattering. Using different settings of function spaces [requiring square-integrability of both $ \bm E(\bm r)$ and $\nabla\times \bm E(\bm r) $], Hsiao--Kleinman \cite[Sect.~VI]{HsiaoKleinman}  and Costabel--Darrigrand--Sakly \cite[Corollary 3.3]{CostabelDarrigrandSakly2012}  arrived at similar conclusions about the presence of shape-independent resonance at  $ \chi=-2$ for 3-dimensional electromagnetic scattering.} then the Born equation $ \hat{\mathscr B}\bm E=(\hat I-\chi\hat{\mathscr G})\bm E=\bm E_\mathrm{inc} $ has a unique solution for every incident beam that satisfies both the energy condition  $ \iiint_V|\bm E_\mathrm{inc}(\bm r )|^{2}\D^3\bm r<+\infty$ and the transversality condition $ \nabla\cdot\bm E_\mathrm{inc}(\bm r)=0,\bm r\in V$. Moreover,  except when $ \chi$ belongs to  a countable set of singularities\footnote{Rahola \cite{Rahola} and Budko--Samokhin \cite{Budko} suggested potential existence of a continuum in the spectrum of the volume integral operator for electromagnetic scattering by a dielectric. The problems in  their conjectural arguments were noted by the present author \cite{ZhouThesis}, Costabel--Darrigrand--Sakly \cite{CostabelDarrigrandSakly2012} and Zouros--Budko \cite{ZourosBudko2012}.} (``optical resonance modes'') in the entire complex $\chi$-plane, the solution  $\bm E=(\hat I-\chi\hat{\mathscr G})^{-1}\bm E_\mathrm{inc} $  responds robustly [with $ (\hat I-\chi\hat{\mathscr G})^{-1}$ being a bounded linear operator] to the incident field $ \bm E_{\mathrm{inc}}$.
}\end{quote} In more technical terms, we have demonstrated in \cite[\S\S2--3]{QualEM} that, with the exception of two points $\lambda\in\{0,-1/2\} $, all the singularities of the operator $(\lambda\hat I-\hat{\mathscr G})^{-1} $ are eigenvalues of the Green operator $\hat{\mathscr G}$ lying in the open lower-half complex  $\lambda $-plane ($\I\lambda<0 $).
\subsection{Some tasks in quantitative spectral analysis of electromagnetic scattering}To deduce approximate analytic formulae that describe the scattering patterns in the perturbative regime $ |\chi|\ll1$, the Born approximation is a standard approach. Formally, the Born approximation is based on the operator series of Neumann type $ \hat{\mathscr B}^{-1}=(\hat I-\chi\hat{\mathscr G})^{-1}=\hat I+\chi\hat{\mathscr G}+\chi^2\hat{\mathscr G}^2+\cdots$. The truncation of the Neumann series for the Born equation (sometimes also referred to as the \textit{Born series}) \begin{align}\bm E=(\hat I-\chi\hat{\mathscr G})^{-1}\bm E_\mathrm{inc} =\bm E_\mathrm{inc}+\chi\hat{\mathscr G}\bm E_\mathrm{inc}+\chi^2\hat{\mathscr G}^2\bm E_\mathrm{inc}+\cdots\end{align} up to the term of order $O( \chi^m)$ leads to the $m$-th \textit{Born approximation} to the light scattering problem. Now, we use $ \Vert\cdot\Vert_{L^2(V;\mathbb C^3)}$ to represent the norm of an element in the Hilbert space $ L^2(V;\mathbb C^3)$ as well as the norm of a linear operator acting on the  Hilbert space $ L^2(V;\mathbb C^3) $. According to a standard argument in Banach algebra, the error estimate for the $m$-th {Born approximation} can be put as \begin{align}\left\Vert\bm E-\left[\hat I+\sum_{s=1}^m(\chi\hat{\mathscr G})^s \right]\bm E_\mathrm{inc}  \right\Vert_{L^2(V;\mathbb C^3)}\leq\frac{|\chi|^{m}\Vert\hat{\mathscr G}\Vert_{L^2(V;\mathbb C^3)}^{m}\Vert\bm E_\mathrm{inc}  \Vert_{L^2(V;\mathbb C^3)}}{1-|\chi|\Vert \hat{\mathscr G}\Vert_{L^2(V;\mathbb C^3)}},\label{eq:Born_approx_err_est}\end{align}
so long as  $ |\chi|<1/\Vert \hat{\mathscr G}\Vert_{L^2(V;\mathbb C^3)}$. Therefore, an estimate of the operator norm  $ \Vert \hat{\mathscr G}\Vert_{L^2(V;\mathbb C^3)}$ not only gives a quantitative meaning for  ``the perturbative regime $ |\chi|\ll1$'', but also provides numerical guidance for the error bounds in the Born approximation.

In Paper I, our  task in quantitative spectral analysis will revolve around the estimate of the operator norm   $ \Vert \hat{\mathscr G}\Vert_{L^2(V;\mathbb C^3)}$  based on the geometric shape of the dielectric volume~$V$ and the incident wavelength $ 2\pi/k$. A technical summary of the quantitative  results in the first task will appear in \S\S\ref{subsec:L2norm_bd_summary}--\ref{subsec:HSnorm_bd_summary}.

Practically speaking, Born approximation is usually not carried to the second order or higher, due to analytic difficulties in evaluating the expressions  $ \hat{\mathscr G}^m\bm E_\mathrm{inc}$ for $ m\geq2$. However, the first Born approximation $ \bm E=(\hat I-\chi\hat{\mathscr G})^{-1}\bm E_\mathrm{inc} \approx\bm E_\mathrm{inc}+\chi\hat{\mathscr G}\bm E_\mathrm{inc}$ may not provide an analytic formula with sufficient accuracy for applications to scenarios with moderately large dielectric susceptibilities $\chi$.
In Paper II, we will devise a non-perturbative solution whose computational complexity is comparable to that of the first Born approximation, but retains acceptable numerical  accuracy in a much wider range of susceptibilities $\chi$.   \section{Statement of results}
\subsection{$ L^2$-norm bounds\label{subsec:L2norm_bd_summary}}In \cite[\S2.1]{QualEM}, we showed that the Green operator $ \hat{\mathscr G}\colon L^2(V;\mathbb C^3)\longrightarrow L^2(V;\mathbb C^3)$ given by \begin{align}(\hat{\mathscr G}\bm E)(\bm r)\colonequals \nabla\times \nabla\times\iiint_V\frac{\bm E(\bm r') e^{-ik|\bm r-\bm r' |}}{4\pi|\bm r-\bm r' |}\D^3 \bm r'-\bm E(\bm r),\quad\bm r\in V\end{align}is a well-defined continuous linear transformation for   a bounded and open volume $V\Subset\mathbb R^3 $, provided that  the partial derivatives in the operation $ \nabla\times\nabla\times $ are interpreted in the distributional sense. It has also been shown that the Green operator can be decomposed into the sum of a Hermitian operator $ \hat{\mathscr G }-\hat\gamma\colon L^2(V;\mathbb C^3)\longrightarrow L^2(V;\mathbb C^3)$ and  a Hilbert--Schmidt operator   $ \hat\gamma\colon L^2(V;\mathbb C^3)\longrightarrow L^2(V;\mathbb C^3)$. Here,  the Hermitian operator $ \hat{\mathscr G }-\hat\gamma\colon L^2(V;\mathbb C^3)\longrightarrow L^2(V;\mathbb C^3)$ is defined as \begin{align}((\hat{\mathscr G}-\hat\gamma)\bm E)(\bm r)\colonequals \nabla\left[\nabla\cdot\iiint_V\frac{{\bm E(\bm r')} }{4\pi|\bm r-\bm r' |}\D^3 \bm r'\right],\label{eq:stat_induction}\end{align}and its spectrum is a subset of the closed interval $ [-1,0]$. The Hilbert--Schmidt operator  $ \hat \gamma\colon L^2(V;\mathbb C^3)\longrightarrow L^2(V;\mathbb C^3)$ given  by\begin{align}(\hat\gamma\bm E)(\bm r)\colonequals \nabla\times \nabla\times\iiint_V\frac{{\bm E(\bm r')} (e^{-ik|\bm r-\bm r' |}-1)}{4\pi|\bm r-\bm r' |}\D^3 \bm r'\label{eq:gamma_def}\end{align}exhibits $ O(|\bm r-\bm r'|^{-1})$ asymptotic behavior in the integral kernel. The adjoint $ \hat\gamma^*$ of the non-Hermitian  operator $ \hat\gamma$ is constructed by replacing $ e^{-ik|\bm r-\bm r' |}$ with $ e^{ik|\bm r-\bm r' |}$ in \eqref{eq:gamma_def}.
For convenience, we set $\hat\gamma_C\equiv\R\hat\gamma\colonequals(\hat\gamma+\hat\gamma^*)/2$ and $ \hat\gamma_S\equiv-\I\hat\gamma\colonequals-(\hat\gamma-\hat\gamma^*)/(2i)$, so that we have $\hat \gamma =\hat\gamma_C-i\hat\gamma_S\colon L^2(V;\mathbb C^3)\longrightarrow L^2(V;\mathbb C^3)$ where
\begin{align}(\hat{\gamma}_C\bm E)(\bm r)={}&\nabla\times \nabla\times \iiint_V\frac{\bm E(\bm r' )[\cos(k|\bm r-\bm r'|)-1]}{4\pi|\bm r-\bm r' |}  \D^3 \bm r',\label{eq:gamma_C}\\ (\hat{\gamma}_S\bm E)(\bm r)={}&\nabla\times \nabla\times \iiint_V\frac{\bm E(\bm r' )\sin(k|\bm r-\bm r'|)}{4\pi|\bm r-\bm r' |}  \D^3 \bm r'\label{eq:gamma_S}\end{align}define two Hermitian   compact operators.

The norm bounds for the Green operator $ \hat{\mathscr G}$ is stated in the theorem below. \begin{theorem}[{$ L^2$-norm bounds}]
\label{thm:L2_op_norm}For a bounded open dielectric volume $V$ with smooth boundary $ \partial V$, we have $ \Vert\hat{\mathscr G}\Vert\leq\Vert\R\hat{\mathscr G}\Vert+\Vert\I\hat{\mathscr G}\Vert$ for
\begin{small}\begin{align}
&\max_{\lambda\in\sigma(\hat{\mathscr G})}|\R\lambda|\leq\Vert\R\hat {\mathscr G}\Vert\notag\\\leq{}&1+\min\left\{ \frac{3 }{5\pi}\left(k^3\iiint_V\D^3\bm r\right)^{2/3},1+\frac{4}{5} k R_{V}+\frac{1}{2\pi} \left(  \sqrt[3]{k R_{V}+\frac{1}{2}}+\frac{4}{9 \sqrt[3]{k R_{V}+\frac{1}{2}}} \right)\right\},\label{eq:Re_gamma_norm_bd}
\\&-\min_{\lambda\in\sigma(\hat{\mathscr G})}\I\lambda=\max_{\lambda\in\sigma(\hat{\mathscr G})}|\I\lambda|\leq\Vert\I\hat {\mathscr G}\Vert=\Vert\I\hat \gamma\Vert=\Vert\hat\gamma_S\Vert\notag\\\leq{}&\min\left\{\frac{k^3}{4\pi}\iiint_V\D^3\bm r,\frac{11}{45}( kR_V)^3, \frac{3 }{5\pi}\left(k^3\iiint_V\D^3\bm r\right)^{2/3}, \frac{9}{16}kR_V \right\},\label{eq:Im_gamma_norm_bd}
\end{align}\end{small}
and
\begin{small}\begin{align} \Vert \hat{ \gamma}\Vert_{L^2(V;\mathbb C^3)}\leq{}&\min\left\{ \frac{3 }{5\pi}\left(k^3\iiint_V\D^3\bm r\right)^{2/3},2+\frac{4}{5} k R_{V}+\frac{1}{2\pi} \left(  \sqrt[3]{k R_{V}+\frac{1}{2}}+\frac{4}{9 \sqrt[3]{k R_{V}+\frac{1}{2}}} \right)\right\}\notag\\&+\min\left\{\frac{k^3}{4\pi}\iiint_V\D^3\bm r,\frac{11}{45}( kR_V)^3, \frac{3 }{5\pi}\left(k^3\iiint_V\D^3\bm r\right)^{2/3}, \frac{9}{16}kR_V \right\},\label{eq:gamma_norm_bd_statement}
\end{align}\end{small}where  $ \sigma(\hat{\mathscr G})$ stands for the spectrum of the bounded linear operator $\hat{\mathscr G}\colon L^2(V;\mathbb C^3)\longrightarrow L^2(V;\mathbb C^3) $ and  $R_V\colonequals \min_{\bm r\in\mathbb R^3}\max_{\bm r'\in V\cup\partial V}|\bm r'-\bm r|$ is the minimal radius of all the circumscribed spheres. \end{theorem}

\subsection{Hilbert--Schmidt norm bounds\label{subsec:HSnorm_bd_summary}}
In \cite[\S\S2--3]{QualEM}, we performed spectral analysis for the Green operator hitting on a  Hilbert subspace $ \hat{\mathscr G}\colon \Phi(V;\mathbb C^3)\longrightarrow \Phi(V;\mathbb C^3) $. Here, the Hilbert subspace $\Phi(V;\mathbb C^3)=\Cl(C^\infty(V;\mathbb C^3)\cap\ker(\nabla\cdot)\cap L^2(V;\mathbb C^3))$  is the $ L^2$-closure for the totality of smooth, divergence-free and square-integrable $ \mathbb C^3$-valued vector fields.
Conventionally, the Hilbert space $\Phi(V;\mathbb C^3)$ is denoted by  $ H(\Div0 ,V)$ \cite[p.~215]{DautrayLionsVol3}.

The transversality condition $ \nabla\cdot\bm E=0$ is a physical requirement from one of the Maxwell equations. Thus, points in the spectrum $ \sigma^{\Phi}(\hat{\mathscr G})$ of the bounded linear operator  $ \hat{\mathscr G}\colon \Phi(V;\mathbb C^3)\longrightarrow \Phi(V;\mathbb C^3) $ correspond to physically meaningful \textit{optical resonance modes} in electromagnetic scattering.

In \cite[\S3.3]{QualEM}, we proved the convergence of a  spectral series:\footnote{The sum over $ \lambda\in\sigma^\Phi(\hat{\mathscr G})$ respects multiplicities of eigenvalues. See \eqref{eq:muHS} and Corollary \ref{cor:HS_Mie} below.}\begin{align}\label{eq:conv_spectral_series}\sqrt{\vphantom{\sum_{a}^{_a}}\smash{\sum_{\lambda\in\sigma^\Phi(\hat{\mathscr G})}^{\phantom{a}}|\lambda|^2|1+2\lambda|^4}}\leq\Vert\hat{\mathscr G }(\hat I+2\hat{\mathscr G})^2\Vert_2\colonequals \sqrt{\sum_{s=1}^\infty\Vert\hat{\mathscr G }(\hat I+2\hat{\mathscr G})^2\bm e_s\Vert_{L^2(V;\mathbb C^3)}^2}<+\infty,\end{align}where $ \{\bm e_s|s=1,2,\dots\}$ is any complete set of orthonormal basis for the Hilbert space $ \Phi(V;\mathbb C^3)$. Here in \eqref{eq:conv_spectral_series},  the lower bound for the Hilbert--Schmidt norm  $ \Vert\hat{\mathscr G }(\hat I+2\hat{\mathscr G})^2\Vert_2$ comes from the Schur--Weyl inequality \cite[p.~8]{Simon2005}, while the finite upper bound  builds on three facts:\begin{enumerate}[label={(\arabic*)}, ref=(\alph*), widest=a]\item We have   $\Vert\hat{\mathscr G}-\hat\gamma\Vert\leq1 $, $\Vert\hat I+\hat{\mathscr G}-\hat\gamma \Vert\leq1 $ and $ \Vert\hat\gamma\Vert\leq\Vert\hat\gamma\Vert_2<+\infty$;
\item
There is an inequality $ \Vert(\hat I+2\hat{\mathscr G})^2\hat{\mathscr G }\Vert_2\leq\Vert(\hat I+2\hat{\mathscr G}-2\hat\gamma)^2(\hat{\mathscr G}-\hat\gamma)\Vert_2+\Vert\hat\gamma\Vert_2(13+12\Vert\hat\gamma\Vert+4\Vert\hat\gamma\Vert^2)$, due to $ \Vert\hat A\hat B\Vert_2\leq\Vert\hat A\Vert_2\Vert\hat B\Vert$ and $\Vert\hat B\hat A\Vert_2\leq\Vert\hat A\Vert_2\Vert\hat B\Vert $ \cite[p.~218]{ReedSimonVolI1980};
\item The Hilbert--Schmidt norm $\Vert(\hat I+2\hat{\mathscr G}-2\hat\gamma)^2(\hat{\mathscr G}-\hat\gamma)\Vert_2$ is bounded by  a convergent multiple surface integral  when $ \partial V$ is smooth:\begin{align}&
\Vert(\hat I+2\hat{\mathscr G}-2\hat\gamma)^2(\hat{\mathscr G}-\hat\gamma)\Vert_2\leq\sqrt{\sum_{s=1}^\infty\Vert(\hat I+2\hat{\mathscr G}-2\hat\gamma)^2\bm f_s\Vert_{L^2(V;\mathbb C^3)}^2}\notag\\={}&\sqrt{\vphantom{\sum_{a}^{_a}}\smash[b]{\varoiint_{\partial V}\D S_1\varoiint_{\partial V}\D S_2\varoiint_{\partial V}\D S_3\varoiint_{\partial V}\D S_4\prod_{j=1}^4\left[ \frac{\bm n_{j }\cdot(\bm r_{j}-\bm r_{j+1\bmod 4}) }{2\pi|\bm r_{j}-\bm r_{j+1\bmod 4}|^{3}} -\frac{1}{\varoiint_{\partial V}\D S_{j}}\right]}},\label{eq:quad_surf_int}
\end{align}where    $ \{\bm f_s|s=1,2,\dots\}$ exhausts all the eigenvectors corresponding  to the non-zero eigenvalues of the polynomially compact Hermitian operator $\hat{\mathscr G}-\hat \gamma\colon \Phi(V;\mathbb C^3)\longrightarrow\Phi(V;\mathbb C^3) $.\end{enumerate}

 In \cite{QualEM}, we did not elaborate on the bound estimates for $ \Vert\hat\gamma\Vert_2$ in terms of the shape of $V$ and the size of the wavelength $ 2\pi/k$. This will be given in the next theorem.\begin{theorem}[Hilbert--Schmidt bounds]\label{thm:HS_norm}For arbitrarily shaped bounded open dielectric volume $V$, we have
\begin{align} \Vert\hat\gamma\Vert_2\leq\min\left\{\frac{1}{3}\left(k^3\iiint_V\D^3\bm r\right)^{2/3},\frac34(kR_V)^2\right\},\label{eq:gamma_HS_bd_statement}
\end{align}where    $R_V$ is the minimal radius of all the circumscribed spheres.\end{theorem}

\section{Analytic preparations}In this section, we shall   briefly overview the $ L^2$-formulation for light scattering problems in \S\ref{sec:L2_form}, and discuss the exact solutions to spherical scatterers in \S\ref{subsec:Mie_eigen}, to pave the way for the proofs of  Theorems~\ref{thm:L2_op_norm} and \ref{thm:HS_norm} in \S\ref{sec:norm_bds}.
\subsection{$L^2$-formulation of electromagnetic scattering\label{sec:L2_form}}In \cite[\S2]{QualEM}, we defined the Born operator   $\hat{\mathscr B}\colon L^2(V;\mathbb C^3)\longrightarrow L^2(V;\mathbb C^3)$ by interpreting the partial derivatives on the right-hand side of the following equation
\begin{align}(\hat{\mathscr B}\bm E)(\bm r)\colonequals (1+\chi)\bm E(\bm r)-\chi\nabla\times \nabla\times\iiint_V\frac{\bm E(\bm r') e^{-ik|\bm r-\bm r' |}}{4\pi|\bm r-\bm r' |}\D^3 \bm r',\quad\bm r\in V\end{align} in the distributional sense. This was achieved by first defining    $\hat{\mathscr B}\colon C^\infty_0(V;\mathbb C^3)\longrightarrow  C^\infty(V;\mathbb C^3)\cap L^2(V;\mathbb C^3)$ with classical derivatives [where $ C^\infty_0(V;\mathbb C^3)$ is the totality of smooth and compactly supported $ \mathbb C^3$-valued vector fields defined in $V$] and then proving that there is a continuous extension to
$\hat{\mathscr B}\colon L^2(V;\mathbb C^3)\longrightarrow L^2(V;\mathbb C^3)$ when the distributional derivatives are taken.

In the literature concerning numerical solutions of the light scattering problem,  the Born equation is usually formulated via the digitized Green function approach (also known as the discrete dipole approximation) as follows \cite{Holt,Rahola,Shifrin,Bladel,Mishchenko,Goedecke,Yurkina,Budko}:\begin{align}(\hat{\mathscr B'}\bm E)(\bm r)\colonequals {}&\bm E(\bm r)-\chi\left[ -\frac{\bm E(\bm r )}{3}+k^2 \iiint_V\frac{\bm E(\bm r' )e^{-ik|\bm r-\bm r' |}}{4\pi|\bm r-\bm r' |}  \D^3 \bm r'+\right.\notag\\{}&\left.+\lim_{\varepsilon\to0^+}\iiint_{V\smallsetminus O(\bm r,\varepsilon)} \nabla\nabla\frac{e^{-ik|\bm r-\bm r' |}}{4\pi|\bm r-\bm r' |} \cdot\bm E(\bm r' ) \D^3\bm r'\right]\label{eq:GreenBorn}.\end{align} where the Hessian matrix reads\begin{align}\label{eq:Hess_prop}&\nabla\nabla\frac{e^{-ik|\bm r-\bm r' |}}{4\pi|\bm r-\bm r' |}\notag\\=&\frac{k^2 e^{-ik|\bm r-\bm r' |}}{4\pi|\bm r-\bm r' |}\left[-\left( \frac{i}{k|\bm r-\bm r' |}+\frac{1}{k^2|\bm r-\bm r' |^2} \right)\mathbf1+\left(-1+ \frac{3i}{k|\bm r-\bm r' |}+\frac{3}{k^2|\bm r-\bm r' |^2} \right)\hat {\bm r}\hat{\bm r}^\mathrm{T}\right],\end{align}with $\hat {\bm r}\hat{\bm r}^\mathrm{T}=(\bm r-\bm r')(\bm r-\bm r')^{\mathrm T}/|\bm r-\bm r'|^2 $.

In the following, we shall verify  that the  two foregoing definitions of the Born operator agree with each other when $\bm E\in L^{2}(V;\mathbb C^3) $, that is,   $\hat{\mathscr B}=\hat{\mathscr B}'\colon  L^{2}(V;\mathbb C^3)\longrightarrow   L^2(V;\mathbb C^3)$.
 It is straightforward  to check that    $\hat{\mathscr B}=\hat{\mathscr B}'\colon C^\infty_0(V;\mathbb C^3)\longrightarrow  C^\infty(V;\mathbb C^3)\cap L^2(V;\mathbb C^3)$. To show that the two definitions (distributional derivative in $\hat{\mathscr B}$ and digitized Green function for $\hat{\mathscr B}'$) of the Born operator agree for all square-integrable inputs $\bm E\in L^2(V;\mathbb C^3)  $, it would suffice to give a direct proof  that $\hat{\mathscr B}'\colon L^2(V;\mathbb C^3)\longrightarrow L^2(V;\mathbb C^3) $ defines a continuous linear mapping, as demonstrated below.
\begin{proposition}
The linear mapping $\hat{\mathscr B}'=\hat I-\chi(\hat{\mathscr G}_0+\hat{\mathscr G}_1+\hat{\mathscr G}_2+\hat{\mathscr G}_3)\colon L^2(V;\mathbb C^3)\longrightarrow L^2(V;\mathbb C^3)$  is continuous, where \begin{align}(\hat{\mathscr G}_0\bm E)(\bm r)\colonequals &-\frac{\bm E(\bm r )}{3},\notag\\(\hat{\mathscr G}_1\bm E)(\bm r)\colonequals &\iiint_V\frac{k^2 e^{-ik|\bm r-\bm r' |}}{4\pi|\bm r-\bm r' |}\left(\mathbf1-\hat {\bm r}\hat{\bm r} ^\mathrm{T}\right)  \bm E(\bm r' )\D^3 \bm r'=\iiint_V\hat \Gamma_1(\bm r,\bm r') \bm E(\bm r' )\D^3 \bm r',\notag\\(\hat{\mathscr G}_2\bm E)(\bm r)\colonequals &-i\iiint_V\frac{k e^{-ik|\bm r-\bm r' |}}{4\pi|\bm r-\bm r' |^2}\left(\mathbf1-3\hat {\bm r}\hat{\bm r} ^\mathrm{T}\right)  \bm E(\bm r' )\D^3 \bm r'=\iiint_V\hat \Gamma_2(\bm r,\bm r') \bm E(\bm r' )\D^3 \bm r',\notag\\(\hat{\mathscr G}_3\bm E)(\bm r)\colonequals &-\lim_{\varepsilon\to0^+}\iiint_{V\smallsetminus O(\bm r,\varepsilon)}\frac{ e^{-ik|\bm r-\bm r' |}}{4\pi|\bm r-\bm r' |^3}\left(\mathbf1-3\hat {\bm r}\hat{\bm r} ^\mathrm{T}\right)  \bm E(\bm r' )\D^3 \bm r'\notag\\={}&\lim_{\varepsilon\to0^+}\iiint_{V\smallsetminus O(\bm r,\varepsilon)}\hat \Gamma_3(\bm r,\bm r') \bm E(\bm r' )\D^3 \bm r'.\end{align}
\end{proposition}\begin{proof}We claim that there exist four finite numbers $G_m<+\infty,m=0,1,2,3 $ such that $\Vert\hat{\mathscr G}_m\bm E\Vert_{L^2(V;\mathbb C^3)} \leq G_m\Vert\bm E\Vert_{L^2(V;\mathbb C^3)},m=0,1,2,3$ for all $\bm E\in L^2(V;\mathbb C^3) $. Once this can be shown, we can further arrive at $\Vert\hat{\mathscr B'}\bm E
\Vert_{L^2(V;\mathbb C^3)}\leq [1+|\chi|(G_0+G_1+G_2+G_3)]\Vert\bm E\Vert_{L^2(V;\mathbb C^3)},\forall \bm E\in {L^2(V;\mathbb C^3)}  $.

 Here, it is obvious that one can choose $G_0=1/3<+\infty $.

 To show that the constant $G_1 +G_2<+\infty $, we need the Young inequality \cite[p.~271]{SteinDiff} for convolutions $h(\bm r)=\iiint_{\mathbb R^3}f(\bm r' )g(\bm r-\bm r')  \D^3\bm r'$:\begin{align}\iiint_{\mathbb R^3}|h(\bm r ) |^2  \D^3\bm r \leq\left(\iiint_{\mathbb R^3}|g(\bm r' )|\D^3 \bm r'\right)^2\iiint_{\mathbb R^3}|f(\bm r ) |^2  \D^3\bm r.\end{align}
In the current context, we may pick any $\bm u\in \{\bm e_x,\bm e_y,\bm e_z \}$, and apply the Young inequality to \begin{align}f(\bm r)=\begin{cases}\bm u\cdot\bm E(\bm r), & \bm r\in V \\
0, & \bm r\notin V \\
\end{cases},\quad g(\bm r)=\begin{cases}e^{-ik|\bm r |}|\bm r|^{-m}, & |\bm r|<2\max_{\bm r',\bm r''\in V\cup\partial V}|\bm r'-\bm r''| \\
0, & |\bm r|\geq2\max_{\bm r',\bm r''\in V\cup\partial V}|\bm r'-\bm r''| \\
\end{cases} , \end{align}so as to confirm that all the matrix elements of $ \hat\Gamma_m(\bm r,\bm r'),m=1,2$ contribute to a continuous linear mapping.

To show that $G_3<+\infty $, we need the Calder\'on--Zygmund theory in harmonic analysis that deals with singular integral operators \cite{CZ1952}. Specifically, the convolution kernel $K(\bm r,\bm r') $ in question (\textit{i.e.}~any matrix element of $\hat \Gamma_3(\bm r,\bm r') $) qualifies as a singular integral operator of Calder\'on--Zygmund
type (see \cite[p.~293]{Stein}, also \cite[p. 39]{SteinDiff}) ---  for some finite positive constant $A<+\infty$, the convolution kernel $K(\bm r,\bm r' ),\bm r\neq \bm r'$ satisfies the following inequalities:\begin{align}\begin{split}|K(\bm r,\bm r' )|\leq &\frac{A}{|\bm r-\bm r' |^d},  \\
|K(\bm r_1,\bm r' )-K(\bm r_2,\bm r' ) |\leq& \frac{A|\bm r_1-\bm r_2 |}{|\bm r_1-\bm r' |^{d+1}}  \quad \text{ if } |\bm r_1-\bm r_2 |\leq \frac{|\bm r_1-\bm r' |}2,\\|K(\bm r,\bm r'_1 )-K(\bm r,\bm r' _2) |\leq& \frac{A|\bm r_1'-\bm r_2 '|}{|\bm r_1'-\bm r |^{d+1}}  \quad \text{ if } |\bm r_1'-\bm r_2 '|\leq \frac{|\bm r_1'-\bm r |}2,\end{split}\label{eq:CZ}\end{align}where $d=3 $ is the dimensionality of the space under consideration. Furthermore, the Calder\'on--Zygmund kernels $K(\bm r,\bm r' )$ that arise from any one of the matrix elements of  $\hat \Gamma_3(\bm r,\bm r') $ also satisfy the ``cancellation condition'' (see \cite{Stein}, p.~291 and pp.~305-306, also \cite{SteinDiff}, p. 39)\begin{align}\iiint_{\varepsilon<|\bm r-\bm r' |<N}K(\bm r,\bm r' )   \D^3\bm r'=0,\quad\text{ whenever }0<\varepsilon<N<+\infty.\end{align}
Here, for off-diagonal elements of   $\hat \Gamma_3(\bm r,\bm r') $, the cancellation results from reflection symmetry, which leads to
\begin{align}\label{eq:CancRefSym}\iiint_{\varepsilon^2<X^2+Y^2+Z^2<N^2}\frac{g(|\bm R| )}{|\bm R|^2}  XY \D X  \D Y  \D Z=0,\,\text{etc.}\end{align}
 where $g(|\bm R|), |\bm R|=|\bm r-\bm r'|=\sqrt{X^2+Y^2+Z^2}$ is a generic spherically symmetric function; for the diagonal elements of    $\hat \Gamma_3(\bm r,\bm r') $, the cancellation results from rotation symmetry, \textit{i.e.}~we have\begin{align}
\iiint_{\varepsilon^2<X^2+Y^2+Z^2<N^2}\frac{g(|\bm R| )}{|\bm R|^2} X^2\D X  \D Y  \D Z=\frac13\iiint_{\varepsilon^2<X^2+Y^2+Z^2<N^2}g(|\bm R| ) \D X  \D Y  \D Z,\,\text{etc.}\label{eq:cancRotSym}\end{align}It is a standard result in harmonic analysis that a Calder\'on--Zygmund kernel satisfying the cancellation condition induces a bounded linear operator from $ L^2(\mathbb R^3;\mathbb C^3)$ to $L^2(\mathbb R^3;\mathbb C^3)$, which  necessarily specializes to a continuous mapping  from $ L^2(V;\mathbb C^3)$ to $L^2(V;\mathbb C^3)$. This shows that there exists a finite constant $ G_3<+\infty$, such that $\Vert\hat{\mathscr G_3}\bm E\Vert_{L^2(V;\mathbb C^3)} \leq G_3\Vert\bm E\Vert_{L^2(V;\mathbb C^3)}$ for all $\bm E\in L^2(V;\mathbb C^3)$. \end{proof}

In the light of this, we can say that the Born operator    $\hat{\mathscr B}\colon L^2(V;\mathbb C^3)\longrightarrow L^2(V;\mathbb C^3)$ defined by distributional derivatives in   \cite[\S2]{QualEM}  is mathematically equivalent to the formulation of discrete dipole approximation (\textit{i.e.}~digitized Green function approach)  $\hat{\mathscr B}'\colon L^2(V;\mathbb C^3)\longrightarrow L^2(V;\mathbb C^3)$ in the physical  literature on light scattering. Thus, the efforts in the proposition above not only  demonstrate the relevance of our previous studies \cite{QualEM} to the digitized Green function approach (discrete dipole approximation), but also  provide a logical foundation for some norm estimates in \S\ref{sec:norm_bds} (in particular, the bounds for the $ L^2$-norm $ \Vert\hat\gamma\Vert$ in \S\ref{subsec:gamma_L2_bd} and the Hilbert--Schmidt norm  $ \Vert\hat\gamma\Vert_{2}$ in \S\ref{subsec:HS_bd}) that are carried out on the Hessian matrix, instead of distributional derivatives.

\subsection{Mie scattering and its spectral analysis\label{subsec:Mie_eigen}}For light scattering by a dielectric sphere $V=O(\mathbf0,R)\colonequals\{\bm r\in\mathbb R^3|0\leq|\bm r|<R\}$ with susceptibility $\chi$, the exact solution is presented by the Mie series (see  \cite{Mie} or \cite[p.~122]{vandeHulst}) for $|\bm r|<R $:
\begin{align}&\bm E(\bm r)=((\hat I-\chi\hat{\mathscr G})^{-1}\bm E_\mathrm{inc})(\bm r)\notag\\=&\sum_{\ell=1}^{\infty}\sum_{m=-\ell}^\ell\left\{\frac{\alpha_{\ell m}\nabla\times[\bm r j_\ell(\sqrt{1+\chi}k|\bm r|)Y_{\ell m}(\theta,\phi)]}{kR\left[j_\ell(\sqrt{1+\chi}kR)\left.\frac{\D[x h_\ell^{(2)}(x)]}{\D x}\right|_{x=kR}- h_\ell^{(2)}(kR)\left.\frac{\D[y j_\ell^{\phantom{(}}(y)]}{\D y}\right|_{y=\sqrt{1+\chi}kR}\right]}+\right.\notag\\&\left.+\frac{i\beta_{\ell m}\nabla\times\nabla\times[\bm r j_\ell(\sqrt{1+\chi}k|\bm r|)Y_{\ell m}(\theta,\phi)]}{kR\left[(1+\chi)j_\ell(\sqrt{1+\chi}kR)\left.\frac{\D[x h_\ell^{(2)}(x)]}{\D x}\right\vert_{x=kR}- h_\ell^{(2)}(kR)\left.\frac{\D[y j_\ell^{\phantom{(}}(y)]}{\D y}\right|_{y=\sqrt{1+\chi}kR}\right]}\right\},\label{eq:MieSeries}\end{align} where the incident beam \begin{align}&\bm E_\mathrm{inc}(\bm r)\notag\\={}&\sum_{\ell=1}^{\infty}\sum_{m=-\ell}^\ell\left\{i\alpha_{\ell m}\nabla\times[\bm r j_\ell(k|\bm r|)Y_{\ell m}(\theta,\phi)]-\beta_{\ell m}\nabla\times\nabla\times[\bm r j_\ell(k|\bm r|)Y_{\ell m}(\theta,\phi)]\right\}\label{eq:Einc_jY}\end{align}is expressed in terms of the  spherical Bessel functions\footnote{We note that $h^{(2)}_\ell(x)\neq0 $ for all real-valued $x$, according to a standard Wro\'nskian argument. }\begin{align}j_\ell(z)={}&(-z)^\ell\left( \frac{\D}{z\D z} \right)^\ell\left( \frac{\sin z}{z} \right)=z^{\ell}\sum_{m=0}^\infty\frac{(-z^{2})^m}{2^mm!(2\ell+2m+1)!!}
,\label{eq:j_ell}\\h^{(2)}_\ell(z)={}&(-z)^\ell\left( \frac{\D}{z\D z} \right)^\ell\left( \frac{ie^{-i z}}{z} \right)
,\label{eq:h2_ell}\end{align}and the standard spherical harmonics $ Y_{\ell m}(\theta,\phi)$ compatible with the spherical coordinate system $ \bm r=|\bm r|(\sin\theta\cos\phi,\sin\theta\sin\phi,\cos\theta)$. In what follows, for $ \ell\in\mathbb Z_{>0},m\in\mathbb Z\cap[-\ell,\ell],\chi\in\mathbb C$, we define\begin{align}\bm f^{\mathrm{TE}}_{\ell m,\chi}(\bm r)\colonequals{}&
\nabla\times[\bm r j_\ell(\sqrt{1+\chi}k|\bm r|)Y_{\ell m}(\theta,\phi)],\\\bm f^{\mathrm{TM}}_{\ell m,\chi}(\bm r)\colonequals{}&\nabla\times\nabla\times[\bm r j_\ell(\sqrt{1+\chi}k|\bm r|)Y_{\ell m}(\theta,\phi)]
\end{align}as the TE and TM wave modes with $ Y_{\ell m}$ symmetry, where TE (resp.\ TM) wave stands for \textit{t}ransverse \textit{e}lectric (resp.\ \textit{m}agnetic) waves satisfying $\bm r\cdot\bm E=0 $ [resp.\ $\bm r\cdot(\nabla\times\bm
E)=0 $]. Accordingly, we normalize the denominators in the Mie series as\footnote{Here, in  \eqref{eq:TE_res}--\eqref{eq:TM_res} below, the factor $ (1+\chi)^{\ell/2}$ is attributed to the fact that $  j_\ell(\sqrt{1+\chi}k|\bm r|)/(1+\chi)^{\ell/2}$  is holomorphic in $ \chi\in\mathbb C$. Consequently, the expressions $ \bm E(\bm r)=((\hat I-\chi\hat{\mathscr G})^{-1}\bm E_\mathrm{inc})(\bm r)$ in   \eqref{eq:MieSeries}, $ \Delta_\ell^{\mathrm {TE}}(\chi)$  in  \eqref{eq:TE_res} and $ \Delta_\ell^{\mathrm {TM}}(\chi)$  in  \eqref{eq:TM_res} are all meromorphic in $\chi$, without branch cut discontinuities.  As $ \chi\to-1$, we have non-vanishing limits $ \Delta_\ell^{\mathrm {TE}}(\chi)\to-\frac{ (kR)^{\ell+1}h_{\ell+1}^{(2)}(kR)}{(2\ell+1)!!}\neq0$ and $ \Delta_\ell^{\mathrm {TE}}(\chi)\to-\frac{(\ell+1) (kR)^{\ell}h_\ell^{(2)}(kR)}{(2\ell+1)!!}\neq0$, so $\chi=-1$ is not a singularity in the Mie scattering (cf.~\cite[Proposition 3.4]{QualEM}). }
\begin{align}\Delta_{\ell,kR}^{\mathrm {TE}}(\chi)\colonequals{}&\frac{j_\ell\left(\sqrt{1+\chi}kR\right)\left.\frac{\D[x h_\ell^{(2)}(x)]}{\D x}\right|_{x=kR}- h_\ell^{(2)}(kR)\left.\frac{\D[y j_\ell^{\phantom{(}}(y)]}{\D y}\right|_{y=\sqrt{1+\chi}kR}}{(1+\chi)^{\ell/2}},\label{eq:TE_res}\\
\Delta_{\ell,kR}^{\mathrm {TM}}(\chi)\colonequals{}&\frac{(1+\chi)j_\ell\left(\sqrt{1+\chi}kR\right)\left.\frac{\D[x h_\ell^{(2)}(x)]}{\D x}\right|_{x=kR}- h_\ell^{(2)}(kR)\left.\frac{\D[y j_\ell^{\phantom{(}}(y)]}{\D y}\right|_{y=\sqrt{1+\chi}kR}}{(1+\chi)^{\ell/2}}.\label{eq:TM_res}\end{align}

\begin{proposition}[Mie resonances]\label{prop:Mie_res}Fix the radius of the spherical scatterer as $R$, and wavelength as $2\pi/k$. If $ \Delta_{\ell,kR}^{\mathrm {TE}}(\chi^{\mathrm{TE}}_{\ell,kR})=0$ [resp.\ $ \Delta_{\ell,kR}^{\mathrm {TM}}(\chi^{\mathrm{TM}}_{\ell,kR})=0$],\footnote{For given $ \chi$ and $kR$, we have $\Delta_{\ell,kR}^{\mathrm {TM}}(\chi)=-\frac{i [\ell (\chi +2)+1]}{(2 \ell+1)kR}+\frac{i \chi\{\ell [-\chi +2\ell  (\chi +2)+4]+3\} kR}{2 (2\ell-1) (2\ell+1) (2\ell+3)}\left[1+O\left(\frac{k^{2}R^2}{\ell}\right)\right]$ as $\ell\to\infty$. Asymptotically, this contributes resonance modes $ \chi_{\ell,kR}^{\mathrm{TM}}\sim-2 - \frac{1}{\ell} \left(1+ \frac{k^2R^2}{\ell}\right)$ that aggregate around a special point $ \chi=-2$ (cf.~the results from \cite[\S3]{QualEM} recapitulated in \S\ref{subsec:qual_summary}).} then we have an eigenequation $\bm f^{\mathrm{TE}}_{\ell m,\chi^{\mathrm{TE}}_{\ell,kR}}=\chi^{\mathrm{TE}}_{\ell,kR}\hat{\mathscr G} \bm f^{\mathrm{TE}}_{\ell m,\chi^{\mathrm{TE}}_{\ell,kR}}$ (resp.\ $\bm f^{\mathrm{TM}}_{\ell m,\chi^{\mathrm{TM}}_{\ell,kR}}=\chi^{\mathrm{TM}}_{\ell,kR}\hat{\mathscr G} \bm f^{\mathrm{TM}}_{\ell m,\chi^{\mathrm{TM}}_{\ell,kR}}$) for every $ m\in\mathbb Z\cap[-\ell,\ell]$. These special values of $ \chi^{\mathrm{TE}}_{\ell,kR}$ and $ \chi^{\mathrm{TM}}_{\ell,kR}$ are referred to as Mie resonances.\end{proposition}\begin{proof}By inspecting \eqref{eq:MieSeries} and \eqref{eq:Einc_jY}, we find\begin{align}\begin{split}&
(\hat I-\chi\hat{\mathscr G})^{-1}\bm f^{\mathrm{TE}}_{\ell m,0}\\={}&\frac{\bm f^{\mathrm{TE}}_{\ell m,\chi}}{ikR\left[j_\ell(\sqrt{1+\chi}kR)\left.\frac{\D[x h_\ell^{(2)}(x)]}{\D x}\right|_{x=kR}- h_\ell^{(2)}(kR)\left.\frac{\D[y j_\ell^{\phantom{(}}(y)]}{\D y}\right|_{y=\sqrt{1+\chi}kR}\right]},\end{split}\label{eq:Mie_TElm}
\end{align}when the denominator on the right-hand side does not vanish. This can be recast into $(\hat I-\chi\hat{\mathscr G})\bm f^{\mathrm{TE}}_{\ell m,\chi}=ikR(1+\chi)^{\ell/2} \Delta_{\ell,kR}^{\mathrm {TE}}(\chi)\bm f^{\mathrm{TE}}_{\ell m,0}$, whose validity extends to all $\chi\in\mathbb C$, by continuity. When $ \chi=\chi^{\mathrm{TE}}_{\ell,kR}$, we have an eigenequation  $ \bm f^{\mathrm{TE}}_{\ell m,\chi}=\chi\hat{\mathscr G}\bm f^{\mathrm{TE}}_{\ell m,\chi}$ as claimed. One can similarly argue for the TM wave modes.  \end{proof}

 Mie's exact solutions offer more than a family of specific examples with explicitly known  spectral structures in the Green operator $ \hat{\mathscr G}$. They will also improve our norm bounds of $ \I\hat{\mathscr G}=-\hat \gamma_S$ (for generically shaped scatterers) in \S\ref{subsec:totalsc}, thanks to the next proposition.
   \begin{proposition}[Eigenmodes for $ \hat\gamma_S$]\label{prop:Mie_gamma_S} With the abbreviations $\bm f^{\mathrm{TE}}_{\ell m}\colonequals \bm f^{\mathrm{TE}}_{\ell m,0}$ and  $\bm f^{\mathrm{TM}}_{\ell m}\colonequals \bm f^{\mathrm{TM}}_{\ell m,0}$,  along with the notations  $ \lambda_{n}^{\mathrm{TE}}(x)\colonequals\int_0^{x}j_n^2(\xi)\xi^2\D \xi,n\in\mathbb Z_{\geq0}$ and $\lambda_{n}^{\mathrm{TM}}(x)=\frac{n+1}{2n+1}\lambda_{n-1}^{\mathrm{TE}}(x)+\frac{n}{2n+1}\lambda_{n+1}^{\mathrm{TE}}(x),n\in\mathbb Z_{>0} $,  we have the following eigenequations:\begin{align}
\hat{\gamma}_S\bm f^{\mathrm{TE}}_{\ell m}={}&\lambda_{\ell}^{\mathrm{TE}}(kR)\bm f^{\mathrm{TE}}_{\ell m},\label{eq:gammaS_TE}\\\hat{\gamma}_S\bm f^{\mathrm{TM}}_{\ell m}={}&\lambda_{\ell}^{\mathrm{TM}}(kR){\bm f^{\mathrm{TM}}_{\ell m}},\label{eq:gammaS_TM}
\end{align}where $ \ell\in\mathbb Z_{>0}$ and $m\in\mathbb Z\cap[-\ell,\ell] $. Moreover, these relations  enumerate all the eigenvectors associated with non-vanishing eigenvalues of the compact Hermitian operator
$\hat\gamma_S\colon L^2( O(\mathbf0,R);\mathbb C^3)\longrightarrow L^2( O(\mathbf0,R);\mathbb C^3)$. \end{proposition}\begin{proof}As we compute \begin{align}
\hat{\mathscr G}\bm f^{\mathrm{TE}}_{\ell m}=\lim_{\chi\to0}\frac{(\hat I-\chi\hat{\mathscr G})^{-1}\bm f^{\mathrm{TE}}_{\ell m}-\bm f^{\mathrm{TE}}_{\ell m}}{\chi}
\end{align}from \eqref{eq:Mie_TElm}, we need to  evaluate a derivative \begin{align}
&\left.\frac{\partial}{\partial\chi}\right|_{\chi=0}\frac{ j_\ell(\sqrt{1+\chi}k|\bm r|)}{ikR\left[j_\ell(\sqrt{1+\chi}kR)\left.\frac{\D[x h_\ell^{(2)}(x)]}{\D x}\right|_{x=kR}- h_\ell^{(2)}(kR)\left.\frac{\D[y j_\ell^{\phantom{(}}(y)]}{\D y}\right|_{y=\sqrt{1+\chi}kR}\right]}
\notag\\={}&\frac{\ell j_\ell(k|\bm r|)-k|\bm r| j_{\ell+1}(k|\bm r|) }{2}-\frac{ikR[\ell j_\ell(kR)-kR j_{\ell+1}(kR)] j_\ell(k|\bm r|)}{2}\left.\frac{\D[x h_\ell^{(2)}(x)]}{\D x}\right|_{x=kR}\notag\\{}&+\frac{ikR[\ell(\ell+1)-k^{2}R^2] j_\ell(kR) j_\ell(k|\bm r|)}{2} h_\ell^{(2)}(kR),\label{eq:BornGexpn}\end{align}where $ \left.\frac{\D[x h_\ell^{(2)}(x)]}{\D x}\right|_{x=kR}=(\ell+1) h_\ell^{(2)}(kR)-kR h_{\ell+1}^{(2)}(kR)$. Reading off the imaginary part, we obtain \begin{align}
\hat\gamma_S\bm f^{\mathrm{TE}}_{\ell m,0}={}&\left\{\frac{kR[\ell j_\ell(kR)-kR j_{\ell+1}(kR)][(\ell+1) j_\ell(kR)-kRj_{\ell+1}^{}(kR)]}{2}\right.\notag\\{}&\left.-\frac{kR[\ell(\ell+1)-k^{2}R^2][ j_\ell(kR)]^{2}}{2}\right\}\bm f^{\mathrm{TE}}_{\ell m,0}\notag\\={}&\frac{k^{2}R^2}{2}\{kR[ j_\ell(kR)]^{2}-(2\ell+1) j_\ell(kR) j_{\ell+1}(kR)+kR[ j_{\ell+1}(kR)]^{2}\}\bm f^{\mathrm{TE}}_{\ell m,0}.\label{eq:gamma_S_TE_lambda}
\end{align}One can convert the eigenvalue above into the integral in \eqref{eq:gammaS_TE}, by Lommel's method \cite[\S5.11(8)]{Watson}. Similarly, we can confirm that \begin{align}&
\hat\gamma_S\bm f^{\mathrm{TM}}_{\ell m}\notag\\={}&\frac{k^{2}R^2}{2}\left\{\left[kR+\frac{2(\ell+1)}{kR}\right][ j_\ell(kR)]^{2}-(2\ell+3) j_\ell(kR) j_{\ell+1}(kR)+kR[ j_{\ell+1}(kR)]^{2}\right\}\bm f^{\mathrm{TM}}_{\ell m},\label{eq:gamma_S_TM_lambda}
\end{align} which is equivalent to \eqref{eq:gammaS_TM}.

As for the   completeness statement, we can check that  the eigenrelation $ \hat\gamma_S\bm f=\lambda\bm f$ for   $\lambda\in\mathbb R\smallsetminus\{0\}$ and $\bm f\in L^2( O(\mathbf0,R);\mathbb C^3),\Vert\bm f\Vert_{L^2( O(\mathbf0,R);\mathbb C^3)}\neq0$
entails the transversality condition $ \nabla\cdot\bm f=0$ and the Helmholtz equation $ (\nabla^2+k^2)\bm f=0$. These two differential equations (both interpreted in the distributional sense) ensure that $\bm f$ has a Fourier--Bessel expansion in terms of a basis spanned by all the  $ \bm f^{\mathrm{TE}}_{\ell m}$ and  $ \bm f^{\mathrm{TM}}_{\ell m}$ wave modes.  \end{proof}

A refinement of the norm bounds for $ \R\hat{\mathscr G}$ will appear in \S\ref{subsec:kRgammaC}. This will draw on the next proposition concerning certain scalar products $\langle \bm F,\bm G\rangle_V \colonequals\iiint_V\bm F^*(\bm r)\cdot\bm G(\bm r)\D^3\bm r$.

\begin{proposition}[Mie projections]\label{prop:Mie_proj} For $V=O(\mathbf0,R)$, and $ \bm E\in L^2(V;\mathbb C^3)$, we have an orthogonal projection to the range of $ \hat\gamma_S$:\begin{align}
\hat P\bm E\colonequals\sum_{\ell=1}^\infty\sum_{m=-\ell}^\ell\left(\frac{\langle\bm f^{\mathrm{TE}}_{\ell m},\bm E\rangle _V\bm f^{\mathrm{TE}}_{\ell m}}{\langle\bm f^{\mathrm{TE}}_{\ell m},\bm f^{\mathrm{TE}}_{\ell m}\rangle _V}+\frac{\langle\bm f^{\mathrm{TM}}_{\ell m},\bm E\rangle _V\bm f^{\mathrm{TM}}_{\ell m}}{\langle\bm f^{\mathrm{TM}}_{\ell m},\bm f^{\mathrm{TM}}_{\ell m}\rangle _V}\right),\label{eq:PranS}
\end{align}where $\bm f^{\mathrm{TE}}_{\ell m}(\bm r)\colonequals
\nabla\times[\bm r j_\ell(k|\bm r|)Y_{\ell m}(\theta,\phi)]$ and $\bm f^{\mathrm{TM}}_{\ell m}(\bm r)\colonequals\nabla\times\bm f^{\mathrm{TE}}_{\ell m,0}(\bm r)$ satisfy\begin{align}
\langle\bm f^{\mathrm{TE}}_{\ell m},\bm f^{\mathrm{TE}}_{\ell 'm'}\rangle _V={}&\delta_{\ell\ell'}\delta_{mm'}\frac{\ell(\ell+1)}{k^{3}}\lambda_{\ell}^{\mathrm {TE}}(kR),\label{eq:flm_TE_fl'm'_TE}\\\langle\bm f^{\mathrm{TM}}_{\ell m},\bm f^{\mathrm{TM}}_{\ell 'm'}\rangle _V={}&\delta_{\ell\ell'}\delta_{mm'}\frac{\ell(\ell+1)}{k}\lambda_{\ell}^{\mathrm {TM}}(kR),\label{eq:flm_TM_fl'm'_TM}\\\langle\bm f^{\mathrm{TE}}_{\ell m},\bm f^{\mathrm{TM}}_{\ell 'm'}\rangle _V={}&0,\label{eq:flm_TE_fl'm'_TM}
\end{align} with $ \delta_{ab}=1$ for $ a=b$ and $ \delta_{ab}=0$ for $a\neq b$.

Furthermore, if we define  $\bm g^{\mathrm{TE}}_{\ell m}(\bm r)\colonequals
\nabla\times[|\bm r|\bm r j_{\ell+1}(k|\bm r|)Y_{\ell m}(\theta,\phi)]$ and $\bm g^{\mathrm{TM}}_{\ell m}(\bm r)\colonequals\nabla\times\bm g^{\mathrm{TE}}_{\ell m}(\bm r)$, then we have  \begin{align}
\langle (\hat I-\hat P)\bm g^{\mathrm{TE}}_{\ell m}, (\hat I-\hat P)\bm g^{\mathrm{TE}}_{\ell' m'}\rangle_V={}&\delta_{\ell\ell'}\delta_{mm'} \Vert  (\hat I-\hat P)\bm g^{\mathrm{TE}}_{\ell m}\Vert_{L^2(V;\mathbb C^3)}^2,\label{eq:glm_TE_gl'm'TE}\\\langle (\hat I-\hat P)\bm g^{\mathrm{TM}}_{\ell m}, (\hat I-\hat P)\bm g^{\mathrm{TM}}_{\ell' m'}\rangle_V={}&\delta_{\ell\ell'}\delta_{mm'} \Vert  (\hat I-\hat P)\bm g^{\mathrm{TM}}_{\ell m}\Vert_{L^2(V;\mathbb C^3)}^{2},\label{eq:glm_TM_gl'm'TM}\\\langle (\hat I-\hat P)\bm g^{\mathrm{TE}}_{\ell m}, (\hat I-\hat P)\bm g^{\mathrm{TM}}_{\ell' m'}\rangle_V={}&0,\label{eq:glm_TE_gl'm'TM}
\end{align}with  bounds\begin{align}
\frac{k^{5}\Vert(\hat I-\hat P)\bm g^{\mathrm{TE}}_{\ell m}\Vert^2_{L^2(V\mathbb C^3)}}{\ell(\ell+1)}\leq{}&(kR)^2\lambda_{\ell+1}^{\mathrm {TE}}(kR),\label{eq:gTE_bd}\\\frac{k^{3}\Vert(\hat I-\hat P)\bm g^{\mathrm{TM}}_{\ell m}\Vert^2_{L^2(V;\mathbb C^3)}}{\ell(\ell+1)}\leq{}&(kR)^2\lambda_{\ell+1}^{\mathrm {TM}}(kR).\label{eq:gTM_bd}
\end{align} \end{proposition}

\begin{proof}First, we observe that the relations $\bm F(\bm r)\colonequals \nabla\times[\bm r f(|\bm r|)Y_{\ell m}^*(\theta,\phi)]=-f(|\bm r|)
\bm r \times\nabla Y_{\ell m}^{*}(\theta,\phi)$,  $\bm G(\bm r)\colonequals\nabla\times[\bm r g(|\bm r|)Y_{\ell' m'}^{}(\theta,\phi)]=-g(|\bm r|)
\bm r \times\nabla Y_{\ell m}^{}(\theta,\phi) $  and\begin{align}
\iiint_V\bm F(\bm r)\cdot\bm G(\bm r)\D^3\bm r={}&\int_0^R f(r)g(r) r^2\D r\varoiint_{|\bm  n|=1}[\bm n\times \nabla Y_{\ell m}^*(\theta,\phi)]\cdot[\bm n\times \nabla Y_{\ell' m'}^{}(\theta,\phi)]\D\Omega\notag\\={}&-\int_0^R f(r)g(r) r^2\D r\varoiint_{|\bm  n|=1} Y_{\ell m}^*(\theta,\phi)\Delta_{S^2}  Y_{\ell' m'}^{}(\theta,\phi)\D\Omega\label{eq:F.G_int1}
\end{align}hold for all smooth functions $ f$ and $g$, where  the Laplace--Beltrami operator $ \Delta_{S^2}\colonequals\frac{1}{\sin\theta}\frac{\partial}{\partial\theta}\left( \sin\theta\frac{\partial}{\partial\theta} \right)+\frac{1}{\sin^2\theta}\frac{\partial^2}{\partial\phi^2}$ satisfies $ \Delta_{S^2} Y_{\ell m}^{}(\theta,\phi)=-\ell(\ell+1)Y_{\ell m}^{}(\theta,\phi)$. Thus, the last displayed formula is equal to $\ell'(\ell'+1) \int_0^R f(r)g(r) r^2\D r\varoiint_{|\bm  n|=1} Y_{\ell m}^*(\theta,\phi)  Y_{\ell' m'}^{}(\theta,\phi)\D\Omega=\delta_{\ell\ell'}\delta_{mm'} \ell'(\ell'+1) \int_0^R f(r)g(r) r^2\D r$. In particular, this implies \eqref{eq:flm_TE_fl'm'_TE}
and consequently $
\frac{\hat \gamma_S\bm f^{\mathrm{TE}}_{\ell m}}{\langle\bm f^{\mathrm{TE}}_{\ell m},\bm f^{\mathrm{TE}}_{\ell m}\rangle _V}=\frac{k^{3}\bm f^{\mathrm{TE}}_{\ell m}}{\ell(\ell+1)}$ [cf.\ \eqref{eq:gammaS_TE}].

Next, we point out that a vector identity $ (\nabla\times\bm F)\cdot(\nabla\times\bm G)=\bm F\cdot(\nabla\times\nabla\times\bm G)+\nabla\cdot[\bm F\times(\nabla\times\bm G)]$  enables us to identify $\iiint_V[\nabla\times\bm F(\bm r)]\cdot[\nabla\times\bm G(\bm r)]\D^3\bm r+\iiint_V\bm F(\bm r)\cdot\nabla^{2}\bm G(\bm r)\D^3\bm r$ with \begin{align}
{}&\varoiint_{\partial V}\bm n\cdot\{\bm F(\bm r)\times[\nabla\times\bm G(\bm r)]\}\D S\notag\\={}&-Rf(R)\varoiint_{\partial V}[\nabla\times\bm G(\bm r)]\cdot\{\bm n\times[\bm n\times\nabla Y_{\ell m}^{*}(\theta,\phi)]\}\D S\notag\\={}&Rf(R)\varoiint_{\partial V}[\nabla\times\bm G(\bm r)]\cdot\nabla Y_{\ell m}^{*}(\theta,\phi)\D S=Rf(R)\varoiint_{\partial V}\nabla\cdot[\bm G(\bm r)\times\nabla Y_{\ell m}^{*}(\theta,\phi)]\D S\label{eq:rotF.rotG_int1}
\end{align}with $ \bm n$ being the outward unit normal vector on $ \partial V$. Here, the last expression can be further simplified into \begin{align}&
Rf(R)\varoiint_{\partial V}\nabla\cdot\left\{ g(|\bm r|)\frac{
\bm r }{|\bm r|^2} [\bm r\times\nabla Y_{\ell m}^{*}(\theta,\phi)][\bm r\times\nabla Y_{\ell' m'}^{}(\theta,\phi)]\right\}\D S\notag\\={}&Rf(R)\varoiint_{\partial V}\nabla\cdot\left[g(|\bm r|)\frac{
\bm r }{|\bm r|^2}\right][\bm r\times\nabla Y_{\ell m}^{*}(\theta,\phi)][\bm r\times\nabla Y_{\ell' m'}^{}(\theta,\phi)]\D S\notag\\={}&Rf(R)\left[ \frac{g(R)}{R^{2}} +\frac{1}{R}\left.\frac{\D g(r)}{\D r}\right|_{r=R}\right]\varoiint_{\partial V}[\bm r\times\nabla Y_{\ell m}^{*}(\theta,\phi)][\bm r\times\nabla Y_{\ell' m'}^{}(\theta,\phi)]\D S,\label{eq:rotF.rotG_int2}
\end{align}which vanishes when $ |\ell-\ell'|+|m-m'|\neq0$, and equals    $ \ell(\ell+1)Rj_\ell(kR)[(\ell+1)j_\ell(kR)-kRj_{\ell+1}(kR)]$ when $ \ell=\ell', m=m',f(r)=g(r)=j_\ell(kr)$. Bearing in mind the Helmholtz equation $(\nabla^{2}+k^{2})\bm  f^{\mathrm{TE}}_{\ell m}= \mathbf0$, we have \eqref{eq:flm_TM_fl'm'_TM} and  $
\frac{\hat \gamma_S\bm f^{\mathrm{TM}}_{\ell m}}{\langle\bm f^{\mathrm{TM}}_{\ell m},\bm f^{\mathrm{TM}}_{\ell m}\rangle _V}=\frac{k\bm f^{\mathrm{TM}}_{\ell m}}{\ell(\ell+1)}$ [cf.\ \eqref{eq:gammaS_TM}, \eqref{eq:gamma_S_TE_lambda} and \eqref{eq:gamma_S_TM_lambda}].

To show \eqref{eq:flm_TE_fl'm'_TM} in a wider context, we consider  $\iiint_V\{\nabla\times[\bm r f(|\bm r|)Y_{\ell m}^*(\theta,\phi)]\}\cdot[\nabla\times\bm G(\bm r)]\D^3\bm r-\iiint_Vf(|\bm r|)Y_{\ell m}^*(\theta,\phi)\bm r \cdot[\nabla\times\nabla\times\bm G(\bm r)]\D^3\bm r$, which is equal to $ \varoiint_{\partial V}\bm n\cdot\{[\bm r f(|\bm r|)Y_{\ell m}^*(\theta,\phi)]\times[\nabla\times\bm G(\bm r)]\}\D S=0$, due to identically vanishing $ \bm n\cdot\{\bm r\times[\nabla\times\bm G(\bm r)]\}$ on the spherical boundary $ \partial V$. When $f(r)=j_\ell(kr)$ and $\bm G(\bm r)=\bm f^{\mathrm{TE}}_{\ell' m'}(\bm r)$, this leaves us an orthogonality relation $ \langle\bm f^{\mathrm{TE}}_{\ell m},\bm f^{\mathrm{TM}}_{\ell 'm'}\rangle _V=0$ as stated in  \eqref{eq:flm_TE_fl'm'_TM}, because $\bm r \cdot[\nabla\times\nabla\times\bm G(\bm r)]=-\bm r \cdot\nabla^{2}\bm G(\bm r)=k^{2}\bm r \cdot\bm f^{\mathrm{TE}}_{\ell' m'}(\bm r)$ is also identically vanishing.
Thus far, we can already transcribe   $\hat \gamma_S\bm F=\hat \gamma_S\hat P\bm F$ (a consequence of Proposition \ref{prop:Mie_gamma_S}) into \begin{align}
\hat\gamma_S\bm F=\sum_{\ell=1}^\infty\sum_{m=-\ell}^\ell\frac{k\left(k^{2}\langle\bm f^{\mathrm{TE}}_{\ell m},\bm F\rangle _V\bm f^{\mathrm{TE}}_{\ell m}+\langle\bm f^{\mathrm{TM}}_{\ell m},\bm F\rangle _V \bm f^{\mathrm{TM}}_{\ell m}\right)}{\ell(\ell+1)}\label{eq:gamma_S_flm}
\end{align}for each fixed $ \bm F\in L^2(V;\mathbb C^3)$. This spectral decomposition  will be useful later in \S\ref{subsec:kRgammaC}.

From the foregoing arguments surrounding \eqref{eq:F.G_int1} and our proof of  \eqref{eq:flm_TE_fl'm'_TM}, we already know that $ \langle \bm g^{\mathrm{TE}}_{\ell m},\bm f^{\mathrm{TE}}_{\ell 'm'}\rangle _V=\delta_{\ell\ell'}\delta_{mm'}\langle \bm g^{\mathrm{TE}}_{\ell m},\bm f^{\mathrm{TE}}_{\ell m}\rangle _V $ and  $ \langle \bm g^{\mathrm{TE}}_{\ell m},\bm f^{\mathrm{TM}}_{\ell 'm'}\rangle _V=0 $, so $ \hat P  \bm g^{\mathrm{TE}}_{\ell m}=\frac{\langle \bm g^{\mathrm{TE}}_{\ell m},\bm f^{\mathrm{TE}}_{\ell m}\rangle _V}{\langle\bm f^{\mathrm{TE}}_{\ell m},\bm f^{\mathrm{TE}}_{\ell m}\rangle _V}\bm f^{\mathrm{TE}}_{\ell m}$. Meanwhile, it is also clear from \eqref{eq:rotF.rotG_int1}--\eqref{eq:rotF.rotG_int2} that $ \langle \bm g^{\mathrm{TM}}_{\ell m},\bm f^{\mathrm{TM}}_{\ell 'm'}\rangle _V=\delta_{\ell\ell'}\delta_{mm'}\langle \bm g^{\mathrm{TM}}_{\ell m},\bm f^{\mathrm{TM}}_{\ell m}\rangle _V$. To demonstrate that   $ \langle\bm f^{\mathrm{TE}}_{\ell m}, \bm g^{\mathrm{TM}}_{\ell' m'}\rangle _V=0$, we reexamine our verification of  $ \langle\bm f^{\mathrm{TE}}_{\ell m},\bm f^{\mathrm{TM}}_{\ell 'm'}\rangle _V=0$, trading $ \bm G(\bm r)$ for $ \bm g^{\mathrm{TE}}_{\ell' m'}(\bm r)$, and noting that $ \nabla\times\nabla\times\bm g^{\mathrm{TE}}_{\ell' m'}(\bm r)=-\nabla^{2}\bm g^{\mathrm{TE}}_{\ell' m'}(\bm r)=\bm r\times
\nabla\{\nabla^{2}[|\bm r| j_{\ell+1}(k|\bm r|)Y_{\ell' m'}(\theta,\phi)]\}$ is pointwise orthogonal to $\bm r$. Therefore, it is also true that  $ \hat P  \bm g^{\mathrm{TM}}_{\ell m}=\frac{\langle \bm g^{\mathrm{TM}}_{\ell m},\bm f^{\mathrm{TM}}_{\ell m}\rangle _V}{\langle\bm f^{\mathrm{TM}}_{\ell m},\bm f^{\mathrm{TM}}_{\ell m}\rangle _V}\bm f^{\mathrm{TM}}_{\ell m}$.

The computations in the last paragraph can be adapted to a confirmation of the identities  $ \langle \bm g^{\mathrm{TE}}_{\ell m},\bm g^{\mathrm{TE}}_{\ell 'm'}\rangle _V=\delta_{\ell\ell'}\delta_{mm'}\langle \bm g^{\mathrm{TE}}_{\ell m},\bm g^{\mathrm{TE}}_{\ell m}\rangle _V $,  $ \langle \bm g^{\mathrm{TE}}_{\ell m},\bm g^{\mathrm{TM}}_{\ell 'm'}\rangle _V=0 $, and $\langle\bm g^{\mathrm{TE}}_{\ell m}, \bm g^{\mathrm{TM}}_{\ell' m'}\rangle _V=0 $. To show that   $ \langle \bm g^{\mathrm{TM}}_{\ell m},\bm g^{\mathrm{TM}}_{\ell 'm'}\rangle _V=\delta_{\ell\ell'}\delta_{mm'}\langle \bm g^{\mathrm{TM}}_{\ell m},\bm g^{\mathrm{TM}}_{\ell m}\rangle _V $, we need an additional observation that $\nabla^{2}\bm g^{\mathrm{TE}}_{\ell' m'}(\bm r) =\bm r\times
\nabla\{[k^{2}|\bm r| j_{\ell+1}(k|\bm r|)-2k j_{\ell}(k|\bm r|)]Y_{\ell' m'}(\theta,\phi)\}$    is orthogonal to $ \bm g^{\mathrm{TM}}_{\ell m}(\bm r)$ when $ |\ell-\ell'|+|m-m'|\neq0$.

Combining the results from the last two paragraphs, we arrive at \eqref{eq:glm_TE_gl'm'TE}--\eqref{eq:glm_TE_gl'm'TM}.

The inequality  \eqref{eq:gTE_bd} paraphrases the facts that $ \Vert(\hat I-\hat P)\bm g^{\mathrm{TE}}_{\ell m}\Vert^2_{L^2(V;\mathbb C^3)}\leq\Vert\bm g^{\mathrm{TE}}_{\ell m}\Vert^2_{L^2(V;\mathbb C^3)}$ and $ \int_0^{kR}j_{\ell+1}^{2}(\xi)\xi^4\D \xi\leq(kR)^2\int_0^{kR}j_{\ell+1}^{2}(\xi)\xi^2\D \xi=(kR)^2\lambda_{\ell+1}^{\mathrm {TE}}(kR)$. We may enlist the help from  \eqref{eq:rotF.rotG_int1}--\eqref{eq:rotF.rotG_int2} to compute  $\Vert\bm g^{\mathrm{TM}}_{\ell m}\Vert^2_{L^2(V;\mathbb C^3)} $, as follows:\begin{align}&
\Vert\bm g^{\mathrm{TM}}_{\ell m}\Vert^2_{L^2(V;\mathbb C^3)} -k^2\Vert\bm g^{\mathrm{TE}}_{\ell m}\Vert^2_{L^2(V;\mathbb C^3)}+\frac{2\ell(\ell+1)}{k^{3}}\int_0^{kR}j_{\ell}(\xi)j_{\ell+1}(\xi)\xi^3\D\xi\notag\\={}&\ell(\ell+1)R^{4}j_{\ell+1}(kR)\left[ \frac{2j_{\ell+1}(kR)}{R^{}} +\left.\frac{\partial j_{\ell+1}(kr)}{\partial r}\right|_{r=R}\right].
\end{align} Thus,  by equating $\frac{k^{3}\Vert\bm g^{\mathrm{TM}}_{\ell m}\Vert^2_{L^2(V;\mathbb C^3)}}{\ell(\ell+1)}-(kR)^2\lambda_{\ell+1}^{\mathrm {TM}}(kR)$ with \begin{align}{}&\int_0^{kR}j_{\ell+1}^{2}(\xi)\xi^4\D \xi-2\int_0^{kR}j_{\ell}(\xi)j_{\ell+1}(\xi)\xi^3\D\xi\notag\\{}&+(kR)^{3}j_{\ell+1}(kR)[kRj_{\ell}(kR)-\ell j_{\ell+1}(kR)]-(kR)^2\lambda_{\ell+1}^{\mathrm {TM}}(kR)\notag\\\leq{}&(kR)^2[\lambda_{\ell+1}^{\mathrm {TE}}(kR)-\lambda_{\ell+1}^{\mathrm {TM}}(kR)]-2\int_0^{kR}j_{\ell}(\xi)j_{\ell+1}(\xi)\xi^3\D\xi\notag\\{}&+(kR)^{3}j_{\ell+1}(kR)[kRj_{\ell}(kR)-\ell j_{\ell+1}(kR)]=-(2\ell+1)\lambda_{\ell+1}^{\mathrm {TE}}(kR)\leq0,\end{align}we can prove   \eqref{eq:gTM_bd}.   \end{proof}
\section{Norm bounds for the Green operator $ \hat{\mathscr G}$\label{sec:norm_bds}}
\subsection{$L^2$-norm bounds of $ \Vert\hat\gamma\Vert$\label{subsec:gamma_L2_bd}}In the quantitative analysis of the Born operator, we need to bound the operator norm $\Vert\hat\gamma\Vert_{L^2(V;\mathbb C^3)}  $ for
\begin{align}(\hat\gamma\bm E)(\bm r)\colonequals \iiint_V\hat {\Gamma}(\bm r,\bm r')\bm E(\bm r')\D^3 \bm r'\end{align}
where
the  $3\times3$  matrix $\hat{\Gamma}(\bm r,\bm r')$ is given by [cf.~\eqref{eq:Hess_prop}]\begin{align}\label{eq:Gamma_mat}\hat{\Gamma}(\bm r,\bm r')={}&\frac{k^2 }{4\pi|\bm r-\bm r' |}\left\{\left[e^{-ik|\bm r-\bm r' |}+ \frac{1-e^{-ik|\bm r-\bm r' |}\left(1+ik|\bm r-\bm r' |\right)}{k^2|\bm r-\bm r' |^2} \right]\mathbf1+\right.\notag\\&\left.+\left[ 3\frac{e^{-ik|\bm r-\bm r' |}\left(1+ik|\bm r-\bm r' |\right)-1}{k^2|\bm r-\bm r' |^2}-e^{-ik|\bm r-\bm r' |} \right]\hat {\bm r}\hat{\bm r}^\mathrm{T}\right\}.\end{align}Here,
the  $3\times3$  matrix\begin{align}\hat {\bm r}\hat{\bm r}^\mathrm{T}=\frac{\bm r-\bm r' }{|\bm r-\bm r' |}\left( \frac{\bm r-\bm r' }{|\bm r-\bm r' |} \right)^\mathrm{T}\end{align}satisfies $ \hat {\bm r}\hat{\bm r}^{\mathrm T}\hat {\bm r}=\hat {\bm r}$, so it has an eigenvalue $\lambda_{1}=1$, with eigenvector $\hat {\bm r} $. The three eigenvalues of $ \hat {\bm r}\hat{\bm r}^\mathrm{T}$ are actually ${\lambda}_1=1,\lambda_2=\lambda_3=0$,
as evident from the computations $\Tr(\hat {\bm r}\hat{\bm r}^{\mathrm T})=\lambda_{1}+\lambda_{2}+\lambda_{3}=1 $ and  $\det(\hat {\bm r}\hat{\bm r}^{\mathrm T})=\lambda_{1}\lambda_{2}\lambda_{3}=0.$ (Hereafter, without affecting the evaluation of the integral, we may always  assume that $ \bm r\neq\bm r' $, so that  the matrix is never ill-defined.)

We may accordingly define $\hat{\Gamma}_C (\bm r,\bm r')=\R\hat{\Gamma}(\bm r,\bm r')$ and $\hat{\Gamma}_S (\bm r,\bm r')=-\I\hat{\Gamma}(\bm r,\bm r') $, so that $(\hat\gamma_{C}\bm E)(\bm r)=\iiint_V\hat {\Gamma}_C(\bm r,\bm r')\bm E(\bm r')\D^3 \bm r' $ and  $(\hat\gamma_{S}\bm E)(\bm r)=\iiint_V\hat {\Gamma}_S(\bm r,\bm r')\bm E(\bm r')\D^3 \bm r' $. The bound estimate for the operator norm of $\hat {\gamma}=\hat\gamma_C-i\hat\gamma_S $ can be then obtained from $\Vert\hat \gamma\Vert_{L^2(V;\mathbb C^3)}\leq\Vert\hat \gamma_C\Vert_{L^2(V;\mathbb C^3)}+\Vert\hat \gamma_S\Vert_{L^2(V;\mathbb C^3)} $.

\begin{proposition}\label{prop:AreaBound}
We have the following upper bound estimate  \begin{align}\max\{\Vert\hat \gamma_C\Vert_{L^2(V;\mathbb C^3)},\Vert\hat \gamma_S\Vert_{L^2(V;\mathbb C^3)}\}\leq{}&     \frac{3 }{5\pi}\left(k^3\iiint_V\D^3\bm r\right)^{2/3}. \end{align} \end{proposition}\begin{proof}For brevity, we will work out  the proof for $\hat{\gamma}_C$ only, as the same procedure, \textit{mutatis mutandis}, will also apply {to $\hat {\gamma}_S$}.

It is easy to recognize that  $\hat \gamma_C $ is a  compact Hermitian operator. The corresponding absolute value of the Rayleigh quotient $|\langle\bm E,\hat{\gamma}_C\bm E\rangle_{V} /\langle\bm E,\bm E\rangle_{V}| $ (defined for $\Vert\bm E\Vert_{L^2(V;\mathbb C^3)}\neq0$) attains its maximum  when  $\bm E$ is an eigenvector  subordinate to the eigenvalue of $\hat \gamma_C $  with the largest modulus, and we  have the equality $\Vert\hat \gamma_C\Vert_{L^2(V;\mathbb C^3)}= \max_{\Vert\bm E\Vert_{L^2(V;\mathbb C^3)}\neq0}|\langle\bm E,\hat{\gamma}_C\bm E\rangle_{V} /\langle\bm E,\bm E\rangle_{V}|=\max_{\Vert\bm E\Vert_{L^2(V;\mathbb C^3)}=1}|\langle\bm E,\hat{\gamma}_C\bm E\rangle_{V}|$ as well.
Consequently, we just need to bound the absolute value of the quadratic form $\langle\bm E,\hat{\gamma}_C\bm E\rangle_{V}$ from above for an upper bound estimate of  $\Vert\hat \gamma_C\Vert_{L^2(V;\mathbb C^3)} $.

Now, we have the inequality\footnote{Hereafter, an asterisk in the superscript denotes complex conjugation of numbers and vectors, and conjugate transpose of matrices.}
\begin{align}|\langle\bm E,\hat{\gamma}_C\bm E\rangle_{V}|={}&\left\vert\iiint_V\iiint_V\bm E^{*}(\bm r)\cdot[\hat {\Gamma}_C(\bm r,\bm r')\bm E(\bm r')]\D^3 \bm r'\D^3\bm r\right\vert\notag\\\leq{}&\iiint_V\iiint_V|\bm E^{*}(\bm r)||\hat {\Gamma}_C(\bm r,\bm r')\bm E(\bm r')|\D^3 \bm r'\D^3\bm r\notag\\\leq{}&\iiint_V\iiint_V|\bm E^{*}(\bm r)|\left\{|\lambda|_{\max}\left[\hat {\Gamma}_C(\bm r,\bm r')\right]\right\}|\bm E(\bm r')|\D^3 \bm r'\D^3\bm r\end{align}where $|\lambda|_{\max} $ extracts the eigenvalue  with the largest modulus.  As the  $3\times3$  symmetric matrix  $ \hat{\Gamma}_C(\bm r,\bm r')=a\mathbf1+b\hat{\bm r}\hat{\bm r}^\mathrm{T}$  has three eigenvalues $\lambda _{1}=a+b,\lambda_{2}=\lambda_3=a$, we may deduce that
\begin{align}&|\lambda|_{\max}\left[\hat {\Gamma}_C(\bm r,\bm r')\right]\notag\\={}&\frac{k^2 }{4\pi|\bm r-\bm r' |}\max\left\{ \left\vert\R \left(e^{-i\xi}+ \frac{1-e^{-i\xi}\left(1+i\xi\right)}{\xi^2}\right) \right\vert,2\left\vert\R \left( \frac{1-e^{-i\xi}\left(1+i\xi\right)}{\xi^2}\right)\right\vert  \right\}, \end{align}where $ \xi=k|\bm r-\bm r' |$.

 We claim that the  formula above brings us \begin{align}|\lambda|_{\max}\left[\hat {\Gamma}_C(\bm r,\bm r')\right]\leq\frac{k^2 }{4\pi|\bm r-\bm r' |}\sqrt{\frac{13}{12}-\frac{3}{10 \pi ^2}}.\label{eq:l_max}\end{align}

For $\xi>0 $, we have $|e^{i\xi}-1-i\xi|=|\int_0^{\xi}(e^{i\eta}-1)\D \eta|\leq\int_0^{\xi}|e^{i\eta}-1|\D \eta=\int_0^{\xi}|2\sin\frac{\eta}{2}|\D \eta\leq \frac{\xi^2}{2} $,  thus
$2\left|1-e^{-i\xi}\left(1+i\xi\right)\right|\leq{\xi^2}$. The function $ \Lambda(\xi)\colonequals\frac{1}{\xi^{2}}|e^{-i\xi}\xi^{2}+1-e^{-i\xi}\left(1+i\xi\right)|$ satisfies $ \Lambda(\xi)\leq\sqrt{[\Lambda(\xi)]^2+\frac{\xi^{4}}{48}\{[b'(\xi)]^{2}-b(\xi)b''(\xi)\}}=\sqrt{\frac{13}{12}-\frac{10 }{3 \xi ^2}\sin ^2\frac{\xi }{2}}$ for all $ \xi>0$, where $ b(\xi)\colonequals\frac{2}{\xi}j_1\big(\frac{\xi }2\big)=\frac{2}{\xi^{2}}\big(\frac{4 }{\xi }\sin\frac{\xi }{2}-2 \cos\frac{\xi }{2}\big)$ is a member of  the  Laguerre--P\'olya class \cite{Jensen1913,Patrick1973}.
In particular, this implies\begin{align}\Lambda(\xi)\leq{}&\sqrt{\frac{13}{12}-\frac{10}{3(\frac{5\pi }{3}) ^2} \sin ^2\frac{5\pi }{6}}=\sqrt{\frac{13}{12}-\frac{3}{10 \pi ^2}}=1.0261+\end{align}when $ 0<\xi\leq\frac{5\pi }{3}$. Meanwhile, we have
\begin{align}\Lambda(\xi)={}&\sqrt{\left(1-\frac{1-\cos\xi}{\xi^{2}}\right)^{2}+\frac{(\xi-\sin\xi)^{2}}{\xi^{4}}}\leq\sqrt{1+\left(\frac{\xi+1}{\xi^{2}}\right)^2}\leq\sqrt{1+\left[\frac{\frac{5\pi}{3}+1}{\left(\frac{5\pi}{3}\right)^{2}}\right]^2}=1.0255+\end{align}when $ \xi\geq\frac{5\pi }{3}$. This establishes \eqref{eq:l_max} as claimed. [One may  replace the constant $\sqrt{\frac{13}{12}-\frac{3}{10 \pi ^2}}=1.0261+ $ by the numerical value of $\max_{\xi>0}\Lambda(\xi)=1.0140+ $, but the improvement is not significant.]

Using the  Hardy--Littlewood--Sobolev inequality \cite{HL1,HL2,Sob} with Lieb's sharp constant (see \cite[Theorem 3.1]{Lieb1983}, as well as \cite[Theorem~4.3]{Lieb})
\begin{align}&\left\vert\int_{\mathbb R^d}\int_{\mathbb R^d}\frac{f(\bm r')h(\bm r)}{|\bm r-\bm r'|^m}\D^d\bm r'\D^d\bm r\right\vert\notag\\\leq{}& \pi^{m/2}\frac{\Gamma(\frac{d-m}{2})}{\Gamma(d-\frac{m}{2})}\left[ \frac{\Gamma(\frac{d}{2})}{\Gamma(d)} \right]^{-1+\frac{m}{d}}\Vert f\Vert_{L^{2d/(2d-m)}(\mathbb R^d;\mathbb C)}\Vert h\Vert_{L^{2d/(2d-m)}(\mathbb R^d;\mathbb C)},\notag\\&\text{where }d=3,m=1,\frac{2d}{2d-m}=\frac{6}{5};\quad f(\bm r)=h(\bm r)=\begin{cases}|\bm E(\bm r)|, & \bm r\in V \\
0, & \bm r\notin V \\
\end{cases},\end{align}
we may deduce from \eqref{eq:l_max} the following estimate
\begin{align}|\langle\bm E,\hat{\gamma}_C\bm E\rangle_{V}|\leq{}&\frac{2k^2 }{3\pi}\sqrt[3]{\frac{2}{\pi}}\sqrt{\frac{13}{12}-\frac{3}{10 \pi ^2}}\Vert \bm E\Vert_{L^{6/5}(\mathbb R^d;\mathbb C^3)}^{2}\notag\\\leq{}&\frac{2\langle\bm E,\bm E\rangle_{V} }{3\pi}\sqrt[3]{\frac{2}{\pi}}\sqrt{\frac{13}{12}-\frac{3}{10 \pi ^2}}\left(k^3\iiint_V\D^3\bm r\right)^{2/3}\leq\frac{3 }{5\pi}\left(k^3\iiint_V\D^3\bm r\right)^{2/3}.\end{align} Here, the H\"older inequality \cite[Theorem 2.3]{Lieb} has been used to derive the estimate $\Vert \bm E\Vert_{L^{6/5}(V;\mathbb C^3)}$ $\leq\Vert \bm E\Vert_{L^{2}(V;\mathbb C^3)}(\iiint_V\D^3\bm r)^{1/3} $, and the ratio $\frac23(\frac{2}{\pi})^{1/3}\sqrt{\frac{13}{12}-\frac{3}{10 \pi ^2}}:\frac35 $ lies between $98\%$ and  $99\%$.
 \end{proof} \begin{remark}When $V=\{(x,y,z)\in\mathbb R^3|x^2+y^2<R^2,0<z<h\} $ takes the shape of a cylindrical ``antenna'', we may compute that $\max\{\Vert\hat \gamma_C\Vert_{L^2(V;\mathbb C^3)},\Vert\hat \gamma_S\Vert_{L^2(V;\mathbb C^3)}\}\leq \frac25(\frac12+\frac\pi6)(k^3 R^2 h)^{2/3}$, and the upper bound $\Vert\hat \gamma\Vert_{L^2(V;\mathbb C^3)}\leq\frac45(\frac12+\frac\pi6)(k^3 R^2 h)^{2/3} $ scales as $(kh)^{2/3} $ for large heights $h\gg2\pi/k$. When $V=O(\mathbf0,R)$ is an open ball with radius $R$, the reasoning above leads to an estimate $\max\{\Vert\hat \gamma_C\Vert_{L^2(O(\mathbf0,R);\mathbb C^3)},\Vert\hat \gamma_S\Vert_{L^2(O(\mathbf0,R);\mathbb C^3)}\}$ $  \leq\frac12(kR)^2$. Thus, we may also conclude that $\Vert\hat \gamma\Vert_{L^2(O(\mathbf0,R);\mathbb C^3)}$ $\leq (kR)^2 $. Lacking a treatment of oscillatory cancelations in the convolution kernels, the quadratic growth rate $(kR)^2$  overestimates the operator norm for spheres of large radii  $R\gg2\pi/k$, as will be revealed by the analysis in \S\S\ref{subsec:totalsc}--\ref{subsec:kRgammaC}  below.

\end{remark}
\subsection{$L^2$-norm bounds of  $ \Vert\I\hat{\mathscr G}\Vert$ and total scattering cross-section \label{subsec:totalsc}}In the $ L^2$-formulation of electromagnetic scattering, there is an energy conservation law (generalized optical theorem) for the Born operator $\hat{\mathscr B}=\hat I-\chi\hat {\mathscr G}\colon L^2(V;\mathbb C^3)\longrightarrow L^2(V;\mathbb C^3) $ as given below: \begin{align}&
\frac{|\chi|^2 k^5}{16\pi^2 } \varoiint_{|\bm n|=1}\left|\bm n\times\iiint_V \bm E(\bm r' ) e^{ik\bm n \cdot \bm r' }  \D^3\bm r' \right|^2\D\Omega\notag\\=&\I\left[\chi k^2\iiint_V|\bm E(\bm r )|^{2}\D^3\bm r \right]-\I\left[\chi k^2\iiint_V(\hat{\mathscr  B}\bm E)^*(\bm r)\cdot\bm E(\bm r )\D^3\bm r\right],\label{eq:GOT1}
\end{align}where $ \D\Omega=\sin\theta\D\theta\D\phi$ stands for the infinitesimal steric angles. [Physically speaking, the generalized optical theorem \eqref{eq:GOT1} represents the  energy conservation during the scattering process, and can be rewritten as \begin{align}\label{eq:GOT01}
\sigma_\mathrm{sc}+\sigma_\text{abs}=\frac{4\pi}{k} \I \left[-\frac{\chi k^2}{4\pi} \iiint_V \bm E_\mathrm{inc}^* (\bm r)\cdot \bm E(\bm r) \D^3\bm r \right].
\end{align}
Here, the total scattering cross-section $\sigma_\mathrm{sc}$ (which may be regarded as electromagnetic energy that is radiated, in unit time, to infinite distances) is given by
\begin{align}
\sigma_\mathrm{sc}\colonequals\frac{|\chi|^2 k^4}{16\pi^2 } \varoiint_{|\bm n|=1}\left|\bm n\times\iiint_V \bm E(\bm r' ) e^{ik\bm n \cdot \bm r' }  \D^3\bm r' \right|^2\D\Omega\geq0;
\end{align}
while the total absorption cross-section $\sigma_\text{abs}$ (which may be phenomenologically interpreted as dissipation of electromagnetic energy in unit time due to Joule heating) is given by\begin{align}
\sigma_\text{abs}\colonequals\I\left[-\chi k\iiint_V\bm E^*(\bm r)\cdot \bm E(\bm r)  \D^3\bm r \right]\geq0.
\end{align}The radiated energy ($ \sigma_{\text{sc}}$) and the dissipated energy ($ \sigma_\text{abs}$) are dispatched from the electromagnetic work done by the incident field, which is the right-hand side of \eqref{eq:GOT01}. If there is no absorption, and the incident field is a plane wave  $ \bm E_\mathrm{inc} (\bm r)\propto\bm e_ze^{-ik\bm e_x\cdot\bm r}$,  then \eqref{eq:GOT01} equates the total scattering section on the left-hand side with the imaginary part of forward scattering amplitude on the right-hand side, which is the statement of the standard optical theorem \cite[\S10.11]{Jackson:EM}. This explains the name ``generalized optical theorem''.]

 From an equivalent form of  \eqref{eq:GOT1}:\footnote{Henceforth, we define the Fourier  transform of an electric field $ \bm E\in L^2(V;\mathbb C^3)$ via $ \widetilde{\bm E}(\bm q)\colonequals\iiint_V\bm E(\bm r)e^{i\bm q\cdot\bm r}\D^3\bm r$.}\begin{align}&\iiint_V\left[\nabla\times \nabla\times \iiint_V\frac{\bm E^*(\bm r' )\sin(k|\bm r-\bm r'|)}{4\pi|\bm r-\bm r' |}  \D^3 \bm r'\right]\cdot\bm E(\bm r)\D^3\bm r\notag\\={}&\I\langle\hat{\mathscr  G}\bm E,\bm E\rangle_{V}=\frac{k^3}{16\pi^2 }\varoiint_{|\bm n|=1}\left\vert\bm n\times\widetilde{\bm E}(k\bm n)\right\vert^2\D\Omega,\end{align} we know that the total scattering cross-section (up to a scaling factor) is determined by \begin{align}\langle\hat{\gamma}_S\bm E,\bm E\rangle_{V}={}&\iiint_V\left[\nabla\times \nabla\times \iiint_V\frac{\bm E^*(\bm r' )\sin(k|\bm r-\bm r'|)}{4\pi|\bm r-\bm r' |}  \D^3 \bm r'\right]\cdot\bm E(\bm r)\D^3\bm r\notag\\={}&\frac{k^3}{16\pi^2 }\varoiint_{|\bm n|=1}\left\vert\bm n\times\widetilde{\bm E}(k\bm n)\right\vert^2\D\Omega,\label{eq:gammaS}\end{align}where the linear mapping $\hat \gamma_S=-\I\hat{\mathscr G}\colon {L^2(V;\mathbb C^3)}\longrightarrow{L^2(V;\mathbb C^3)} $ [defined in \eqref{eq:gamma_S}] is a positive-semidefinite and compact operator. The operator norm $\Vert\hat\gamma_S\Vert_{L^2(V;\mathbb C^3)} $ is thus equal to the largest eigenvalue of  the linear mapping $\hat \gamma_S\colon {L^2(V;\mathbb C^3)}\longrightarrow{L^2(V;\mathbb C^3)}$.
\begin{lemma}\label{lm:V_mono}
If $V_{1}\subset V_{2} $, then  $\Vert\hat\gamma_S\Vert_{L^2(V_{1};\mathbb C^3)}\leq \Vert\hat\gamma_S\Vert_{L^2(V_{2};\mathbb C^3)} $.
\end{lemma}
\begin{proof}Suppose that $\bm E_{V_1}\in L^2(V_{1};\mathbb C^3)  $ solves the eigenequation $\hat \gamma_S\bm E_{V_1}=\Vert\hat\gamma_S\Vert_{L^2(V_{1};\mathbb C^3)} \bm E_{V_1}$, and satisfies $\Vert\bm E_{V_1}\Vert_{L^2(V_1;\mathbb C^3)}=1 $. Then, we may take \begin{align}\bm E(\bm r)=\begin{cases}\bm E_{V_1}(\bm r), & \bm r\in V_1 \\
\mathbf0, & \bm r\in V_2\smallsetminus V_1 \\
\end{cases} \end{align}so that $\Vert\bm E\Vert_{L^2(V_2;\mathbb C^3)}=1$ and \begin{align}&\Vert\hat\gamma_S\Vert_{L^2(V_{1};\mathbb C^3)} =\langle\hat{\gamma}_S\bm E_{V_1},\bm E_{V_1}\rangle_{V_1}=\langle\hat{\gamma}_S\bm E,\bm E\rangle_{V_2}\notag\\\leq{}&\Vert\hat\gamma_S\Vert_{L^2(V_{2};\mathbb C^3)} \Vert\bm E\Vert_{L^2(V_2;\mathbb C^3)}^2=\Vert\hat\gamma_S\Vert_{L^2(V_{2};\mathbb C^3)} \end{align}as claimed.
 \end{proof}\begin{remark}\label{rmk:V_mono}In a similar vein, one can argue that if $V_{1}\subset V_{2} $, then  $\Vert\R\hat{\mathscr G}\Vert_{L^2(V_{1};\mathbb C^3)}\leq \Vert\R\hat{\mathscr G}\Vert_{L^2(V_{2};\mathbb C^3)} $. Here, for each $ \varepsilon>0$, one can fine a certain  $\bm E_{V_1,\varepsilon}\in L^2(V_{1};\mathbb C^3)  $   such that  $\Vert\bm E_{V_1,\varepsilon}\Vert_{L^2(V_1;\mathbb C^3)}=1 $ and $ \Vert\R\hat{\mathscr G}\Vert_{L^2(V_{1};\mathbb C^3)}-\varepsilon\leq \Vert(\R\hat{\mathscr G})\bm E_{V_1,\varepsilon}\Vert_{L^2(V_{1};\mathbb C^3)}$. Zero-padding as in the proof above, we  have $ \Vert\R\hat{\mathscr G}\Vert_{L^2(V_{1};\mathbb C^3)}-\varepsilon\leq \Vert\R\hat{\mathscr G}\Vert_{L^2(V_{2};\mathbb C^3)}$ for arbitrary $ \varepsilon>0$.
 \eor\end{remark}
\begin{proposition}For any bounded and open volume $V$, we have \begin{align}\Vert\hat\gamma_S\Vert_{L^2(V;\mathbb C^3)}\leq{}
\sup_{\ell\in\mathbb Z_{>0}}\max\left\{\int_0^{kR_V}j_\ell^2(\xi)\xi^2\D \xi,\frac{}{}\int_0^{kR_{V}}\frac{(\ell+1)j_{\ell-1}^2(\xi)+\ell j_{\ell+1}^2(\xi)}{2\ell+1}\xi^2\D \xi \right\},\label{eq:gamma_S_bd}\end{align}where   $R_V\colonequals \min_{\bm r\in\mathbb R^3}\max_{\bm r'\in V\cup\partial V}|\bm r'-\bm r|$ is the minimal radius of all the circumscribed spheres, and $j_\ell(x)\colonequals (-x)^{\ell}(\frac{1}{x}\frac{\D}{\D x})^\ell(\frac{\sin x}{x}) $ denotes the $\ell$-th order spherical Bessel function.
\end{proposition}
\begin{proof}This is a straightforward consequence of the complete set of eigenvectors for $ \hat\gamma_S$ in Mie scattering (Proposition \ref{prop:Mie_gamma_S}) and the monotonicity of operator norm with respect to set inclusions (Lemma \ref{lm:V_mono}).
 \end{proof}

\begin{proposition}\label{prop:kRVgammaS}We have the following inequality
\begin{align} \Vert\hat\gamma_S\Vert_{L^2(V;\mathbb C^3)}\leq{}&\min\left\{ \frac{k^3}{4\pi}\iiint_V\D^3\bm r ,\frac{11}{45}( kR_V)^3,\frac{\sqrt{2}}{3}( kR_V)^2,\frac{3}{4}(kR_V)^{4/3}, \frac{9}{16}kR_V\right\}\notag\\={}&\min\left\{ \frac{k^3}{4\pi}\iiint_V\D^3\bm r ,\frac{11}{45}( kR_V)^3, \frac{9}{16}kR_V\right\}.\label{eq:gammaS_bd}\end{align}
\end{proposition}\begin{proof}First, we point out that the identity  $\sum_{L=0}^\infty(2L+1) j_L^2(\xi)=1$ (see \cite[(10.1.50)]{Stegun} or \cite[\S5.51(2)]{Watson}) entails $  j_\ell^2(\xi)\leq\frac1{2\ell+1}$, thus resulting in $\int_0^{kR_V}j_\ell^2(\xi)\xi^2\D \xi\leq\frac1{2\ell+1}\int_0^{kR_V}\xi^2\D \xi\leq \frac{( kR_V)^3}{3(2\ell+1)}$ for all $\ell\in\mathbb Z_{\geq 0}$. Hence, \begin{align}\frac{
\Vert\hat\gamma_S\Vert_{L^2(O(\mathbf0,R_V);\mathbb C^3)}}{( kR_V)^3}\leq\sup_{\ell\in\mathbb Z_{>0}}\max\left\{\frac{1}{3(2\ell+1)},\frac{1}{3(2\ell+1)}\left(\frac{\ell+1}{2\ell-1}+\frac{\ell}{2\ell+3}\right)\right\}=\frac{11}{45}.
\end{align}For arbitrarily shaped $V$,  we have $4\pi\Vert\hat\gamma_S\Vert_{L^2(V;\mathbb C^3)}\leq k^3\iiint_V\D^3\bm r $, after we plug the Cauchy--Schwarz inequality $ |\widetilde{\bm E}(\bm q)|\leq\Vert\bm E\Vert_{L^2(V;\mathbb C^3)}\sqrt{\iiint_V\D^3\bm r}$ into the right-hand side of \eqref{eq:gammaS}. This general inequality entails a weaker bound for spheres, which reads $ \Vert\hat\gamma_S\Vert_{L^2(O(\mathbf0,R_V);\mathbb C^3)}\leq\frac13( kR_V)^3$.

Next, we note that
the  following identity (see \cite[(10.1.52)]{Stegun} or \cite[\S5.51(6)]{Watson}) \begin{align}\sum_{L=0}^\infty j_L^2(\xi)=\frac{1}{2\xi}\int_0^{2\xi}\frac{\sin\eta}{\eta}\D\eta,\end{align}along with the inequality $\int_0^{2\xi}\frac{\sin\eta}{\eta}\D\eta\leq\int_0^{\pi}\frac{\sin\eta}{\eta}\D\eta\leq\int_0^\pi\cos\frac{\eta}{2}\cos\frac{\eta}{4}\D\eta=\frac{4\sqrt{2}}3$ for $\xi>0 $ lead us to
$\int_0^{kR_V}j_\ell^2(\xi)\xi^2\D \xi\leq\frac{2\sqrt{2}}3\int_0^{kR_V}\xi\D \xi=\frac{\sqrt{2}}{3}( kR_V)^2$ for all $ \ell\in\mathbb Z_{\geq 0}$, and consequently  $ \Vert\hat\gamma_S\Vert_{L^2(O(\mathbf0,R_V);\mathbb C^3)}\leq\frac{\sqrt{2}}{3}( kR_V)^2$.

L.\ J.\ Landau has shown that \cite[\S6]{Landau2000Bessel}\begin{align}
\sup_{x\in\mathbb R}\sqrt[3]{|x|}|J_\nu(x)|\leq\sup_{x\in\mathbb R}\sqrt[3]{|x|}|J_0(x)|=0.78574+
\label{eq:Landau}\end{align}for all $ \nu\geq0$. Here, the cylindrical Bessel function $J_\nu(x) $ is related to the spherical Bessel function through the relation $ j_\ell(x)=\sqrt{\frac{\pi}{2x}}J_{\ell+\frac{1}{2}}(x)$ for $ x>0$. Thus, we have $
\int_0^{kR_V}j_\ell^2(\xi)\xi^2\D \xi\leq\frac{3}{4}(kR_V)^{4/3}
$ for all  $ \ell\in\mathbb Z_{\geq 0}$, and the same upper bound applies to $ \Vert\hat\gamma_S\Vert_{L^2(O(\mathbf0,R_V);\mathbb C^3)}$.

Now, we will set out to prove
$\int_0^{kR_V}j_\ell^2(\xi)\xi^2\D \xi\leq\frac{9}{16} kR_{V}$
 and $ \frac{1}{2\ell+1}\int_0^{kR_{V}}[(\ell+1)j_{\ell-1}^2(\xi)+\ell j_{\ell+1}^2(\xi)]\xi^2\D \xi\leq\frac{9}{16} kR_{V}$, for $ \ell\in\mathbb Z_{>0}$, using the following Lommel series  \cite[\S5.51(1)]{Watson} for $x=kR_V>0$
\begin{align}\int_0^{x}j_\ell^2(\xi)\xi^2\D \xi={}&\frac{\pi x^2}{4}\left[ J_{\ell+\frac12}^2(x)-J_{\ell-\frac12}(x)J_{\ell+\frac32}(x) \right]=\pi\sum_{L=0}^\infty\left(\ell+\frac32+2L\right)J^2_{\ell+\frac32+2L}(x)\notag\\={}&x\sum_{L=0}^\infty[2(\ell+1+2L)+1]j^2_{\ell+1+2L}(x),\label{eq:Lommel_sum}\end{align}where the cylindrical Bessel functions $J_\nu(x)$ are related to the spherical Bessel functions by $j_\ell(x)=\sqrt{\frac{\pi}{2x}}J_{\ell+\frac12}(x) $.
In view of \eqref{eq:Lommel_sum},  for any fixed $ x>0$, the two sequences $ \frac{1}{x}\int_0^{x}j_{2n}^2(\xi)\xi^2\D \xi,n\in\mathbb Z_{\geq0}$ and $\frac{1}{x}\int_0^{x}j_{2n+1}^2(\xi)\xi^2\D \xi,n\in\mathbb Z_{\geq0}$  are both monotonically non-increasing.

 When $\ell\in2\mathbb Z_{\geq0}$, we have
$\frac{1}{x}\int_0^{x}j_\ell^2(\xi)\xi^2\D \xi\leq\frac{1}{x}\int_0^{x}j_0^2(\xi)\xi^2\D \xi=\frac12-\frac{\sin 2 x}{4x}$; when  $\ell\in2\mathbb Z_{>0}$, we have
$
\frac{1}{x}\int_0^{x}j_\ell^2(\xi)\xi^2\D \xi\leq\frac{1}{x}\int_0^{x}j_2^2(\xi)\xi^2\D \xi=\frac12-\frac{\sin 2 x}{4x}-3[j_{1}(x)]^{2}$;
 when  $\ell-1\in2\mathbb Z_{\geq0}$,
we have
$\frac{1}{x}\int_0^{x}j_\ell^2(\xi)\xi^2\D \xi\leq\frac{1}{x}\int_0^{x}j_1^2(\xi)\xi^2\D \xi=\frac12+\frac{\sin 2 x}{4x} -[j_0(x)]^2$. Therefore, \begin{align}&
\sup_{\ell\in\mathbb Z_{>0}}\frac{1}{kR_V}\max\left\{\int_0^{kR_V}j_\ell^2(\xi)\xi^2\D \xi,\frac{}{}\int_0^{kR_{V}}\frac{(\ell+1)j_{\ell-1}^2(\xi)+\ell j_{\ell+1}^2(\xi)}{2\ell+1}\xi^2\D \xi \right\}\notag\\\leq{}&\sup_{\ell\in\mathbb Z_{>0}}\max\left\{\frac12+\frac{\sin (2kR_V)}{4kR_V} -[j_0(kR_V)]^2,\frac12-\frac{\sin (2kR_V)}{4kR_V}-\frac{3\ell}{2\ell+1}[j_{1}(kR_V)]^{2}\right\}\notag\\={}&\max\left\{\frac12+\frac{\sin (2kR_V)}{4kR_V} -[j_0(kR_V)]^2,\frac12-\frac{\sin (2kR_V)}{4kR_V}-[j_{1}(kR_V)]^{2}\right\}.
\end{align}

Note that\begin{align}
\frac12+\frac{\sin (2x)}{4x} -[j_0(x)]^2=\frac{1}{2}+\frac{j_0(x)\cos x}{2}-[j_0(x)]^2\leq\frac{1}{2}+\frac{|j_0(x)|}{2}-|j_0(x)|^2 \leq\frac9{16},
\end{align} where we have maximized a quadratic form with respect to $ |j_0(x)|$ in the last step.
For $ x\geq\frac{3\pi}{2}$, we have\begin{align}
\frac12-\frac{\sin (2x)}{4x}-[j_{1}(x)]^{2}\leq\frac{1}{2}+\frac1{4x}\leq\frac12+\frac{1}{6\pi}<\frac9{16};
\end{align} for $ x\in[0,\frac{\pi}{2}]\cup[\pi,\frac{3\pi}{2}]$, we have \begin{align}
\frac12-\frac{\sin (2x)}{4x}-[j_{1}(x)]^{2}\leq\frac12-\frac{\sin (2x)}{4x}\leq\frac{1}{2}<\frac{9}{16};
\end{align}for $ x\in[\frac{\pi}{2},\pi]$, we have \begin{align} {}&\frac12-\frac{\sin (2x)}{4x}-[j_{1}(x)]^{2}=\frac12-\frac{\sin x\cos x}{2x}-\left( -\frac{\cos x}{x}+\frac{\sin x}{x^2} \right)^2\notag\\\leq{}&\frac{1}{2}+\left( \frac{4}{x^{2}} -\frac{1}{2}\right)\frac{\sin x\cos x}{x}\leq\frac{1}{2}+\left( \frac{1}{2} -\frac{4}{\pi^{2}}\right)\frac{2}{\pi}<\frac{9}{16},\end{align}thereby arriving at our goal.
 \end{proof}
\subsection{$ L^2$-norm bounds of $  \Vert\R\hat{\mathscr G}\Vert$ for large spheres\label{subsec:kRgammaC}}
Arguing as in Remark \ref{rmk:V_mono}, we can verify that $ \Vert\R\hat{\mathscr G}\Vert_{L^2(V;\mathbb C^3)}\leq \Vert\R\hat{\mathscr G}\Vert_{L^2(O(\mathbf 0,R_V);\mathbb C^3)}$. Therefore, to improve the bound estimates for $ \Vert\R\hat{\mathscr G}\Vert_{L^2(V;\mathbb C^3)}$, we will focus on spherical scatterers $ V=O(\mathbf0,R)$.

Without compounding the validity of our quantitative analysis, we may momentarily regard $\bm E\in C^\infty_0(V;\mathbb C^3)$ as a vector field with compact support,
thanks to   $\Vert\R\hat{\mathscr G}\Vert_{L^2(V;\mathbb C^3)}=\sup_{\bm E\in C^\infty_0(V;\mathbb C^3)\smallsetminus\{\mathbf0\}}\left|\frac{\langle\bm E,(\R\hat{\mathscr G})\bm E\rangle_{V}}{\langle\bm E,\bm E\rangle_{V}} \right| $. Accordingly,  $ (\R\hat{\mathscr G})\bm E\in C^{\infty}(\mathbb R^3;\mathbb C^3)$ is a smooth vector field defined on the entire space.  Now that   $\bm E\in C^\infty_0(V;\mathbb C^3)\subset\mathscr S(\mathbb R^3;\mathbb C^3)$ is a member of the Schwartz space,  the Fourier transform $\widetilde{\bm E}(\bm q) $  will decay sufficiently fast as $ |\bm q|\to+\infty$. This allows us to deduce\footnote{This can also be interpreted as a formal implementation of the Hilbert transform  $ \frac{2}{\pi}\mathscr P\int_0^{+\infty} \frac{q\sin (q\xi)}{q^{2}-k^2} \D q=\cos (k\xi)$ for $ \xi=|\bm r-\bm r'|$ \cite[p.~533, (13D.7)]{KingVol2}.}
\begin{align}&\langle\bm E,(\R\hat{\mathscr G})\bm E\rangle_{V}+\langle\bm E,\bm E\rangle_{V}\notag\\={}&\lim_{\varepsilon\to0^+}\R\iiint_{\mathbb R^3}\left[\frac{1}{|\bm q|^2-(k-i\varepsilon)^2}+\frac{1}{|\bm q|^2-(k+i\varepsilon)^2}\right]\frac{|\bm q\times\widetilde{\bm E}(\bm q)|^{2}}{2}\frac{\D^3\bm q}{(2\pi)^3}\notag\\={}&\frac{2}{\pi}\mathscr P\int_0^{+\infty} \frac{q\sigma_{\bm E}(q)}{q^{2}-k^2} \D q,\end{align}where $ \mathscr P$ stands for Cauchy principal value and $ \sigma_{\bm E}(q)\colonequals\frac{q^{3}}{16\pi^{2}}\varoiint_{|\bm
n|=1}|\bm n\times\widetilde{\bm E}(q\bm n)|^2\D\Omega$ [cf.~\eqref{eq:gammaS}]. We further split this integral into two parts:\begin{align}
\frac{1}{\pi}\mathscr P\int_{k-\varepsilon}^{k+\varepsilon}\frac{\sigma_{\bm E}(q)}{q-k}\D q=\frac{1}{\pi}\int_{k-\varepsilon}^{k+\varepsilon}\frac{\sigma_{\bm E}(q)-\sigma_{\bm E}(k)}{q-k}\D q\label{eq:CPV_sigmaE}
\end{align}and \begin{align}
\frac{1}{\pi}\left(\int_0^{k-\varepsilon}+\int_{k+\varepsilon}^{+\infty}\right)\frac{\sigma_{\bm E}(q)}{q-k}\D q+\frac{1}{\pi}\int_{0}^{+\infty}\frac{\sigma_{\bm E}(q)}{q+k}\D q.
\end{align}Here in \eqref{eq:CPV_sigmaE}, we have made use of the fact that $ \mathscr P\int_{k-\varepsilon}^{k+\varepsilon}\frac{\D q}{q-k}=0$. By the first mean value theorem for integration and Lagrange's mean value theorem for differentiation, we can identify
 \eqref{eq:CPV_sigmaE} with\begin{align}
\frac{2\varepsilon}{\pi}\left.\frac{\D \sigma_{\bm E}(q)}{\D q}\right|_{q=\kappa}=\frac{2\varepsilon}{\pi}\iiint_V\left[\nabla\times \nabla\times \iiint_V\frac{\bm E^*(\bm r' )\cos(\kappa|\bm r-\bm r'|)}{4\pi}  \D^3 \bm r'\right]\cdot\bm E(\bm r)\D^3\bm r\label{eq:sigma'_kappa}
\end{align}for a certain $ \kappa\in(k-\varepsilon,k+\varepsilon)$. This naturally motivates a quantitative study of the following operator defined for any  $k>0$:\begin{align}
(\hat g \bm E)(\bm r)\colonequals\nabla\times \nabla\times \iiint_V\frac{\bm E(\bm r' )\cos( k|\bm r-\bm r'|)}{4\pi}  \D^3 \bm r'=\frac{\partial}{\partial k}(\hat\gamma_S\bm E)(\bm r).
\end{align}\begin{lemma}    Define   $ q_n^{\mathrm{TE}}(x)\colonequals x^{3}[j_n(x)]^2,n\in\mathbb Z_{\geq0}$ and $ q_n^{\mathrm{TM}}(x)\colonequals\frac{n+1}{2n+1}q_{n-1}^{\mathrm{TE}}(x)+\frac{n}{2n+1}q_{n+1}^{\mathrm{TE}}(x),n\in\mathbb Z_{>0}$. Take $ \bm f^{\mathrm{TE}}_{\ell m}$, $ \bm f^{\mathrm{TM}}_{\ell m}$, $ \bm g^{\mathrm{TE}}_{\ell m}$ and $ \bm g^{\mathrm{TM}}_{\ell m}$  as defined in Proposition \ref{prop:Mie_proj}.  An orthogonal basis set for the range of $\hat g\colon L^2(O(\mathbf 0,R);\mathbb C^3)\longrightarrow L^2(O(\mathbf 0,R);\mathbb C^3)$ is  formed by all the eigenmodes in the following eigenequations:
\begin{align}\hat g\bm h_{\ell m,+}^{\mathrm{TE}}=\lambda_{\ell m,+}^{\mathrm{TE}}\bm h_{\ell m,+}^{\mathrm{TE}},\quad\hat g\bm h_{\ell m,-}^{\mathrm{TE}}=\lambda_{\ell m,-}^{\mathrm{TE}}\bm h_{\ell m,-}^{\mathrm{TE}},\label{eq:g_TE}\\\hat g\bm h_{\ell m,+}^{\mathrm{TM}}=\lambda_{\ell m,+}^{\mathrm{TM}}\bm h_{\ell m,+}^{\mathrm{TM}},\quad \hat g\bm h_{\ell m,-}^{\mathrm{TM}}=\lambda_{\ell m,-}^{\mathrm{TM}}\bm h_{\ell m,-}^{\mathrm{TM}},\label{eq:g_TM}\end{align}
where \begin{align}
\lambda_{\ell m,\pm}^{\mathrm{TE}}={}&\frac{q_\ell^{\mathrm{TE}}(kR)\pm
\sqrt{\left[q_\ell^{\mathrm{TE}}(kR)\right]^2+\frac{4k^{8}\Vert\bm f^{\mathrm{TE}}_{\ell m}\Vert^2_{L^2(O(\mathbf 0,R);\mathbb C^3)}\Vert(\hat I-\hat P)\bm g^{\mathrm{TE}}_{\ell m}\Vert^2_{L^2(O(\mathbf 0,R);\mathbb C^3)}}{\ell^{2}(\ell+1)^{2}}}}{2k},
\\
\lambda_{\ell m,\pm}^{\mathrm{TM}}={}&\frac{q_\ell^{\mathrm{TM}}(kR)\pm
\sqrt{\left[q_\ell^{\mathrm{TM}}(kR)\right]^2+\frac{4k^{4}\Vert\bm f^{\mathrm{TM}}_{\ell m}\Vert^2_{L^2(O(\mathbf 0,R);\mathbb C^3)}\Vert(\hat I-\hat P)\bm g^{\mathrm{TM}}_{\ell m}\Vert^2_{L^2(O(\mathbf 0,R);\mathbb C^3)}}{\ell^{2}(\ell+1)^{2}}}}{2k},\\
\bm h_{\ell m,\pm}^{\mathrm{TE}}={}&\frac{k^{3}}{\ell(\ell+1)}\Vert(\hat I-\hat P)\bm g^{\mathrm{TE}}_{\ell m}\Vert^2_{L^2(O(\mathbf 0,R);\mathbb C^3)}\bm f^{\mathrm{TE}}_{\ell m}+\lambda_{\ell m,\mp}^{\mathrm{TE}}(\hat I-\hat P)\bm g^{\mathrm{TE}}_{\ell m},\\
\bm h_{\ell m,\pm}^{\mathrm{TM}}={}&\frac{k}{\ell(\ell+1)}\Vert(\hat I-\hat P)\bm g^{\mathrm{TM}}_{\ell m}\Vert^2_{L^2(O(\mathbf 0,R);\mathbb C^3)}\bm f^{\mathrm{TM}}_{\ell m}+\lambda_{\ell m,\mp}^{\mathrm{TM}}(\hat I-\hat P)\bm g^{\mathrm{TM}}_{\ell m},\label{eq:h_TM}
\end{align}for $ \ell\in\mathbb Z_{>0},m\in\mathbb Z\cap[-\ell,\ell]$. Consequently, we have \begin{align}
\Vert \hat g\Vert_{L^2(O(\mathbf 0,R);\mathbb C^3)}={}&\sup_{\ell\in\mathbb Z_{>0}}\max\{\lambda_{\ell 0,+}^{\mathrm{TE}},\lambda_{\ell 0,+}^{\mathrm{TM}}\}\leq\frac{\frac{9(kR)^2}{16}+\frac{(kR)^{4/3}}{2}+\frac{2(kR)^{2/3}}{9}}{k}.
\label{eq:g_norm_bd}\end{align}\end{lemma}

\begin{proof}
   Differentiating both sides of \eqref{eq:gamma_S_flm}  with respect to $k$,\footnote{The interchange of differentiation and summation is justified by rapid decays of the summands for large $ \ell$, thanks to $ j_\ell(kr)\sim\frac{(kr)^{\ell}}{(2\ell+1)!!},\ell\to+\infty$ and $ \frac{\partial}{\partial k}j_\ell(kr)=\frac{\ell}{k}j_\ell(kr)-rj_{\ell+1}(kr)$.} before specializing to $ \bm F=\bm E-\hat P \bm E$, we arrive at  \begin{align}&
\hat g(\bm E-\hat P\bm E)\notag\\={}&\sum_{\ell=1}^\infty\sum_{m=-\ell}^\ell\frac{k}{\ell(\ell+1)}\left(k^{2}\left\langle\frac{\partial\bm f^{\mathrm{TE}}_{\ell m}}{\partial k},\bm E-\hat P\bm E\right\rangle _V\bm f^{\mathrm{TE}}_{\ell m}+\left\langle\frac{\partial\bm f^{\mathrm{TM}}_{\ell m}}{\partial k},\bm E-\hat P\bm E\right\rangle _V \bm f^{\mathrm{TM}}_{\ell m}\right)\notag\\={}&-\sum_{\ell=1}^\infty\sum_{m=-\ell}^\ell\frac{k}{\ell(\ell+1)}\left(k^{2}\left\langle(\hat I-\hat P)\bm g^{\mathrm{TE}}_{\ell m},\bm E\right\rangle _V\bm f^{\mathrm{TE}}_{\ell m}+\left\langle(\hat I-\hat P)\bm g^{\mathrm{TM}}_{\ell m},\bm E\right\rangle _V \bm f^{\mathrm{TM}}_{\ell m}\right),\label{eq:g_gP}
\end{align}where $V= O(\mathbf 0,R)$. Owing to the orthogonality relations \eqref{eq:glm_TE_gl'm'TE}--\eqref{eq:glm_TE_gl'm'TM}, we have \begin{align}
\hat g(\hat I-\hat P)\bm g^{\mathrm{TE}}_{\ell m}={}&-\frac{k^{3}\Vert (\hat I-\hat P)\bm g^{\mathrm{TE}}_{\ell m}\Vert^2_{L^2(V;\mathbb C^3)}}{\ell(\ell+1)}\bm f^{\mathrm{TE}}_{\ell m},\label{eq:g-gP_TE}\\\hat g(\hat I-\hat P)\bm g^{\mathrm{TM}}_{\ell m}={}&-\frac{k\Vert (\hat I-\hat P)\bm g^{\mathrm{TM}}_{\ell m}\Vert^2_{L^2(V;\mathbb C^3)}}{\ell(\ell+1)}\bm f^{\mathrm{TM}}_{\ell m}.\label{eq:g-gP_TM}
\end{align}

We may reexamine the first equality in \eqref{eq:g_gP} from a different perspective. Noting that  the real part of \eqref{eq:BornGexpn} leaves us $(\R\hat{\mathscr G})\bm f^{\mathrm{TE}}_{\ell m}=\frac{k}{2} \frac{\partial\bm f^{\mathrm{TE}}_{\ell m}}{\partial k}+\mu_{\ell }^{\mathrm{TE}}\bm f^{\mathrm{TE}}_{\ell m}$ for a certain $ \mu_{\ell}^{\mathrm{TE}}\in\mathbb R$,  we may argue that \begin{align}
\frac{k^3}{\ell(\ell+1)}\left\langle\frac{\partial\bm f^{\mathrm{TE}}_{\ell m}}{\partial k},\bm E-\hat P\bm E\right\rangle _V\bm f^{\mathrm{TE}}_{\ell m,0}={}&\frac{2}{k}\frac{\langle (\R\hat{\mathscr G})\bm f^{\mathrm{TE}}_{\ell m},\bm E-\hat P\bm E\rangle _V\hat\gamma_{S}\bm f^{\mathrm{TE}}_{\ell m}}{\langle\bm f^{\mathrm{TE}}_{\ell m},\bm f^{\mathrm{TE}}_{\ell m}\rangle _V}\notag\\={}&\frac{2}{k}\frac{\langle\bm f^{\mathrm{TE}}_{\ell m}, \hat\gamma_{S}(\R\hat{\mathscr G})(\hat I-\hat P)\bm E\rangle _V\bm f^{\mathrm{TE}}_{\ell m}}{\langle\bm f^{\mathrm{TE}}_{\ell m},\bm f^{\mathrm{TE}}_{\ell m}\rangle _V}
\end{align}and similarly for the TM wave modes. Thus, we have $ \hat g(\hat I-\hat P)=\frac{2}{k}\hat P \hat\gamma_{S}(\R\hat{\mathscr G})(\hat I-\hat P)$.
Furthermore, as we have $ (\R\hat{\mathscr G})\hat\gamma_{S}(\hat I-\hat P)\bm E=\mathbf 0$, the aforementioned relation  can also be written as $ \hat g(\hat I-\hat P)=\frac{2}{k}\hat P[\hat\gamma_S,\R\hat{\mathscr G}](\hat I-\hat P)$, with a commutator $ [\hat A,\hat B]\colonequals\hat A\hat B-\hat B\hat A$.

Meanwhile, differentiating both sides of \eqref{eq:gammaS_TE} with respect to $k$, we obtain\begin{align}&
\hat g \bm f^{\mathrm{TE}}_{\ell m}+\frac{2}{k} \hat\gamma_S\left[(\R\hat{\mathscr G})\bm f^{\mathrm{TE}}_{\ell m} -\mu_{\ell m}^{\mathrm{TE}}\bm f^{\mathrm{TE}}_{\ell m}\right]\notag\\={}& R[j_\ell(kR)kR]^2\bm f^{\mathrm{TE}}_{\ell m}+\frac{2}{k} \left[(\R\hat{\mathscr G})\hat\gamma_S\bm f^{\mathrm{TE}}_{\ell m} - \mu_{\ell m}^{\mathrm{TE}}\hat\gamma_S\bm f^{\mathrm{TE}}_{\ell m}\right],
\end{align} which is equivalent to $ (\hat g+\frac{2}{k}[\hat\gamma_{S},\R\hat{\mathscr G}])\bm f^{\mathrm{TE}}_{\ell m} =R[j_\ell(kR)kR]^2\bm f^{\mathrm{TE}}_{\ell m}$. Reasoning similarly for TM wave modes, we are led to $ \hat g\hat P=-\frac{2}{k}[\hat\gamma_{S},\R\hat{\mathscr G}]\hat P+\hat Q$, where\begin{align}\hat Q\bm E\colonequals{}&\frac1k\sum_{\ell=1}^\infty\sum_{m=-\ell}^\ell\left[q_\ell^{\mathrm{TE}}(kR)\frac{\langle\bm f^{\mathrm{TE}}_{\ell m},\bm E\rangle _V\bm f^{\mathrm{TE}}_{\ell m}}{\langle\bm f^{\mathrm{TE}}_{\ell m},\bm f^{\mathrm{TE}}_{\ell m}\rangle _V}+q_\ell^{\mathrm{TM}}(kR)\frac{\langle\bm f^{\mathrm{TM}}_{\ell m},\bm E\rangle _V\bm f^{\mathrm{TM}}_{\ell m}}{\langle\bm f^{\mathrm{TM}}_{\ell m},\bm f^{\mathrm{TM}}_{\ell m}\rangle _V}\right].\label{eq:g_Q}
\end{align} Since the left-hand side of  $ \hat P\hat g\hat P-\hat Q=-\frac{2}{k}\hat P[\hat\gamma_{S},\R\hat{\mathscr G}]\hat P$ defines a Hermitian operator, so must  its right-hand side. However,  from the relation   $ ( \hat P[\hat\gamma_S,\R\hat{\mathscr G}]\hat P)^*= \hat P([\hat\gamma_S,\R\hat{\mathscr G}])^*\hat P=- \hat P[\hat\gamma_S,\R\hat{\mathscr G}]\hat P$, we see that the Hermitian operator $ \hat P[\hat\gamma_S,\R\hat{\mathscr G}]\hat P$ is identically vanishing. Therefore, we have $g\hat P=(\hat I-\hat P)\hat g+\hat Q$. In particular, the transposes of \eqref{eq:g-gP_TE} and \eqref{eq:g-gP_TM} bring us \begin{align}\hat
g\bm f^{\mathrm{TE}}_{\ell m}={}&\frac{q_\ell^{\mathrm{TE}}(kR)}{k}\bm f^{\mathrm{TE}}_{\ell m}-\frac{k^{3}\Vert\bm f^{\mathrm{TE}}_{\ell m}\Vert^2_{L^2(V;\mathbb C^3)}}{\ell(\ell+1)}(\hat I-\hat P)\bm g^{\mathrm{TE}}_{\ell m},\label{eq:gP_TE}\\\hat g\bm f^{\mathrm{TM}}_{\ell m}={}&\frac{q_\ell^{\mathrm{TM}}(kR)}{k}\bm f^{\mathrm{TM}}_{\ell m}-\frac{k\Vert\bm f^{\mathrm{TM}}_{\ell m}\Vert^2_{L^2(V;\mathbb C^3)}}{\ell(\ell+1)}(\hat I-\hat P)\bm g^{\mathrm{TM}}_{\ell m}.\label{eq:gP_TM}
\end{align} Diagonalization of \eqref{eq:g-gP_TE}, \eqref{eq:g-gP_TM}, \eqref{eq:gP_TE} and \eqref{eq:gP_TM} then leads us to \eqref{eq:g_TE}--\eqref{eq:h_TM}.

It is  clear from Proposition \ref{prop:Mie_proj} that $ \lambda_{\ell m,+}^{\mathrm {TE}}=\lambda_{\ell0,+}^{\mathrm {TE}}\geq|\lambda_{\ell m,-}^{\mathrm {TE}}|>0$ and  $ \lambda_{\ell m,+}^{\mathrm {TM}}=\lambda_{\ell0,+}^{\mathrm {TM}}\geq|\lambda_{\ell m,-}^{\mathrm {TM}}|>0$. Moreover, the bound estimates in \eqref{eq:gTE_bd}, \eqref{eq:gTM_bd}, \eqref{eq:gammaS_bd} and \eqref{eq:Landau} bring us  \begin{align}
\max\{\lambda_{\ell 0,+}^{\mathrm{TE}},\lambda_{\ell 0,+}^{\mathrm{TM}}\}\leq\frac{(kR)^{4/3}+\sqrt{(kR)^{8/3}+4\left(\frac{3kR}{4}\right)^{4}}}{2k}\leq\frac{\frac{9(kR)^2}{16}+\frac{(kR)^{4/3}}{2}+\frac{2(kR)^{2/3}}{9}}{k},
\end{align}
as claimed in \eqref{eq:g_norm_bd}.
   \end{proof}
\begin{proposition}\label{prop:kRVgammaC}For $V=O(\mathbf0,R)$,  we have \begin{align}
\Vert\R\hat{\mathscr G}\Vert_{L^2(V;\mathbb C^3)}\leq{}&2+\frac{4 k R}{5}+\frac{1}{2\pi} \left(  \sqrt[3]{k R+\frac{1}{2}}+\frac{4}{9 \sqrt[3]{k R+\frac{1}{2}}} \right),\label{eq:ReG_bd}
\end{align}and the same bound applies to $ \Vert\R\hat{\gamma}\Vert_{L^2(V;\mathbb C^3)}=\Vert\hat{\gamma}_{C}\Vert_{L^2(V;\mathbb C^3)}$.\end{proposition}
\begin{proof}By \eqref{eq:sigma'_kappa} and  \eqref{eq:g_norm_bd}, we may put down\begin{align}
\left\vert\frac{1}{\pi}\mathscr P\int_{k-\varepsilon}^{k+\varepsilon}\frac{\sigma_{\bm E}(q)}{q-k}\D q\right\vert\leq\frac{2\varepsilon\langle\bm E,\bm E\rangle_{V}}{\pi}\left[ \frac{9(k+\varepsilon )R^2}{16}+\frac{(k+\varepsilon )^{1/3}R^{4/3}}{2}+\frac{2R^{2/3}}{9(k+\varepsilon )^{1/3}} \right]
\end{align}for $ \varepsilon\in(0,k)$, where we have additionally exploited the fact that $ \frac{(kR)^{4/3}+\sqrt{(kR)^{8/3}+4\left(\frac{3kR}{4}\right)^{4}}}{2k}$ is monotonically increasing in $k$. For $ \delta\in(\varepsilon,k)$, the inequality $ 0\leq\sigma_{\bm E}(q)\leq\frac{9qR}{16}\langle\bm E,\bm E\rangle_{V}$ [cf.\ \eqref{eq:gammaS_bd}] allows us to estimate\begin{align}
\left\vert\frac{1}{\pi}\left(\int_{k-\delta}^{k-\varepsilon}+\int_{k+\varepsilon}^{k+ \delta}\right)\frac{\sigma_{\bm E}(q)}{q-k}\D q\right\vert
\leq\frac{9R\langle\bm E,\bm E\rangle_{V}}{16\pi}\int_{k+\varepsilon}^{k+ \delta}\frac{q\D q}{q-k}= \frac{9R\langle\bm E,\bm E\rangle_{V}}{16\pi}\left( \delta-\varepsilon+k\log\frac{\delta}{\varepsilon} \right).\end{align}  In the meantime, the inequality $ 0\leq\sigma_{\bm E}(q)\leq\frac{q^{3}}{16\pi^{2}}\varoiint_{|\bm
n|=1}|\widetilde{\bm E}(q\bm n)|^2\D\Omega$ and the Parseval--Plancherel identity $\langle\bm E,\bm E\rangle_{V}=\frac{1}{(2\pi)^3}\int_0^{+\infty}[\varoiint_{|\bm
n|=1}|\widetilde{\bm E}(q\bm n)|^2\D\Omega]q^2\D q$  leave us the bounds \begin{align}
\left\vert\frac{1}{\pi}\left(\int_0^{k-\delta}+\int_{k+\delta}^{2k}\right)\frac{\sigma_{\bm E}(q)}{q-k}\D q\right\vert\leq\frac{2k}{\delta\pi}\int_{0}^{+\infty}\frac{\sigma_{\bm E}(q)}{q}\D q\leq\frac{k}{\delta}\langle\bm E,\bm E\rangle_{V},\label{eq:kdelta}
\end{align}\begin{align}0\leq
\frac{1}{\pi}\int_{2k}^{+\infty}\frac{\sigma_{\bm E}(q)}{q-k}\D q\leq\frac{2}{\pi}\int_{2k}^{+\infty}\frac{\sigma_{\bm E}(q)}{q}\D q\leq\frac{2}{\pi}\int_{0}^{+\infty}\frac{\sigma_{\bm E}(q)}{q}\D q=\langle\bm E,\bm E\rangle_{V},
\end{align}and\begin{align}
0\leq \frac{1}{\pi}\int_{0}^{+\infty}\frac{\sigma_{\bm E}(q)}{q+k}\D q\leq \frac{1}{\pi}\int_{0}^{+\infty}\frac{\sigma_{\bm E}(q)}{q}\D q=\frac{\langle\bm E,\bm E\rangle_{V}}{2},
\end{align}so the following bound estimate holds  for any $ \varepsilon\in(0,k),\delta\in(\varepsilon,k)$: \begin{align}
|\langle\bm E,(\R\hat{\mathscr G})\bm E\rangle_{V}|\leq{}&\frac{2\varepsilon\langle\bm E,\bm E\rangle_{V}}{\pi}\left[ \frac{9(k+\varepsilon )R^2}{16}+\frac{(k+\varepsilon )^{1/3}R^{4/3}}{2}+\frac{2R^{2/3}}{9(k+\varepsilon )^{1/3}} \right]\notag\\{}&+\frac{9R\langle\bm E,\bm E\rangle_{V}}{16\pi}\left( \delta-\varepsilon+k\log\frac{\delta}{\varepsilon} \right)+\frac{k}{\delta}\langle\bm E,\bm E\rangle_{V}+\langle\bm E,\bm E\rangle_{V}.
\end{align}In particular, setting  $ \delta=\frac{16 \pi }{9 R}$ and $\varepsilon=\frac{1}{2 R}$ to minimize $\frac{9kR}{16\pi}\log\delta+\frac{k}{\delta}$ and $ \frac{2\varepsilon}{\pi}\frac{9kR^2}{16}+\frac{9kR}{16\pi}\log\frac{1}{\varepsilon}$, we obtain \begin{align}
|\langle\bm E,(\R\hat{\mathscr G})\bm E\rangle_{V}|\leq \left[2+\frac{9 k R}{16\pi}\left( 2+\log\frac{32\pi}{9} \right)+\frac{1}{2\pi} \left(  \sqrt[3]{k R+\frac{1}{2}}+\frac{4}{9 \sqrt[3]{k R+\frac{1}{2}}} \right)\right]\langle\bm E,\bm E\rangle_{V}.
\end{align}Since $ \frac{9}{16\pi}\left( 2+\log\frac{32\pi}{9} \right)=0.79018+$, we arrive at \eqref{eq:ReG_bd} under the condition that  $ kR>\frac{16\pi}{9} $. For $ \frac{1}{2}<kR\leq\frac{16\pi}{9}$, one needs to  set   $ \delta=k$ and $\varepsilon=\frac{1}{2 R}$ while ignoring the $ \frac{k}{\delta}\langle\bm E,\bm E\rangle_{V}$ term  due to \eqref{eq:kdelta}, to deduce\begin{align}
|\langle\bm E,(\R\hat{\mathscr G})\bm E\rangle_{V}|\leq \left[1+\frac{9 k R}{8\pi}+\frac{9 k R}{16\pi}\log(2kR)+\frac{1}{2\pi} \left(  \sqrt[3]{k R+\frac{1}{2}}+\frac{4}{9 \sqrt[3]{k R+\frac{1}{2}}} \right)\right]\langle\bm E,\bm E\rangle_{V},
\end{align}where the bracketed expression does not exceed the right-hand side of  \eqref{eq:ReG_bd}.  For $ 0<kR\leq\frac{1}{2}$, Proposition \ref{prop:AreaBound} guarantees that $ \Vert\R\hat{\mathscr G}\Vert_{L^2(V;\mathbb C^3)}\leq1+\frac{3(kR)^{2} }{5\pi}\left(\frac{4\pi}{3}\right)^{2/3}$, which is  more stringent than the right-hand side of  \eqref{eq:ReG_bd}  in this regime.

One may adapt the analysis above to $ \langle\bm E,\hat \gamma_C\bm E\rangle_{V}=\frac{2}{\pi}\mathscr P\int_0^{+\infty} \frac{q\sigma_{\bm E}(q)}{q^{2}-k^2} \D q-\frac{2}{\pi}\int_{0}^{+\infty}\frac{\sigma_{\bm E}(q)}{q}\D q$.  \end{proof}

Combining the  results above with the conclusions from \S\ref{subsec:gamma_L2_bd}, we immediately arrive at  a  bound estimate for   $\Vert\hat{ \mathscr G}\Vert_{L^2(V;\mathbb C^3)}\leq\Vert\R\hat{ \mathscr G}\Vert_{L^2(V;\mathbb C^3)}+\Vert\I\hat{ \mathscr G}\Vert_{L^2(V;\mathbb C^3)}$ that combines \eqref{eq:Re_gamma_norm_bd} and \eqref{eq:Im_gamma_norm_bd}, as well as a  bound  estimate for  $\Vert\hat{ \gamma}\Vert_{L^2(V;\mathbb C^3)}$ stated in \eqref{eq:gamma_norm_bd_statement}.

\subsection{Hilbert--Schmidt bounds for $   \Vert\hat\gamma\Vert_2$\label{subsec:HS_bd}}The right-hand side of \eqref{eq:quad_surf_int} is manifestly invariant under a similarity transformation of the boundary surface $ \partial V$.
Thus, if we fix the wavelength $ 2\pi/k$ and consider a family of similar shapes of dielectrics, then the dominant contribution to the Hilbert--Schmidt bound of $\Vert (\hat I+2\hat{\mathscr G}-2\hat\gamma)^2(\hat{\mathscr G}-\hat\gamma)\Vert_2$ for large volumes will come from $ \Vert\hat \gamma\Vert_2$.

To prove the upper bound estimate of  $ \Vert\hat \gamma\Vert_2$ stated in Theorem~\ref{thm:HS_norm}, we need some preparations.

\begin{lemma}
For
the  $3\times3$  matrix $\hat{\Gamma}(\bm r,\bm r')$ defined in \eqref{eq:Gamma_mat}, we have\begin{align}\Tr[\hat {\Gamma}^*(\bm r,\bm r')\hat {\Gamma}(\bm r,\bm r')]\leq{}&\frac{9k^4}{64\pi^2|\bm r-\bm r' |^2}.\end{align} \end{lemma}\begin{proof}
We note that for $a,b\in\mathbb C $, there is an algebraic identity $\Tr[(a\mathbf1+b\hat{\bm r}\hat{\bm r}^\mathrm{T})^*(a\mathbf1+b\hat{\bm r}\hat{\bm r}^\mathrm{T})]=\Tr[|a|^{2}\mathbf1+2\R(a^{*}b)\hat{\bm r}\hat{\bm r}^\mathrm{T}+|b|^{2}\hat{\bm r}\hat{\bm r}^\mathrm{T}]=3|a|^{2}+2\R(a^{*}b)+|b|^{2}$. In particular, this brings us $ \Tr[\hat {\Gamma}^*(\bm r,\bm r')\hat {\Gamma}(\bm r,\bm r')]=\left(\frac{k^{2}}{4\pi |\bm r-\bm r'|}\right)^2f(k|\bm r-\bm r'|)$ where $ f(\xi)\colonequals\frac{2}{\xi^{4}}[\xi ^4+\xi ^2+2 (\xi ^2-3) \cos \xi +6(1- \xi  \sin \xi)]$.

For all $ \xi>0$, we can establish a  bound estimate $ f(\xi)\leq\frac94$ on Taylor expansions with Lagrange  remainders and routine maximization procedures for Laurent polynomials in the variable $\xi$. Concretely speaking, for $ \xi\in(0,\sqrt{3}]$, we have $
f(\xi)\leq{}\frac{2}{\xi^{4}}\big[\xi ^4+\xi ^2+2 (\xi ^2-3)\big( 1-\frac{\xi^{2}}{2} \big) +6- 6\xi  \big( \xi-\frac{\xi^3}{6}\big)\big]=2<\frac{9}{4}$;
for $ \xi\in[\sqrt{3},\pi]$, we have $ f(\xi)\leq f_{1}(\xi)\colonequals\frac{2}{\xi^{4}}\big[\xi ^4+\xi ^2+2 (\xi ^2-3)\frac{(\xi -\pi )^2-2}{2}+6- 6\xi(\xi-\pi)\frac{(\xi -\pi )^2-6}{6}  \big]\leq  f_{1}(\pi)=f(\pi)<\frac{9}{4}$;
for $ \xi\in[\pi,\frac{3 \pi }{2}]$, we have $ f(\xi)\leq f_2(\xi)\colonequals\frac{2}{\xi^{4}}\big\{\xi ^4+\xi ^2+2 (\xi ^2-3)\big[\left(\xi -\frac{3 \pi }{2}\right)-\frac{1}{6} \left(\xi -\frac{3 \pi }{2}\right)^3\big]+6+ 6\xi  \big[1-\frac{1}{2} \left(\xi -\frac{3 \pi }{2}\right)^2+\frac{1}{24} \left(\xi -\frac{3 \pi }{2}\right)^4\big]\big\}\leq f_{2}(\frac{3 \pi }{2})=f(\frac{3 \pi }{2})<\frac{9}{4}$; for $ \xi\in[\frac{3 \pi }{2},2(\sqrt[3]2+\sqrt[3]4)]$, we have $ f(\xi)\leq f_{3}(\xi)\colonequals\frac{2}{\xi^{4}}[\xi ^4+\xi ^2+2 (\xi ^2-3)(\xi -\frac{3 \pi }{2} )+6(1+ \xi)]\leq f_{3}(2(\sqrt[3]2+\sqrt[3]4))<\frac{9}{4}$; for $ \xi\geq2(\sqrt[3]2+\sqrt[3]4)$, we have $
f(\xi)\leq{}\frac{2}{\xi^{4}}\left[\xi ^4+\xi ^2+2 (\xi ^2-3) +6(1+\xi)  \right]=2+\frac{6}{\xi ^2}+\frac{12}{\xi ^3}\leq2+\frac{6}{[2(\sqrt[3]2+\sqrt[3]4)] ^2}+\frac{12}{[2(\sqrt[3]2+\sqrt[3]4)] ^3}=\frac{9}{4}$.    \end{proof}
\begin{remark}Unlike what we did in \S\S\ref{subsec:gamma_L2_bd}--\ref{subsec:kRgammaC}, we do not need separate treatments of $ \Tr[\hat {\Gamma}^*(\bm r,\bm r')\hat {\Gamma}(\bm r,\bm r')]$ for small and large scatterers. This is because both  $ \lim_{\xi\to0}f(\xi)=\frac32$ and $ \lim_{\xi\to+\infty}f(\xi)=2$ are close to the value of $ \max_{\xi>0}f(\xi)=2.2315+$.
\eor\end{remark}

Now we may turn to the functional
\begin{align} I[V]=\sqrt{\iiint_{ V}\left\{ \iiint_V \frac{1}{|\bm r-\bm r' |^2}\D^3\bm r'\right\}\D^3\bm r},\end{align}
so that we have  the Hilbert--Schmidt bound\begin{align}   \Vert\hat \gamma\Vert_{2}\leq\sqrt{\iiint_{ V}\left\{ \iiint_V \Tr[\hat {\Gamma}^*(\bm r,\bm r')\hat {\Gamma}(\bm r,\bm r')]\D^3\bm r'\right\}\D^3\bm r}=\frac{3k^2}{8\pi}I[V],\end{align}
\begin{lemma}
We have
\begin{align}I[V]\leq\left(\sqrt2\pi\iiint_V\D^3\bm r\right)^{2/3}\end{align}
and consequently, we have  $\Vert\hat \gamma\Vert_{2}\leq\frac{1}{3}\left(k^3\iiint_V\D^3\bm r\right)^{2/3}$.\end{lemma}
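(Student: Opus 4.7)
My plan is to reduce the estimation of $I[V]$ to an explicit integration over a ball via the Riesz rearrangement inequality. Since the kernel $|\bm r-\bm r'|^{-2}$ is already symmetric and radially decreasing, simultaneously rearranging both factors of $\chi_V$ yields
\[I[V]^2 \leq \iiint_{B}\iiint_{B}\frac{\D^3\bm r\,\D^3\bm r'}{|\bm r-\bm r'|^2},\]
where $B$ is the origin-centered ball of radius $R$ with $\frac{4\pi}{3}R^3 = \iiint_V\D^3\bm r$.

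I would then evaluate the ball integral by direct computation. In spherical coordinates the angular portion of the inner integration collapses via a single logarithm to $2\pi R + \pi(R^2-r^2)r^{-1}\ln\frac{R+r}{R-r}$, where $r = |\bm r|$, a familiar classical expression. The outer $r$-integration is then handled either by a power-series expansion of $\ln\frac{1+r}{1-r}$ with telescoping partial fractions, or by a textbook integration by parts; in either case it collapses to the clean closed form $I[B]^2 = 4\pi^2 R^4$. Converting $R$ back to the volume gives $I[V]^2 \leq (81\pi^2/4)^{1/3}\bigl(\iiint_V\D^3\bm r\bigr)^{4/3}$, and to reach the stated bound it suffices to verify $(81\pi^2/4)^{1/3} \leq (\sqrt{2}\pi)^{4/3}$, i.e.\ $81 \leq 16\pi^2$, which is immediate.

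The ``consequently'' clause on $\Vert\hat\gamma\Vert_2$ then follows by substituting the $I[V]$ bound into the triangle inequality $\Vert\hat\gamma\Vert_2 \leq (2\sqrt{3}+\sqrt{6})k^2I[V]/(8\pi)$ developed in the preceding paragraphs, and confirming the numerical inequality $(2\sqrt{3}+\sqrt{6})\cdot 2^{1/3}\pi^{2/3} \leq 16$, which upon cubing and expanding $(2\sqrt{3}+\sqrt{6})^3 = 60\sqrt{3}+42\sqrt{6}$ reduces to a direct comparison of a rational-plus-radical quantity against $2048/\pi^2$. The main obstacle I anticipate is the tightness of this final numerical step---the slack is on the order of a percent, so the computation must be carried through accurately. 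An alternative, more abstract route bypasses both the explicit ball integration and the tight numerical check by citing Lieb's sharp Hardy-Littlewood-Sobolev constant, which for $n=3$, $\lambda=2$, $p=q=3/2$ equals exactly $2^{2/3}\pi^{4/3}=(\sqrt{2}\pi)^{4/3}$, matching the stated bound in one step at the cost of invoking a heavier external tool.
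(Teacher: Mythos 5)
Your primary route is correct, but it is genuinely different from the paper's. The paper obtains $I[V]\leq(\sqrt2\pi\iiint_V\D^3\bm r)^{2/3}$ in one stroke from the Hardy--Littlewood--Sobolev inequality with Lieb's sharp constant (the ``heavier external tool'' you flag as an alternative is in fact exactly the paper's argument, with $d=3$, $m=2$, $p=q=3/2$ and $f=h=\chi_V$, the sharp constant being $2^{2/3}\pi^{4/3}=(\sqrt2\pi)^{4/3}$ as you state). You instead invoke the Riesz rearrangement inequality --- legitimate here, since the kernel $|\bm r-\bm r'|^{-2}$ is already symmetric decreasing, so only the two characteristic functions get rearranged --- and then evaluate the ball integral in closed form. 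Your intermediate quantities all check out: the inner integral $2\pi R+\pi(R^2-r^2)r^{-1}\ln\frac{R+r}{R-r}$, the value $I[B]^2=4\pi^2R^4$ (which agrees with the exact identity $I[O(\mathbf 0,R)]=2\pi R^2$ that the paper derives independently, by Fourier methods, in the lemma that follows), the reduction to $81\leq 16\pi^2$, and the expansion $(2\sqrt3+\sqrt6)^3=60\sqrt3+42\sqrt6$. What your approach buys is a strictly sharper constant, $(81\pi^2/4)^{1/6}\approx 2.53$ versus $(\sqrt2\pi)^{2/3}\approx 2.70$ per unit $(\iiint_V\D^3\bm r)^{2/3}$, so the razor-thin margin you worry about in the last numerical step (the paper itself records it as $1.997+\leq 2$) becomes comfortable if you carry your own constant through rather than first relaxing to the stated $I[V]$ bound; the cost is the explicit radial integration and the rearrangement input. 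Both arguments close the ``consequently'' clause identically, by feeding the $I[V]$ estimate into $\Vert\hat\gamma\Vert_2\leq(2\sqrt3+\sqrt6)k^2I[V]/(8\pi)$ from the preceding paragraph.
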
\begin{proof}Appealing again to  the  Hardy--Littlewood--Sobolev inequality \cite{HL1,HL2,Sob} with Lieb's sharp constant (see \cite[Theorem 3.1]{Lieb1983}, as well as \cite[Theorem~4.3]{Lieb})
\begin{align}&\left\vert\int_{\mathbb R^d}\int_{\mathbb R^d}\frac{f(\bm r')h(\bm r)}{|\bm r-\bm r'|^m}\D^d\bm r'\D^d\bm r\right\vert\notag\\\leq {}&\pi^{m/2}\frac{\Gamma(\frac{d-m}{2})}{\Gamma(d-\frac{m}{2})}\left[ \frac{\Gamma(\frac{d}{2})}{\Gamma(d)} \right]^{-1+\frac{m}{d}}\Vert f\Vert_{L^{2d/(2d-m)}(\mathbb R^d;\mathbb C)}\Vert h\Vert_{L^{2d/(2d-m)}(\mathbb R^d;\mathbb C)},\notag\\&\text{where }d=3,m=2,\frac{2d}{2d-m}=\frac{3}{2};\quad f(\bm r)=h(\bm r)=\begin{cases}1, & \bm r\in V \\
0, & \bm r\notin V \\
\end{cases},\end{align}
we may arrive at the following estimate
\begin{align}(I[V])^2=\iiint_{ V}\left\{ \iiint_V \frac{1}{|\bm r-\bm r' |^2}\D^3\bm r'\right\}\D^3\bm r\leq\left(\sqrt2\pi\iiint_V\D^3\bm r\right)^{4/3}.\end{align}Accordingly, the estimates on  $\Vert\hat \gamma\Vert_{2} $ derives from $0.322+= \frac{3}{8\pi}(\sqrt2\pi)^{2/3}\leq\frac{1}{3}$. \end{proof}\begin{remark}From this lemma, we have $ I[O(\mathbf0,R)]/ R^2\leq7.022+$, which may be compared to the exact value  $ I[O(\mathbf0,R)]/ R^2=2\pi$ obtained in the next lemma.\eor\end{remark}

\begin{lemma}
We have
\begin{align}I[V]=\sqrt{\iiint_{\mathbb R^3}\frac{|\iiint_V e^{i\bm q\cdot\bm r}\D^3\bm r|^2}{4\pi|\bm q|}\D^3\bm q}\end{align}
and in particular, we have the identity $ I[O(\mathbf0,R)]=2\pi R^2$ for all $R>0$, and $\Vert\hat \gamma\Vert_{2}\leq\frac{3}4(kR_V)^2$.
\end{lemma}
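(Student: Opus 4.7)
The plan is to establish the Fourier representation through Plancherel's theorem, evaluate $I[O(\mathbf 0, R)]$ explicitly via a spherical Bessel-function identity, and then exploit positivity of the kernel to pass to an arbitrary bounded $V$. First I would view $(I[V])^{2}=\iiint_{V}\iiint_{V} |\bm r-\bm r'|^{-2}\D^{3}\bm r\D^{3}\bm r'$ as the $L^{2}(\mathbb R^{3})$ pairing of the characteristic function $\mathbf 1_{V}$ with the convolution $|\bm r|^{-2}\ast\mathbf 1_{V}$. The tempered-distribution Fourier identity $\iiint_{\mathbb R^{3}}|\bm r|^{-2}e^{-i\bm q\cdot \bm r}\D^{3}\bm r = 2\pi^{2}/|\bm q|$ (derived in spherical coordinates from $\int_0^\infty \sin(qr)/r\,\D r=\pi/2$) together with Plancherel then produces $(I[V])^{2}=(2\pi)^{-3}\iiint_{\mathbb R^{3}}|\iiint_V e^{-i\bm q\cdot\bm r}\D^{3}\bm r|^{2}(2\pi^{2}/|\bm q|)\D^{3}\bm q$; since the modulus $|\iiint_V e^{-i\bm q\cdot \bm r}\D^{3}\bm r|$ is invariant under $\bm q\mapsto-\bm q$, this coincides with the stated representation.

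Next I would specialize to $V=O(\mathbf 0, R)$. The standard closed form $\iiint_{|\bm r|\leq R}e^{i\bm q\cdot\bm r}\D^{3}\bm r = 4\pi[\sin(qR)-qR\cos(qR)]/q^{3}$ with $q=|\bm q|$, combined with the substitution $u=qR$ in the radial direction, yields $(I[O(\mathbf 0, R)])^{2}=16\pi^{2}R^{4}\int_{0}^{\infty}[\sin u-u\cos u]^{2}u^{-5}\D u$. Identifying the integrand as $j_{1}(u)^{2}/u$, where $j_{1}$ is the order-one spherical Bessel function, and invoking the classical Weber-type identity $\int_{0}^{\infty}j_{\ell}^{2}(u)u^{-1}\D u = 1/[2\ell(\ell+1)]$ at $\ell=1$, the tail integral evaluates to $1/4$, so $(I[O(\mathbf 0, R)])^{2}=4\pi^{2}R^{4}$, giving $I[O(\mathbf 0, R)] = 2\pi R^{2}$ exactly.

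For arbitrary bounded $V$ I would exploit the positivity of $|\bm r-\bm r'|^{-2}$, which makes $V\mapsto (I[V])^{2}$ monotone increasing under set inclusion. By definition of $R_{V}$, there exists $\bm r_{0}\in \mathbb R^{3}$ with $V\cup\partial V\subseteq\overline{O(\bm r_{0},R_{V})}$, and translation invariance of the kernel yields $I[V]\leq I[O(\bm r_{0},R_{V})]=I[O(\mathbf 0,R_{V})]=2\pi R_{V}^{2}$. Combined with the inequality $\Vert\hat\gamma\Vert_{2}\leq (2\sqrt 3+\sqrt 6)k^{2}I[V]/(8\pi)$ from the preceding paragraph and the numerical bound $(2\sqrt 3+\sqrt 6)/4\leq 3/2$, this gives the claimed $\Vert\hat\gamma\Vert_{2}\leq (3/2)(kR_{V})^{2}$.

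The main obstacle is the Bessel-function evaluation $\int_{0}^{\infty}j_{1}(u)^{2}u^{-1}\D u = 1/4$: if not invoked as a standard Weber-Schafheitlin integral, it would need a short self-contained derivation, for instance via Parseval applied to the Hankel transform or by repeated integration by parts starting from $j_{1}=-j_{0}'$, each requiring some care with the slow $O(1/u)$ decay of $j_{1}$ at infinity in order to justify convergence and interchange of integration orders.
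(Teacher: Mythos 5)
Your proposal is correct and follows essentially the same route as the paper: the Fourier transform $\widehat{|\cdot|^{-2}}=2\pi^2/|\bm q|$ plus Parseval--Plancherel for the representation of $I[V]$, the closed form of $\iiint_{O(\mathbf 0,R)}e^{i\bm q\cdot\bm r}\D^3\bm r$ together with the Weber--Schafheitlin value $\int_0^{+\infty}j_1^2(u)u^{-1}\D u=1/4$ for the sphere, and monotonicity of $I$ under inclusion in the circumscribed ball for the final bound $(2\sqrt3+\sqrt6)/4\leq 3/2$. The only cosmetic difference is that the paper additionally cross-checks $I[O(\mathbf 0,R)]=2\pi R^2$ via a spherical-harmonic expansion, which your argument does not need.
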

\begin{proof}We start with the Fourier inversion formula
\begin{align}\iiint_V \frac{1}{|\bm r-\bm r' |^2}\D^3\bm r'=\frac{1}{(2\pi)^3}\Fint_{\mathbb R^3}\frac{2\pi^2\iiint_V e^{i\bm q\cdot\bm r'}\D^3\bm r'}{|\bm q|}e^{-i\bm q\cdot\bm r}\D^3\bm q\end{align}where the convolution kernel satisfies\begin{align}\lim_{\varepsilon\to0^+}\iiint_{\mathbb R^3}\frac{e^{-\varepsilon|\bm r|}}{|\bm r|^2}e^{i\bm q\cdot\bm r}\D^3\bm r=4\pi\lim_{\varepsilon\to0^+}\int_0^{+\infty}\frac{e^{-\varepsilon|\bm r|}\sin(|\bm q||\bm r|)}{|\bm q||\bm r|}\D|\bm r|=\frac{2\pi^2}{|\bm q|}.\end{align}Then, applying the Parseval--Plancherel identity, we obtain the Fourier representation of $I[V]$ as claimed.

For  $V=O(\mathbf0,R)$, we have \begin{align}\iiint_{V=O(\mathbf0,R)} e^{i\bm q\cdot\bm r'}\D^3\bm r'=4\pi\frac{\sin(|\bm q|R)-|\bm q|R\cos(|\bm q|R)}{|\bm q|^3}=4\pi R^{3}\frac{j_1(|\bm q|R)}{|\bm q|R},\end{align}where $j_1(x) $ is the spherical Bessel function of the first order. Consequently, we may derive $I[O(\mathbf0,R)]=2\pi R^2 $ from the identity $\int_0^{+\infty} j_1^2(x)/x\D x=1/4 $, which is a specific case of the Weber--Schafheitlin integral \cite[pp.~402--404]{Watson}.
We may check this against  the spherical harmonic expansion
\begin{align}\frac{1}{|\bm r-\bm r' |^2}=\left|4\pi\sum_{\ell=0}^\infty\sum_{m=-\ell}^\ell\frac{Y_{\ell m}^{*}(\theta,\phi)Y_{\ell m}^{\phantom{*}}(\theta',\phi')}{(2\ell+1)\max(|\bm r|,|\bm r'|)}\left[\frac{\min(|\bm r|,|\bm r'|)}{\max(|\bm r|,|\bm r'|)}\right]^\ell\right|^2,\label{eq:Ylm_expn}\end{align}which integrates to \begin{align}&\iiint_{ O(\mathbf0,R)}\left\{ \iiint_{O(\mathbf0,R)} \frac{1}{|\bm r-\bm r' |^2}\D^3\bm r'\right\}\D^3\bm r\notag\\={}&(4\pi)^{2}\sum_{\ell=0}^\infty\frac{1}{(2\ell+1)}\int_{0 }^R\left\{ \int_{0 }^R \frac{[\min(|\bm r|,|\bm r'|)]^{2\ell}}{[\max(|\bm r|,|\bm r'|)]^{2\ell+2}}|\bm r'|^2\D|\bm r'|\right\}|\bm r|^2\D|\bm r|\notag\\={}&(4\pi)^{2}R^4\sum_{\ell=0}^\infty\frac{1}{2(2\ell+1)(2\ell+3)}=(2\pi R^2)^2.\end{align}

Suppose that we have  chosen the origin $\bm r=\mathbf 0 $ of the coordinate system to coincide with the center of the circumscribed sphere of radius $R_V$, then we have the relations $ V\subset O(\mathbf0,R_V)$, $I_{}[V]\leq I_{}[O(\mathbf0,R_V)] $, and
 this eventually establishes $\Vert\hat \gamma\Vert_{2}\leq\frac34(kR_V)^2$.
 \end{proof}

\subsection{Applications to error analysis of Born approximation}The $ L^2$-norm bounds of $ \Vert\hat \gamma\Vert_{L^2(V;\mathbb C^3)}$ established in \S\S\ref{subsec:gamma_L2_bd}--\ref{subsec:totalsc} immediately bring us the following result concerning the solvability of the Born equation in the perturbative regime. \begin{corollary}If $\chi\in\mathbb C$ satisfies the inequality
 \begin{footnotesize}\begin{align} |\chi|^2\left( 1+\min\left\{ \frac{3 }{5\pi}\left(k^3\iiint_V\D^3\bm r\right)^{2/3},1+\frac{4 k R_{V}}{5}+\frac{1}{2\pi} \left(  \sqrt[3]{k R_{V}+\frac{1}{2}}+\frac{4}{9 \sqrt[3]{k R_{V}+\frac{1}{2}}} \right)\right\} \right)<|\R\chi|,\label{ineq:gammaC_chi}\end{align}\end{footnotesize}then the Born operator $\hat{\mathscr B}=\hat I-\chi\hat{\mathscr G}\colon L^2(V;\mathbb C^3)\longrightarrow L^2(V;\mathbb C^3)$ is invertible, with a norm estimate of the inverse operator given by
\begin{small}\begin{align}\Vert  (\hat I-\chi\hat{\mathscr G})^{-1}\Vert_{L^2(V;\mathbb C^3)}  \leq {}&\frac{|\chi|}{|\R\chi|}\left[ 1-\frac{|\chi|^2}{|\R\chi|}\left( 1+\min\left\{ \frac{3 }{5\pi}\left(k^3\iiint_V\D^3\bm r\right)^{2/3},1+\frac{4 k R_{V}}{5}\right.\right.\right.\notag\\{}&\left.\left.\left.+\frac{1}{2\pi} \left(  \sqrt[3]{k R_{V}+\frac{1}{2}}+\frac{4}{9 \sqrt[3]{k R_{V}+\frac{1}{2}}} \right)  \right\} \right) \right]^{-1},\label{eq:invBorn_norm}\end{align}\end{small}where $ R_V$ is the minimal radius of all circumscribed spheres.\end{corollary}\begin{proof}For the Hermitian operator $\hat\gamma_S\colon L^2(V;\mathbb C^3)\longrightarrow L^2(V;\mathbb C^3)$, we naturally  have a resolvent bound $\Vert(\hat I+i\chi\hat\gamma_S)^{-1}\Vert_{L^2(V;\mathbb C^3)} \leq1/|i\chi\I\frac{1}{i\chi}| =\frac{|\chi|}{|\R \chi|}$ \cite[p.~210, (2)]{Yosida} so long as $ \R\chi\neq0$. Assuming  \eqref{ineq:gammaC_chi}, we have
$ 1-|\chi|P>0$  for    $ P=\Vert(\hat I+i\chi\hat\gamma_S)^{-1}\Vert_{L^2(V;\mathbb C^3)}\Vert\hat{\mathscr G}+i\hat\gamma_{S}\Vert_{L^2(V;\mathbb C^3)}\leq\frac{|\chi|}{|\R \chi|}\Vert\R\hat{\mathscr G}\Vert_{L^2(V;\mathbb C^3)}$, so that $ [\hat I-(\hat I+i\chi\hat\gamma_S)^{-1}\chi(\hat{\mathscr G}+i\hat\gamma_S)]^{-1}$ is bounded, whose  norm does not exceed $(1-|\chi|P)^{-1}$. Consequently, \begin{align} (\hat I-\chi\hat{\mathscr G})^{-1}=[\hat I-(\hat I+i\chi\hat\gamma_S)^{-1}\chi(\hat{\mathscr G}+i\hat\gamma_S)]^{-1}(\hat I+i\chi\hat\gamma_S)^{-1}\end{align}is well-defined,
and the left-hand side of \eqref{eq:invBorn_norm} is bounded by $\Vert [\hat I-(\hat I+i\chi\hat\gamma_S)^{-1}\chi(\hat{\mathscr G}+i\hat\gamma_S)]^{-1}\Vert_{L^2(V;\mathbb C^3)}\frac{|\chi|}{|\R \chi|} \leq \frac{|\chi|}{|\R \chi|}(1-|\chi|P)^{-1}$. Thus, the right-hand side of \eqref{eq:invBorn_norm} follows from Propositions \ref{prop:AreaBound}, \ref{prop:kRVgammaS} and \ref{prop:kRVgammaC}.    \end{proof}

Recalling \eqref{eq:Born_approx_err_est} and the computations in  \S\S\ref{subsec:gamma_L2_bd}--\ref{subsec:kRgammaC}, we see that the  the truncation error of Born series can be estimated with an upper bound of $ \Vert \hat{\mathscr G}\Vert_{L^2(V;\mathbb C^3)}\leq\Vert\R\hat{ \mathscr G}\Vert_{L^2(V;\mathbb C^3)}+\Vert\I\hat{ \mathscr G}\Vert_{L^2(V;\mathbb C^3)}$ stated in  \eqref{eq:Re_gamma_norm_bd} and \eqref{eq:Im_gamma_norm_bd}.
  In parallel, the total scattering cross-section satisfies
\begin{align}0\leq{}&-\I\langle \bm E,\hat{\mathscr G}\bm E\rangle_V\colonequals \langle \bm E,\hat{\gamma}_S\bm E\rangle_V\leq\Vert\hat {\gamma}_S\Vert_{L^2(V;\mathbb C^3)} \langle \bm E,\bm E\rangle_V,\end{align}where an upper bound for  $\Vert\hat {\gamma}_S\Vert_{L^2(V;\mathbb C^3)}=\Vert\I\hat{ \mathscr G}\Vert_{L^2(V;\mathbb C^3)}$ is given in \eqref{eq:Im_gamma_norm_bd}.

In practice, one usually truncates the Born series at the $m=1$ term \cite{BornWolf} due to analytical challenges and/or numerical costs of evaluating $ \hat{\mathscr G}^m\bm E_\mathrm{inc}$ for $m\geq2$. Thus, the terminology ``Born approximation'' sometimes exclusively refers to the first two leading terms in the Born series  $\bm E\sim \bm E_\mathrm{inc}+\chi\hat{\mathscr G}\bm E_\mathrm{inc}$.

For an arbitrarily shaped dielectric with real-valued  $\chi$, we now give  a brief  algebraic derivation for the error bound of total scattering cross-section in the  Born approximation. Recalling  \eqref{eq:GOT1}, we see that the true value of the total scattering cross-section is given by \begin{align} \I \langle\bm E_\mathrm{inc},-\chi k(\hat I-\chi\hat{\mathscr G})^{-1}\bm E_\mathrm{inc} \rangle_{V}=\I \langle\bm E-\chi\hat{\mathscr G}\bm E ,-\chi k\bm E \rangle_{V}=-\chi^2k\I\langle\bm E,\hat{\mathscr G}\bm E \rangle_{V}\end{align} for real-valued $\chi$. While the Born approximation leads to
\begin{align}\I \langle\bm E_\mathrm{inc},-\chi k(\hat I-\chi\hat{\mathscr G})^{-1}\bm E_\mathrm{inc} \rangle_{V}\sim{}&\I \langle\bm E_\mathrm{inc},-\chi k(\hat I+\chi\hat{\mathscr G})\bm E_\mathrm{inc} \rangle_{V}\notag\\={}&-\chi^2k\I\langle\bm E_\mathrm{inc},\hat{\mathscr G}\bm E_\mathrm{inc} \rangle_{V} .\end{align}
  The  error bound for the Born approximation of total scattering cross-section is thus given by
{\allowdisplaybreaks\begin{align}&\chi^{2}k|\I\langle\bm E_\mathrm{inc},\hat{\mathscr G}\bm E_\mathrm{inc} \rangle_{V}-\I\langle\bm E,\hat{\mathscr G}\bm E \rangle_{V}|\notag\\={}&\chi^{2}k|\I\langle\bm E-\chi\hat{\mathscr G}\bm E,\hat{\mathscr G} (\bm E-\chi\hat{\mathscr G}\bm E) \rangle_{V}-\I\langle\bm E,\hat{\mathscr G}\bm E \rangle_{V}|\notag\\={}&\chi^2k|\I\langle\hat{\mathscr G}\bm E,\chi^2\hat{\mathscr G} (\hat{\mathscr G}\bm E) \rangle_{V}-\I\langle\bm E,\chi\hat{\mathscr G} (\hat{\mathscr G}\bm E) \rangle_{V}|\notag\\={}&\chi^2k|-\langle\hat{\mathscr G}\bm E,\chi^2\hat{\gamma}_S (\hat{\mathscr G}\bm E) \rangle_{V}-\I\langle2i\hat\gamma_S\bm E,\chi(\hat{\mathscr G}\bm E) \rangle_{V}|\notag\\\leq{}&\chi^2k\left[ \chi^2\langle\hat{\mathscr G}\bm E, \hat{\mathscr G}\bm E \rangle_{V}\Vert\hat {\gamma}_S\Vert_{L^2(V;\mathbb C^3)} +2|\chi|\Vert\hat {\gamma}_S\Vert_{L^2(V;\mathbb C^3)} \Vert\hat {\mathscr G}\Vert_{L^2(V;\mathbb C^3)}\langle\bm E, \bm E \rangle_{V}\right]\notag\\\leq{}&\frac{|\chi|^{3}k\Vert\hat {\gamma}_S\Vert_{L^2(V;\mathbb C^3)} \langle\bm E_\mathrm{inc},\bm E_\mathrm{inc} \rangle_{V}}{(1-\chi\Vert\hat {\mathscr G}\Vert_{L^2(V;\mathbb C^3)})^{2}}[|\chi|\Vert\hat {\mathscr G}\Vert_{L^2(V;\mathbb C^3)}^{2}+2\Vert\hat {\mathscr G}\Vert_{L^2(V;\mathbb C^3)}].\label{eq:SCbound}\end{align}}

\subsection{Applications to Mie scattering}One can illustrate the relative  tightness of the bounds in Theorems~\ref{thm:L2_op_norm} and \ref{thm:HS_norm}  with some numerical examples.

We recall  from Proposition \ref{prop:Mie_res} that the Mie series in \eqref{eq:MieSeries} diverges when $1/\chi=\lambda\in\sigma^\Phi(\hat{\mathscr G})\smallsetminus\{0,-1/2\}\subset\sigma(\hat{\mathscr G})$ represents a Mie resonance mode.
When we consider $ \sigma^\Phi(\hat{\mathscr G})$ for spherical scatterers, there is a refinement of \eqref{eq:Re_gamma_norm_bd}, as expounded in the next lemma. \begin{lemma}\label{lm:Phi_sphere}We have\begin{align}&\max_{\lambda\in\sigma^{\Phi}(\hat{\mathscr G})}|\R\lambda|\leq\Vert\R\hat {\mathscr G}\Vert_{\Phi(O(\mathbf 0,R);\mathbb C^3)}\notag\\\leq{}&\min\left\{\frac{1}{2}+ \frac{3(kR)^{2} }{5\pi}\left(\frac{4\pi}{3}\right)^{2/3},2+\frac{4}{5} kR+\frac{1}{2\pi} \left(  \sqrt[3]{k R+\frac{1}{2}}+\frac{4}{9 \sqrt[3]{k R+\frac{1}{2}}} \right)\right\}.
\tag{\ref{eq:Re_gamma_norm_bd}$'$}\label{eq:Re_gamma_norm_bd'}
\end{align}\end{lemma}\begin{proof}To establish the bound in \eqref{eq:Re_gamma_norm_bd'}, we need to show that   $\Vert\hat{ \mathscr G}-\hat{ \gamma}\Vert_{\Phi(O(\mathbf0,R);\mathbb C^3)}=\frac12$, which sharpens $ \Vert\hat{ \mathscr G}-\hat{ \gamma}\Vert_{L^{2}(V;\mathbb C^3)}\leq1$. Here, we note that $\bm F_{\ell m}(\bm r)=\bm F_{\ell m}(|\bm r|,\theta,\phi)=\nabla[|\bm r|^{\ell}Y_{\ell m}(\theta,\phi)] $ satisfies the the following eigenequation
\begin{align}&(( \hat \gamma-\hat {\mathscr G})\bm F_{\ell m})(\bm r)=\nabla\varoiint_{\partial O(\mathbf 0,R)}\frac{\bm n'\cdot\bm F_{\ell m}(\bm r')}{4\pi|\bm r-\bm r'|}\D S'\notag\\={}&\nabla\sum_{\ell'=0}^\infty\sum_{m'=-\ell'}^{\ell'}\varoiint_{\partial O(\mathbf 0,R)}\frac{\bm n'\cdot\bm F_{\ell m}(\bm r')|\bm r|^{\ell}Y_{\ell' m'}^{*}(\theta',\phi')Y_{\ell' m'}^{\phantom{*}}(\theta,\phi)}{(2\ell+1)|\bm r'|^{\ell+1}}\D S'=\frac{\ell\bm F_{\ell m}(\bm r)}{2\ell+1},\end{align}and we consequently have the spectral decomposition:\begin{align}(\hat{\mathscr G}-\hat\gamma)\bm E=-\sum_{\ell=1}^\infty\sum_{m=-\ell}^\ell\frac{\ell\langle \bm e_{\ell m},\bm E\rangle_{O(\mathbf0,R)}\bm e_{\ell m}}{2\ell+1},\quad\forall \bm E\in\Phi(O(\mathbf0,R);\mathbb C^3), \label{eq:spec_decomp_G_gamma_sphere}\end{align}with the complete orthonormal basis set $\{\bm e_{\ell m}(\bm r)=(\ell R^{2\ell +1})^{-1/2}\nabla(|\bm r|^\ell Y_{\ell m}(\theta,\phi))|\ell\in\mathbb Z_{>0},m=\mathbb Z\cap[-\ell,\ell]\} $ for $ \Cl((\hat{\mathscr G}-\hat\gamma)\Phi(O(\mathbf0,R);\mathbb C^3))$,
so $ \Vert\hat{ \mathscr G}-\hat{ \gamma}\Vert_{\Phi(O(\mathbf0,R);\mathbb C^3)}=\sup_{\ell\in\mathbb Z_{\geq1}}\frac{\ell}{2\ell+1}=\frac12$.
 \end{proof}

A natural consequence of Theorem \ref{thm:L2_op_norm} and Lemma \ref{lm:Phi_sphere} is a set of inequalities concerning the explicitly defined Mie resonance modes.

\begin{corollary}[$L^2$-bounds for Mie resonances]\label{cor:L2_Mie}Set $x=kR$, and consider  the resonance modes $ \chi_{\ell,x}^{\mathrm {TE}}$ and $ \chi_{\ell,x}^{\mathrm {TM}}$ that are  the roots of $ \Delta_{\ell,x}^{\mathrm {TE}}(\chi)$ and $\Delta_{\ell,x}^{\mathrm {TM}}(\chi)$ [defined in \eqref{eq:TE_res}--\eqref{eq:TM_res}]. The expressions
\begin{small}\begin{align}
Q_R(L,x)\colonequals{}&\frac{\min\left\{\frac{1}{2}+\frac{3x^2 }{5\pi}\left(\frac{4\pi}{3}\right)^{2/3},2+\frac{4x}{5}+\frac{1}{2\pi} \left(  \sqrt[3]{x+\frac{1}{2}}+\frac{4}{9 \sqrt[3]{x+\frac{1}{2}}} \right)\right\}}{\max_{\ell\in\mathbb Z\cap[1,L]}\max\{|\R(1/\chi_{\ell,x}^{\mathrm {TE}})|,|\R(1/\chi_{\ell,x}^{\mathrm {TM}})|\}},\\Q_I(L,x)\colonequals{}&\frac{\min\left\{\frac{11x^{3}}{45}, \frac{9x}{16}\right\}}{\max_{\ell\in\mathbb Z\cap[1,L]}\max\{|\I(1/\chi_{\ell,x}^{\mathrm {TE}})|,|\I(1/\chi_{\ell,x}^{\mathrm {TM}})|\}}
\end{align}\end{small}are both greater than or equal to unity, for each $L\in\mathbb Z_{>0}$. \end{corollary}

 For every fixed value of $ x=kR$, by finding the roots of  \eqref{eq:TE_res}--\eqref{eq:TM_res} numerically, we can check the viability of the aforementioned inequalities against some specific examples (Fig.~\ref{fig:2-1}).

 \begin{figure}[h]
\begin{minipage}{.5\textwidth}\begin{tikzpicture}\pgfplotsset{xlabel style={yshift=.3cm}, ylabel style={yshift=-0.6cm},width=7.2cm,height=4cm,xmajorgrids,ymajorgrids,xminorgrids,yminorgrids, tick label style={font=\tiny},label style={font=\tiny}}
\begin{loglogaxis}[xticklabels={0.01,0.1,1,10},yticklabels={0.01,0.1,1,10},xlabel={$x=kR$},ylabel={$Q_R(\max\{5,2x\},x)-1$},ymin=0.05,ymax=10,xmin=0.045,xmax=55,enlargelimits=false,legend style={
legend columns=1,
cells={anchor=west},at={(.98,.95)},
font=\tiny,legend style={row sep=-2.5pt},
}]\addplot [only marks,
draw=red,mark=o,thin,mark size=1pt
] coordinates{(0.05, 0.102786)(0.0561009, 0.103508)(0.0629463, 0.104416)(0.0706269, 0.105559)(0.0792447, 0.106999)(0.088914, 0.108811)(0.0997631, 0.111093)(0.111936, 0.113967)(0.125594, 0.117584)(0.140919, 0.122139)(0.158114, 0.127874)(0.177407, 0.135096)(0.199054, 0.144191)(0.223342, 0.155645)(0.250594, 0.170072)(0.281171, 0.188246)(0.315479, 0.211144)(0.353973, 0.239998)(0.397164, 0.276368)(0.445625, 0.322227)(0.5, 0.380074)(0.561009, 0.453082)(0.629463, 0.545284)(0.706269, 0.661826)(0.792447, 0.809295)(0.88914, 0.99616)(0.997631, 1.23338)(1.11936, 1.53523)(1.25594, 1.92053)(1.40919, 2.41441)(1.58114, 3.05112)(1.77407, 3.87848)(1.99054, 4.01851)(2.23342, 4.40395)(2.50594, 4.68702)(2.81171, 4.30398)(3.15479, 3.7064)(3.53973, 3.38544)(3.97164, 3.37337)(4.45625, 3.60048)(5., 3.00504)(5.61009, 3.02008)(6.29463, 3.19824)(7.06269, 3.23345)(7.92447, 3.25727)(8.8914, 3.29554)(9.97631, 3.41031)(11.1936, 3.5865)(12.5594, 3.68241)(14.0919, 3.73746)(15.8114, 3.90919)(17.7407, 3.92657)(19.9054, 3.98055)(22.3342, 4.14202)(25.0594, 4.29047)(28.1171, 4.31291)(31.5479, 4.47009)(35.3973, 4.74055)(39.7164, 4.80159)(44.5625, 4.96701)(50., 4.98617)};

\end{loglogaxis}
\end{tikzpicture}
\vspace{-1em}
\begin{center}\begin{tiny}\quad\;\;\;(a)\end{tiny}\end{center}\end{minipage}\begin{minipage}{.5\textwidth}\begin{tikzpicture}\pgfplotsset{xlabel style={yshift=.3cm}, ylabel style={yshift=-.6cm},width=7.2cm,height=4cm,xmajorgrids,ymajorgrids,xminorgrids,yminorgrids, tick label style={font=\tiny},label style={font=\tiny}}
\begin{loglogaxis}[xticklabels={0.01,0.1,1,10},yticklabels={0.01,0.1,1,10},xlabel={$x=kR$},ylabel={$Q_I(\max\{5,2x\},x)-1$},ymin=0.05,ymax=10,xmin=0.045,xmax=55,enlargelimits=false, ,legend style={
legend columns=1,
cells={anchor=west},at={(.98,.95)},
font=\tiny,legend style={row sep=-2.5pt},
}]\addplot [only marks,
draw=red,mark=o,thin,mark size=1pt
] coordinates{(0.05, 0.100704)(0.0561009, 0.100762)(0.0629463, 0.100873)(0.0706269, 0.101103)(0.0792447, 0.101348)(0.088914, 0.101735)(0.0997631, 0.102182)(0.111936, 0.102766)(0.125594, 0.103479)(0.140919, 0.104381)(0.158114, 0.105538)(0.177407, 0.106975)(0.199054, 0.1088)(0.223342, 0.111107)(0.250594, 0.114028)(0.281171, 0.117732)(0.315479, 0.122446)(0.353973, 0.12846)(0.397164, 0.136169)(0.445625, 0.146107)(0.5, 0.159031)(0.561009, 0.176024)(0.629463, 0.198681)(0.706269, 0.229329)(0.792447, 0.271121)(0.88914, 0.32758)(0.997631, 0.40145)(1.11936, 0.494519)(1.25594, 0.610716)(1.40919, 0.760678)(1.58114, 0.808038)(1.77407, 0.64771)(1.99054, 0.55736)(2.23342, 0.541623)(2.50594, 0.616437)(2.81171, 0.365116)(3.15479, 0.189791)(3.53973, 0.137111)(3.97164, 0.222359)(4.45625, 0.28642)(5., 0.27128)(5.61009, 0.388828)(6.29463, 0.380975)(7.06269, 0.401012)(7.92447, 0.486076)(8.8914, 0.561548)(9.97631, 0.532139)(11.1936, 0.615008)(12.5594, 0.666956)(14.0919, 0.68911)(15.8114, 0.755336)(17.7407, 0.772902)(19.9054, 0.850158)(22.3342, 0.87504)(25.0594, 0.945279)(28.1171, 0.982523)(31.5479, 1.02846)(35.3973, 1.06148)(39.7164, 1.1162)(44.5625, 1.16399)(50., 1.23414)};

\end{loglogaxis}
\end{tikzpicture}
\vspace{-1em}
\begin{center}\begin{tiny}\qquad\;\;(b)\end{tiny}\end{center}\end{minipage}
\caption{Numerical validation of  Corollary \ref{cor:L2_Mie} for \emph{(a)}~the real part and \emph{(b)}~the imaginary part of  eigenvalues in Mie scattering. For numerical root searching, we run Newton's method with initial guesses $\chi\in\{e^{im\pi/8}|m\in\mathbb Z\cap[1,7]\}$.  \label{fig:2-1}}
\end{figure}

Next, we will check the bound estimate in Theorem  \ref{thm:HS_norm}   against a spherical scatterer of radius $R=x/k$, where $ \Vert\hat{\mathscr G }(\hat I+2\hat{\mathscr G})^2\Vert_2$ dominates \begin{align}
\mu_{\mathrm{HS}}(x)\colonequals\sqrt{\sum_{{\ell\in\mathbb Z_{>0},\Delta_{\ell,x}^{\mathrm{TE}}(1/\lambda)\Delta_{\ell,x}^{\mathrm{TM}}(1/\lambda)=0}}
(2\ell+1)|\lambda|^2|1+2\lambda|^4},\label{eq:muHS}\end{align}via a special case
of the first
inequality in \eqref{eq:conv_spectral_series}. Here, the factor $2\ell+1$ represents the degree of degeneracy for each eigenvalue associated with a $ Y_{\ell m}$ mode.
\begin{corollary}[Hilbert--Schmidt bounds for Mie resonances]\label{cor:HS_Mie}Set $ \zeta(s)\colonequals\sum_{n=1}^\infty n^{-s}$ for $s>1$. Define \begin{align}
M_{\mathrm{HS}}(x)\colonequals\sqrt{\frac{7 \zeta (3)}{32}+\frac{31 \zeta (5)}{128}-\frac{\pi ^4}{192}}+\frac{3x^{2}}{2}\{3+4B_{\gamma}(x)+2[B_{\gamma}(x)]^{2}\}.
\end{align}for \begin{small}\begin{align}
B_{\gamma}(x)\colonequals{}&\min \left\{\frac{3x^{2}}{5 \pi } \left(\frac{4 \pi }{3}\right)^{2/3} ,2+\frac{4x}{5}+\frac{1}{2\pi} \left(  \sqrt[3]{x+\frac{1}{2}}+\frac{4}{9 \sqrt[3]{x+\frac{1}{2}}} \right)\right\}+\min \left\{\frac{11x^{3}}{45}, \frac{9x}{16}\right\}.
\end{align}\end{small}We have $ \mu_{\mathrm{HS}}(x)\leq M_{\mathrm{HS}}(x)$ for all $x>0$.
\end{corollary}
\begin{proof}
According to \eqref{eq:spec_decomp_G_gamma_sphere}, we have\begin{align}\Vert(\hat I+2\hat{\mathscr G}-2\hat\gamma)^2(\hat{\mathscr G}-\hat\gamma)\Vert_2=\sqrt{\sum_{\ell=1}^\infty\frac{\ell^2}{(2\ell+1)^{5}}}=\sqrt{\frac{7 \zeta (3)}{32}+\frac{31 \zeta (5)}{128}-\frac{\pi ^4}{192}}.\end{align} Therefore, we obtain \begin{align}\Vert(\hat I+2\hat{\mathscr G})^2\hat{\mathscr G }\Vert_2\leq{}&\Vert(\hat I+2\hat{\mathscr G}-2\hat\gamma)^2(\hat{\mathscr G}-\hat\gamma)\Vert_2+\Vert\hat \gamma+4[(\hat{\mathscr G}-\hat\gamma)\hat \gamma+\hat\gamma(\hat{\mathscr G}-\hat\gamma)](\hat I+\hat{\mathscr G}-\hat\gamma)+\notag\\&+4(\hat{\mathscr G}-\hat\gamma)^{2}\hat \gamma+4(\hat I+\hat{\mathscr G}-\hat\gamma)\hat \gamma^2+4\hat\gamma(\hat{\mathscr G}-\hat\gamma)\hat\gamma+4\hat \gamma^2(\hat{\mathscr G}-\hat\gamma)+4\hat\gamma^3\Vert_2\notag\\\leq{}&\sqrt{\frac{7 \zeta (3)}{32}+\frac{31 \zeta (5)}{128}-\frac{\pi ^4}{192}}+\Vert\hat\gamma\Vert_2(6+8\Vert\hat\gamma\Vert+4\Vert\hat\gamma\Vert^2),\end{align}via $ \Vert\hat{\mathscr G}-\hat\gamma\Vert_{\Phi(O(\mathbf0,R);\mathbb C^3)}=\frac12$ and $ \Vert \hat I+\hat{\mathscr G}-\hat\gamma\Vert_{\Phi(O(\mathbf0,R);\mathbb C^3)}\leq1$.

Noting that Theorems  \ref{thm:L2_op_norm} and \ref{thm:HS_norm} bring us $\Vert\hat\gamma\Vert\leq B_\gamma(x)$
and
$\Vert\hat\gamma\Vert_2\leq\frac{3x^{2}}{4 }  $ respectively, we arrive at the claimed inequality through $ \mu_{\mathrm{HS}}(x)\leq \Vert(\hat I+2\hat{\mathscr G})^2\hat{\mathscr G }\Vert_2\leq M_{\mathrm{HS}}(x)$.
 \end{proof}

In Fig.~\ref{fig:2-2}, we numerically verify Corollary \ref{cor:HS_Mie} for some special values of $x=kR$.

\begin{figure}[h]
\begin{minipage}{0.6\textwidth}
\begin{tikzpicture}\pgfplotsset{xlabel style={yshift=.3cm}, ylabel style={yshift=-0.4cm},width=8cm,height=3.5cm,xmajorgrids,ymajorgrids,xminorgrids,yminorgrids, tick label style={font=\tiny},label style={font=\tiny}}
\begin{loglogaxis}[xticklabels={0.01,0.1,1,10},yticklabels={0.01,0.1,1,10},xlabel={$x=kR$},ylabel={$\sqrt{ \frac{M_{\mathrm{HS}}(x)}{\mu_{\mathrm{HS}}(x)}}-1$},ymin=0.05,ymax=12,xmin=0.045,xmax=55,enlargelimits=false, minor x tick num =4,xtick={0.01,0.1,1,10,100},ytick={.01,.1,1,10,100},legend style={
legend columns=1,
cells={anchor=west},at={(.98,.95)},
font=\tiny,legend style={row sep=-2.5pt},
}]\addplot [only marks,
draw=red,mark=o,thin,mark size=1pt
] coordinates{(0.05, 0.0797108)(0.0561009, 0.0958267)(0.0629463, 0.115748)(0.0706269, 0.140277)(0.0792447, 0.170338)(0.088914, 0.206984)(0.0997631, 0.251386)(0.111936, 0.304821)(0.125594, 0.368649)(0.140919, 0.444271)(0.158114, 0.533086)(0.177407, 0.636428)(0.199054, 0.755479)(0.223342, 0.891168)(0.250594, 1.04403)(0.281171, 1.21402)(0.315479, 1.40034)(0.353973, 1.60119)(0.397164, 1.81363)(0.445625, 2.03349)(0.5, 2.25544)(0.561009, 2.47309)(0.629463, 2.679)(0.706269, 2.86459)(0.792447, 3.02041)(0.88914, 3.13764)(0.997631, 3.21099)(1.11936, 3.24144)(1.25594, 3.23841)(1.40919, 3.22044)(1.58114, 3.19527)(1.77407, 3.22061)(1.99054, 3.29924)(2.23342, 3.42076)(2.50594, 3.64846)(2.81171, 3.9788)(3.15479, 4.33069)(3.53973, 4.16406)(3.97164, 4.05517)(4.45625, 3.96407)(5., 3.89814)(5.61009, 3.87115)(6.29463, 3.85486)(7.06269, 3.85937)(7.92447, 3.87813)(8.8914, 3.91256)(9.97631, 3.95479)(11.1936, 4.01056)(12.5594, 4.07661)(14.0919, 4.15118)(15.8114, 4.23448)(17.7407, 4.32593)(19.9054, 4.43191)(22.3342, 4.54811)(25.0594, 4.79077)(28.1171, 4.96708)(31.5479, 5.2972)(35.3973, 5.73528)(39.7164, 6.34739)(44.5625, 6.69998)(50., 6.99368)};
\end{loglogaxis}
\end{tikzpicture}
\end{minipage}\begin{minipage}{0.4\textwidth}
\caption{Numerical validation of $\mu_{\mathrm{HS}}(x)\leq M_{\mathrm{HS}}(x)$. For truncated approximations to the infinite sum $\mu_{\mathrm{HS}}(x)$, we perform root searching as in  Fig.~\ref{fig:2-1} for each  $ \ell\in\mathbb Z\cap[1,\max\{5,2x\}]$ and  merge numerical eigenvalues $ \lambda=1/\chi$ that agree up to $8$ decimal places. \label{fig:2-2}}
\end{minipage}
\end{figure}

\section{Discussions}

In this work, we have carried out error analysis for perturbative solutions to the Born equation, with norm bound estimates based on geometric shapes of dielectric scatterers.
By  numerical search for the eigenvalues in Mie scattering for spheres of various radii, we have found that the norm bounds (for spherical scatterers) computed in \S\ref{sec:norm_bds}  overestimate the actual value by a moderate factor, but still within the same order of magnitude (see Figs.\ \ref{fig:2-1} and \ref{fig:2-2}). Perhaps one can obtain a tighter bound on the operator norms by refining the analysis in \S\ref{sec:norm_bds}. It is also possible to improve our estimates for non-spherical scatterers by taking into account more geometric information, other  than the radius of the circumsphere.
\subsection*{Acknowledgments}This research was supported in part  by the Applied Mathematics Program within the Department of Energy
(DOE) Office of Advanced Scientific Computing Research (ASCR) as part of the Collaboratory on
Mathematics for Mesoscopic Modeling of Materials (CM4).

The author thanks Prof.~Xiaowei Zhuang (Harvard University) for her  questions on nanophotonics   in 2006, which inspired the current work. Part of this research formed   Chapters 4, 5 and Appendices C, D in the  author's  PhD thesis \cite{ZhouThesis} completed in January 2010 under her supervision. The author dedicates this paper to her 50th birthday.


\begin{thebibliography}{10}

\bibitem{Stegun}
Milton Abramowitz and Irene~A. Stegun, editors.
\newblock {\em Handbook of Mathematical Functions, with Formulas, Graphs, and
  Mathematical Tables}.
\newblock Dover Publications, New York, NY, 1965.

\bibitem{Aydin}
K\"ultegin Ayd{\i}n.
\newblock Centimeter and millimeter wave scattering from nonspherical
  hydrometeors.
\newblock In Michael~I. Mishchenko, Joachim~W. Hovenier, and Larry~D. Travis,
  editors, {\em Light Scattering by Nonspherical Particles}, chapter~16, pages
  451--479. Academic Press, San Diego, CA, 2000.

\bibitem{BornOptik}
Max Born.
\newblock {\em \selectlanguage{german}{O}ptik: ein {L}ehrbuch der
  elektromagnetischen {L}ichttheorie\selectlanguage{english}}.
\newblock \selectlanguage{german}{V}erlag von Julius
  Springer\selectlanguage{english}, Berlin, Germany, 1933.

\bibitem{BornWolf}
Max Born and Emil Wolf.
\newblock {\em Principles of Optics}.
\newblock Cambridge University Press, Cambridge, UK, 7th edition, 1999.

\bibitem{Bruno2003}
Oscar~P. Bruno and Julian Chaubell.
\newblock Inverse scattering problem for optical coherence tomography.
\newblock {\em Optics Letters}, 28:2049--2051, 2003.

\bibitem{Budko}
Neil~V. Budko and Alexander~B. Samokhin.
\newblock Spectrum of the volume integral operator of electromagnetic
  scattering.
\newblock {\em SIAM J. Sci. Comput.}, 28:682--700, 2006.

\bibitem{CZ1952}
Alberto~P. Calder\'on and Antoni Zygmund.
\newblock On the existence of certain singular integrals.
\newblock {\em Acta Math.}, 88:85--139, 1952.

\bibitem{CostabelDarrigrandSakly2012}
Martin Costabel, Eric Darrigrand, and Hamdi Sakly.
\newblock The essential spectrum of the volume integral operator in
  electromagnetic scattering by a homogeneous body.
\newblock {\em C. R. Math. Acad. Sci. Paris}, 350(3-4):193--197, 2012.

\bibitem{DautrayLionsVol3}
Robert Dautray and Jacques-Louis Lions.
\newblock {\em Spectral Theory and Applications}, volume~3 of {\em Mathematical
  Analysis and Numerical Methods for Science and Technology}.
\newblock Springer-Verlag, Berlin, Germany, 1990.
\newblock (with the collaboration of Michel Artola and Michel Cessenat,
  translated by John C. Amson from the French original: \textit{Analyse
  math{\'e}matique et calcul num{\'e}rique pour les sciences et les
  techniques}, S. A. Masson, Paris, France, 1984, 1985).

\bibitem{Giese}
R.~H. Giese, B.~Kneissel, and V.~Rittich.
\newblock Three-dimensional models of the zodiacal dust cloud: A comparative
  study.
\newblock {\em Icarus}, 68:395--411, 1986.

\bibitem{Goedecke}
George~H. Goedecke and Sean~G. O'Brien.
\newblock Scattering by irregular inhomogeneous particles via the digitized
  {Green}'s function algorithm.
\newblock {\em Appl. Opt.}, 27:2431--2438, 1988.

\bibitem{HL1}
G.~H. Hardy and J.~E. Littlewood.
\newblock Some properties of fractional integrals (1).
\newblock {\em Math. Z.}, 27:565--606, 1928.

\bibitem{HL2}
G.~H. Hardy and J.~E. Littlewood.
\newblock On certain inequalities connected with the calculus of variations.
\newblock {\em J. London Math. Soc.}, 5:34--39, 1930.

\bibitem{RICM}
Volkmar Heinrich, Wesley~P. Wong, Ken Halvorsen, and Evan Evans.
\newblock Imaging biomolecular interactions by fast three-dimensional tracking
  of laser-confined carrier particles.
\newblock {\em Langmuir}, 24:1194--1203, 2008.

\bibitem{Holt}
A.~R. Holt, N.~K. Uzunoglu, and B.~G. Evans.
\newblock An integral equation solution to the scattering of electromagnetic
  radiation by dielectric spheroids and ellipsoids.
\newblock {\em IEEE Trans. Antennas Propag.}, 26:706--712, 1978.

\bibitem{HsiaoKleinman}
George~C. Hsiao and Ralph~E. Kleinman.
\newblock Mathematical foundations for error estimation in numerical solutions
  of integral equations in electromagnetics.
\newblock {\em IEEE Trans. Antennas Propag.}, 45:316--328, 1997.

\bibitem{El-Sayed}
Xiaohua Huang, Ivan~H. El-Sayed, Wei Qian, and Mostafa~A. El-Sayed.
\newblock Cancer cell imaging and photothermal therapy in the near-infrared
  region by using gold nanorods.
\newblock {\em J. Am. Chem. Soc.}, 128:2115--2120, 2006.

\bibitem{Jackson:EM}
John~David Jackson.
\newblock {\em Classical Electrodynamics}.
\newblock John Wiley \& Sons, New York, NY, 3rd edition, 1999.

\bibitem{Jensen1913}
J.~L. W.~V. Jensen.
\newblock Recherches sur la th\'{e}orie des \'{e}quations.
\newblock {\em Acta Math.}, 36(1):181--195, 1913.

\bibitem{KingVol2}
Frederick~W. King.
\newblock {\em Hilbert Transforms (Volume 2)}, volume 125.
\newblock Cambridge, UK, 2009.

\bibitem{Kirsch2007}
Andreas Kirsch.
\newblock An integral equation approach and the interior transmission problem
  for {M}axwell's equations.
\newblock {\em Inverse Probl. Imaging}, 1(1):159--179, 2007.

\bibitem{Landau2000Bessel}
L.~J. Landau.
\newblock Bessel functions: monotonicity and bounds.
\newblock {\em J. London Math. Soc. (2)}, 61(1):197--215, 2000.

\bibitem{LeinertGruen}
C.~Leinert and E.~Gr\"un.
\newblock Interplanetary dust.
\newblock In R.~Schwenn and E.~Marsch, editors, {\em Physics and Chemistry in
  Space --- Space and Solar Physics}, volume~20, pages 207--275.
  Springer-Verlag, Berlin, Germany, 1990.

\bibitem{Lieb1983}
Elliott~H. Lieb.
\newblock Sharp constants in the {H}ardy-{L}ittlewood-{S}obolev and related
  inequalities.
\newblock {\em Ann. of Math.}, 118:349--374, 1983.

\bibitem{Lieb}
Elliott~H. Lieb and Michael Loss.
\newblock {\em Analysis}, volume~14 of {\em Graduate Studies in Mathematics}.
\newblock American Mathematical Society, Providence, RI, 2nd edition, 2001.

\bibitem{LummeBowell}
K.~Lumme and E.~Bowell.
\newblock Photometric properties of zodiacal light particles.
\newblock {\em Icarus}, 62:54--71, 1985.

\bibitem{Mie}
Gustav Mie.
\newblock \selectlanguage{german}{B}eitr\"age zur {O}ptik tr\"uber {M}edien,
  speziell kolloidaler {M}etall\"osungen\selectlanguage{english}.
\newblock {\em Ann. Phys.}, 25:377--455, 1908.

\bibitem{Mishchenko}
Michael~I. Mishchenko, Warren~J. Wiscombe, Joop~W. Hovenier, and Larry~D.
  Travis.
\newblock Overview of scattering by nonspherical particles.
\newblock In Michael~I. Mishchenko, Joachim~W. Hovenier, and Larry~D. Travis,
  editors, {\em Light Scattering by Nonspherical Particles}, chapter~2, pages
  29--60. Academic Press, San Diego, CA, 2000.

\bibitem{ParkRBC}
YongKeun Park, Monica Diez-Silva, Gabriel Popescu, George Lykotrafitis, Wonshik
  Choi, Michael~S. Feld, and Subra Suresh.
\newblock Refractive index maps and membrane dynamics of human red blood cells
  parasitized by \textit{Plasmodium falciparum}.
\newblock {\em Proc. Natl. Acad. Sci. USA}, 105:13730--13735, 2008.

\bibitem{Patrick1973}
Merrell~L. Patrick.
\newblock Extensions of inequalities of the {L}aguerre and {T}ur\'{a}n type.
\newblock {\em Pacific J. Math.}, 44:675--682, 1973.

\bibitem{Potthast2001}
Roland Potthast.
\newblock {\em Point sources and multipoles in inverse scattering theory}.
\newblock Chapman \& Hall/CRC, Boca Raton, 2001.

\bibitem{Rahola}
Jussi Rahola.
\newblock On the eigenvalues of the volume integral operator of electromagnetic
  scattering.
\newblock {\em SIAM J. Sci. Comput.}, 21:1740--1754, 2000.

\bibitem{ReedSimonVolI1980}
Michael Reed and Barry Simon.
\newblock {\em Functional Analysis}, volume~I of {\em Methods of Modern
  Mathematical Physics}.
\newblock Academic Press, New York, NY, 1980.

\bibitem{Shifrin}
K.~S. Shifrin.
\newblock {\em Scattering of Light in a Turbid Medium}.
\newblock NASA Tech. Transl. NASA TT F-477, Washington, D.C., 1968.
\newblock (translated from the Russian original: \textit{Rasse\t{\i a}nie sveta
  v mutno\u{\i} srede} =\!= {\begin{otherlanguage}{russian}\cyrins{\textit{Рассеяние света в мутной среде}}\end{otherlanguage}}, Gostekhteorizdat, Moscow, USSR,
  1951).

\bibitem{Simon2005}
Barry Simon.
\newblock {\em Trace Ideals and Their Applications}.
\newblock American Mathematical Society, Providence, RI, 2nd edition, 2005.

\bibitem{Sob}
S.~L. Sobolev.
\newblock On a theorem of functional analysis.
\newblock {\em Mat. Sb. (N.S.)}, 4:471--497, 1938.
\newblock (English translation in Amer. Math. Soc. Transl. Ser. 2, 34:39-68,
  1963).

\bibitem{SteinDiff}
Elias~M. Stein.
\newblock {\em Singular Integrals and Differentiability Properties of
  Functions}, volume~30 of {\em Princeton Mathematical Series}.
\newblock Princeton University Press, Princeton, NJ, 1970.

\bibitem{Stein}
Elias~M. Stein.
\newblock {\em Harmonic Analysis: Real-Variable Methods, Orthogonality, and
  Oscillatory Integrals}, volume III of {\em Monographs in Harmonic Analysis}.
\newblock Princeton University Press, Princeton, NJ, 1993.

\bibitem{Bladel}
J.~{van Bladel}.
\newblock Some remarks on {G}reen's dyadic for infinite space.
\newblock {\em IRE Trans. Antennas Propag.}, 9:563--566, 1961.

\bibitem{HulstMeteo}
Hendrik~Christoffel {van de Hulst}.
\newblock Scattering in the atmospheres of the earth and the planets.
\newblock In G.~P. Kuiper, editor, {\em The Atmospheres of the Earth and
  Planets}, chapter~3. University of Chicago Press, Chicago, IL, 2nd edition,
  1952.

\bibitem{vandeHulst}
Hendrik~Christoffel {van de Hulst}.
\newblock {\em Light Scattering by Small Particles}.
\newblock Dover Publications, Mineola, NY, 2nd edition, 1981.

\bibitem{Watson}
George~Neville Watson.
\newblock {\em A Treatise on the Theory of {B}essel Functions}.
\newblock Cambridge University Press, Cambridge, UK, 2nd edition, 1952.

\bibitem{Yosida}
K\^osaku Yosida.
\newblock {\em Functional Analysis}, volume 123 of {\em
  \selectlanguage{german}Grundlehren der mathematischen
  Wissenschaften\selectlanguage{english}}.
\newblock Springer-Verlag, Berlin, Germany, 6th edition, 1980.

\bibitem{Yurkina}
M.~A. Yurkin and A.~G. Hoekstra.
\newblock The discrete dipole approximation: An overview and recent
  developments.
\newblock {\em J. Quant. Spect. \& Rad. Trans.}, 106:558--589, 2007.

\bibitem{ZhouThesis}
Yajun Zhou.
\newblock {\em Compactness and Robustness: Applications in the Solution of
  Integral Equations for Chemical Kinetics and Electromagnetic Scattering}.
\newblock PhD thesis, Harvard University, January 2010.

\bibitem{Geom2D}
Yajun Zhou.
\newblock Some geometric relations for equipotential curves.
\newblock \url{arXiv:1912.11669v3} [math.DG], 2020.

\bibitem{QuantEM2Evolv}
Yajun Zhou.
\newblock Quantitative spectral analysis of electromagnetic scattering.\ {II}:
  Evolution semigroups and non-perturbative solutions.
\newblock preprint, 2021.

\bibitem{QualEM}
Yajun Zhou.
\newblock Spectral structure of electromagnetic scattering on arbitrarily
  shaped dielectrics.
\newblock \url{arXiv:0911.4540v4}, 2021.

\bibitem{3DconvEPS}
Yajun Zhou.
\newblock Two inequalities for convex equipotential surfaces.
\newblock {\em Commun.\ Math.\ Sci.}, 19(2):437--451, 2021.
\newblock \url{arXiv:1912.12778v3} [math.DG].

\bibitem{ZourosBudko2012}
Grigorios~P. Zouros and Neil~V. Budko.
\newblock Transverse electric scattering on inhomogeneous objects: spectrum of
  integral operator and preconditioning.
\newblock {\em SIAM J. Sci. Comput.}, 34(3):B226--B246, 2012.
\newblock \url{arXiv:1109.4598} [math.NA].

\end{thebibliography}

\end{document}